\font\myfont=cmr12 at 11pt 
\font\myfonttwo=cmr12 at 10.5pt 
\definecolor{purple}{rgb}{0.7,0,0.7}
\definecolor{grijs1}{cmyk}{0.73,.65,.6,.16}
\numberwithin{equation}{section}
\newtheorem{thm}{Theorem}
\newtheorem{prop}{Proposition}
\newtheorem{cor}{Corollary}
\newtheorem{lem}{Lemma}
\newtheorem{rmk}{Remark}
\newtheorem{dfn}{Definition}
\newtheorem{con}{Conjecture}
\newcommand\sn{\mathsf{n}}
\newcommand\snu{\boldsymbol{\nu}}
\newcommand\srho{\boldsymbol{\rho}}
\newcommand\sx{\mathsf{x}}
\newcommand\sy{\mathsf{y}}
\newcommand\sa{\mathsf{a}}
\newcommand\ssb{\mathsf{b}}
\newcommand\ssc{\mathsf{c}}
\newcommand\ssd{\mathsf{d}}
\newcommand\ssl{\mathsf{l}}
\newcommand\sz{\mathsf{z}}
\newcommand\sM{{\mathsf M}}
\newcommand\ex{\rm{e}}
\newcommand\ha{\frac{1}{2}}
\newcommand{\bnu}{\boldsymbol \nu}
\newcommand\NN{\mathbb{N}}
\newcommand\CC{\mathbb{C}}
\newcommand\HH{\mathbb H}
\newcommand\RR{\mathbb R}
\newcommand\QQ{\mathbb{Q}}
\newcommand\ZZ{\mathbb{Z}}
\newcommand{\sign}{\operatorname{sgn}}
\newcommand{\amk}{a_{-m}(K;q)}
\newcommand{\asub}{\mathsf{a}_{\boldsymbol \nu,\hat \epsilon}}
\newcommand{\rsub}{r_{\boldsymbol \nu,\hat \epsilon}}
\newcommand{\Xsub}{X_{\ell,\hat \epsilon}}
\newcommand{\Asub}{A_{\ell,\hat \epsilon}}
\newcommand{\normsq}[1]{\left|#1\right|^{2}}
\newcommand{\inner}[2]{B(#1,#2)}
\newcommand{\Mod}[1]{\ (\mathrm{mod}\ #1)}
\def\be{\begin{equation}}
\def\ee{\end{equation}}
\def\bem{\begin{matrix}}
\def\eem{\end{matrix}}
\title{3d Modularity Revisited}
\author[1,2,3]{\myfont Miranda C. N. Cheng\thanks{\myfonttwo  
On leave from CNRS, France.}}
\author[4,5]{\myfont Ioana Coman 
}
\author[6]{\myfont Piotr Kucharski 
}
\author[1]{\myfont Davide Passaro 
}
\author[1]{\myfont Gabriele Sgroi }
\affil[1]{\myfont Institute of Physics, University of Amsterdam, the Netherlands}
\affil[2]{\myfont Institute for Mathematics, Academia Sinica, Taipei, Taiwan}
\affil[3]{\myfont Korteweg-de Vries Institute for Mathematics, University of Amsterdam, the Netherlands}
\affil[4]{\myfont Kavli Institute for the Physics and Mathematics of the Universe, University of Tokyo, Kashiwa, Japan}
\affil[5]{\myfont School of Mathematics, University of Edinburgh, Edinburgh EH9 3FD, UK}
\affil[6]{\myfont Institute of Mathematics, University of Warsaw, ul. Banacha 2, 02-097 Warsaw, Poland}
\date{}
\begin{document}

\maketitle

  \abstract{\myfonttwo{
The three-manifold topological invariants $\hat Z$ capture the half-index of the three-dimensional theory with ${\cal N}=2$ supersymmetry obtained by compactifying the M5 brane theory on the closed three-manifold. In 2019, surprising general relations between the  $\hat Z$-invariants, quantum modular forms, and vertex algebras have been proposed. In the meanwhile, an extensive array of examples have been studied, but several important general structural questions remain.  
{First, for many three-manifolds we have seen hints of concrete  $\widetilde {\rm SL}_2(\ZZ)$ representations underlying the different $\hat Z$-invariants of the given manifolds. At the same time, these invariants appear to only span a subspace of the representation, and the role of the latter remains mysterious. 
We elucidate the meaning of the modular group representation, realized as vector-valued quantum modular forms, by first proposing the analogue $\hat Z$-invariants with supersymmetric defects, and subsequently showing that the full vector-valued quantum modular form for $\widetilde {\rm SL}_2(\ZZ)$ is precisely the object capturing all the $\hat Z$-invariants of a given three-manifold, when the newly defined defects $\hat Z$-invariants are included.} Second, it was expected that matching radial limits is a key feature of $\hat Z$-invariants when changing the orientation of the plumbed three-manifold, suggesting the relevance of mock modularity. {We substantiate the conjecture by providing explicit proposals for such $\hat Z$-invariants for three three-manifolds and verify their mock modularity and limits. } Third, we initiate the study of the vertex algebra structure of the mock type invariants by showcasing a systematic way to construct cone vertex operator algebras associated to these mock invariants, which can be viewed as the partner of logarithmic vertex operator algebras in this context.  \\

\newpage
\tableofcontents

\newpage
\section{Introduction, Background, and Summary}\label{sec:intro}

In this section we briefly summarize the intricate connections between quantum modular forms, vertex operator algebras (VOAs), and three-manifold topological  $\hat Z$- and $F_K$- invariants that have been studied in recent years, discussing their context in physics, number theory, and topology, and highlighting some of the aspects that are at present still mysterious. 
Subsequently, we summarize the main results of this paper in less technical terms and describe the structure of the paper. 
 
\subsubsection*{(Mock) Modular Forms}

Modular forms feature prominently in mathematics and branches of theoretical physics. 
See for instance \cite{Zagier2008EllipticMF, DHoker:2022dxx} for an overview. 
These functions on the upper-half plane $\HH$ are distinguished by their {\em modular} symmetry property, which reflects the discrete symmetries of $\HH$. It is interesting to ask how this symmetry can be broken in natural and meaningful ways. 
Mock modular forms embody such a natural generalization of modular forms. 
Here, the modular symmetry only emerges when a non-holomorphic contribution determined by a modular form --  the {\em shadow} -- is added to the mock modular form. 
Since the development of their modern theory two decades ago \cite{zwegers,Bringmann-Ono,2002math12286H,zagier_mock},  through a series of rapid developments it has been established that mock modular forms have a similarly prominent role in combinatorics, moonshine, conformal field theory, string theory and more, extending the applications of modular forms in a fascinating way. 
A partial list of examples can be found in 
\cite{zagier_mock,Ono_unearthing,Folsom_what,Duk_almost,UM,MUM,book,CFS1912,Dabholkar:2021lzt,Alexandrov:2025sig}. 
Specifically, it is known that mock modular forms also appear as characters of certain super vertex algebras \cite{kac2013representations, kac2014representations, kac2016representations,alfes2012mock,CFS1912}.

Theta functions of lattices with indefinite signatures played an important role in the development of the theory of mock modular forms  \cite{zwegers}. 
Building on previous work by Vign\'eras \cite{Vigneras}, Zwegers \cite{zwegers} showed that a regularisation for theta functions of signature $(1,n)$ (one negative direction) leads to a theta function with mock modular properties, while subsequent work showed that regularised theta functions of general signatures lead to higher-depth mock modular forms \cite{lovejoy2013q, males2021higher}. This specific realization of mock modular forms has featured in the study of string theory \cite{BHHD,Korpas:2019cwg,Korpas:2017qdo,Alexandrov:2024wla,Alexandrov:2019rth} and umbral moonshine \cite{cheng2022cone,duncan2017umbral}. Generally, 
  indefinite theta functions give rise to mixed-mock modular forms, which are mock-type forms whose completion involves a finite sum of products between usual modular forms and non-holomorphic contributions from the shadows (cf. \S\ref{subsec:Regularised Indefinite Theta Functions}).

The results of this work further establish the role of mock modular forms as topological invariants of three-manifolds, which we will introduce shortly. 
A subset of preliminary results has been reported in \cite{CFS1912}.

Apart from its modified modular symmetry property (cf. \eqref{dfn:Eichler_integral-Nhol}), another earmark of mock theta functions,  first pointed out by Ramanujan \cite{Ram00,MR3065809}, is their behaviour near the cusps $\hat\QQ:=\QQ\cup \{i\infty\}$ of the upper-half plane. In  the modern language, this leads to the related statement that mock modular forms give rise to {\em quantum modular forms} (QMFs)
\cite{Zag10}. Quantum modular forms, in essence, are functions whose differences with their images under the modular group ${\rm SL}_2(\ZZ)$ enjoy better analytic behaviour when considered near the rationals, compared to the original function. 
Below we will discuss the relevance of quantum modular forms in the context of three-manifold invariants.

In the discussion of the quantum modularity of three-manifold $\hat Z$-invariants,  
a special role will be played by certain {Weil representations} of the metaplectic group  $\widetilde {\rm SL}_2(\ZZ)$ \cite{3d,Cheng:2023row}.  
In particular, we will encounter the  Weil representations $\Theta^{m+K}$ (\ref{dfn:projB},\,\ref{irred_weil}),  which are subrepresentations of the $2m$-dimensional representation $\Theta_m$ spanned by the column vector $\theta_m=(\theta_{m,r})_{r\, \mathrm{mod}\, 2m}$  with a positive integer $m$ and theta function components 
\begin{equation}\label{eq:thetafunc}
    \theta_{m,r}(\tau,z) := \sum_{\ell \equiv r\, \mathrm{mod}\, 2m}
    q^{\frac{\ell^2}{4m}} e^{2\pi i z\ell} ~, \qquad q=e^{2\pi i \tau} ~,
\end{equation}
 labelled by a subgroup $K$ of the group of exact divisors ${\rm Ex}_m$ satisfying $m\not\in K$. The precise definition can be found in \S\ref{sec:Modularity}. 

\subsubsection*{\texorpdfstring{$\hat{Z}$}{Z}-invariants for Three-Manifolds}
Arguably, one of the most prominent open problem in topology is the smooth four-dimensional Poincar\'e conjecture, which states that there are no exotic spherical smooth structures in four dimensions. 
Motivated by this, decades ago 
 Crane and Frenkel envisioned that a \emph{categorification} of numerical three-manifold invariants \cite{crane_fourdimensional_1994}, replacing them with more sophisticated structures of (spectral sequences of) vector spaces, could potentially hold the key to defining structures distinguishing exotic smooth structures in four dimensions. 
A quantum topological invariant for three-manifolds $M_3$ \cite{GPV1602, GPPV} and the closely related quantum knot invariant $F_K$ \cite{GM} were proposed in physical terms recently, leading to an interesting new approach to such a categorification program. Specifically, the origin of these invariants in physical M-theory lends weight to a possible connection to four-dimensional topology. More precisely, it is proposed to consider the 3d half-index, namely the supersymmetric partition function on the cigar background times the temporal circle, of the three-dimensional quantum field theory obtained by compactifying the M5 brane theory, or the 6d ${\cal N}=(2,0)$ ADE superconformal field theory to be more precise,  on $M_3$ as the topological invariant of $M_3$. Note that the three-dimensional spacetime can also be thought of as a solid torus, with the complex structure  $\tau$ of the boundary torus identified with the argument of the $\hat Z$-invariants. 

However, the physical proposal does not translate into a computational algorithm to compute it for general three-manifolds, due to our insufficient 
detailed knowledge of M-theory. 
To move forward, hints can be obtained from their relation to the Witten-Reshetikhin-Turaev invariants they seek to categorify.  
Following these hints, conjectural expressions for \texorpdfstring{$\hat{Z}$}{Z}-invariants have been proposed in \cite{GPPV} for a particularly simple infinite family of three-manifolds: the {\em weakly negative plumbed three-manifolds}. 
To explain what they are, first recall 
that plumbed three-manifolds are three-manifolds that can be constructed by taking the boundary of a four-manifold obtained by gluing together disk bundles over $S^2$. 
The data can be encoded in terms of a weighted graph $(V,E,a)$ called a plumbing graph, obtained by identifying the set $V$ of vertices with the set of disk bundles, equipped with a weight function $a: V\to \ZZ$ whose values are the Euler numbers $a_v$ of the disk bundles, and connecting the two vertices $v$ and $v'$ with an edge, $(v,v')\in E$, if the corresponding disk bundles are glued. 
Alternatively, the same data can be captured using  
the ``plumbing matrix'' $M$ with the weights $a_v$ on the diagonal; for the off-diagonals entries corresponding to a pair of nodes $(v,v')$, we set the value to be 1 if $(v,v')\in E$ and zero otherwise.
A vertex is said to be a high-degree vertex if it is connected to at least three other vertices,  ${\rm deg}(v)\geq 3$. 
Finally, the plumbed three-manifold is said to be weakly negative if the inverse plumbing matrix $M^{-1}$ is negative-definite in the subspace generated by the high-degree vertices.  
In this paper, we mainly focus on Seifert manifolds with three exceptional fibres, which correspond to plumbing graphs with one degree three vertex connected with three rays. We will therefore refer to them as the {\em negative} and {\em positive} $M_3$, given by the signature of $M^{-1}$ along the one-dimensional subspace spanned by the central vertex.

Consider such a weakly negative plumbed three-manifold  $M_3$ with plumbing matrix $\sM$. 
Given a choice of $b\in (\delta+{\rm Coker}(\sM))$ (cf. \eqref{set:spinc}), corresponding to a choice of {Spin$^c$}-structure on $M_3$, 
it was conjectured that the corresponding 3d  half-index is given by the contour integral \cite{GPPV}
\begin{equation}\label{weak_neg_def}
\hat Z_b(M_{3};\tau) :=  q^\Delta \oint \prod_{v\in V}\frac{dz_v}{2\pi i z_v}~ \left(z_v-\frac{1}{z_v}\right)^{2-{\rm deg}(v)}~\Theta_b^{\sM}(\tau;{\bf z})
\end{equation}
for some  $\Delta\in \QQ$, 
where the theta function is given by
\be\label{def:theta}
\Theta^\sM_b (\tau;{\bf z}):=\sum_{\ell \in  2\sM\ZZ^{|V|}\pm b}q^{-\ell^T \sM^{-1} \ell/4} \, {\bf z}^\ell.
\ee
When necessary, the integral is defined as the principal value integration. 
It was shown that the WRT invariant, which is the Chern-Simons partition function suitably normalized, can indeed be recovered from  $\hat Z_b(M_{3};\tau)$ defined above,  by combining $\hat Z_b(M_{3};\tau)$ for different $b$ and by taking the limit $\tau\to {1/k}$, often referred to as the radial limit, from within the upper-half plane \cite{GPPV}. 
Schematically, we have 
\begin{equation}\label{Zhat_WRT}
\hat Z_b(M_{3};\tau)
\xlongrightarrow[\text{combining }b]{\tau\to \frac{1}{ k} + i 0^+}  
{\rm WRT}(M_3;k)
\end{equation}
where $k\in \ZZ$ is the (renormalized) Chern-Simons level. 

In this paper, we focus on the $\hat Z$-invariants of gauge group $G=\rm{SU}(2)$ for simplicity, while we expect that the results can be generalized analogously to other ADE gauge group. 
  See also \cite{CCFFGHP2201, Chung:2022ypb,Park1909} for prior considerations. Moreover, we focus on Seifert manifolds with three exceptional fibres, and links associated with the end nodes of the plumbing graph.

The above consideration can be extended from closed three-manifolds to knot complements. 
Consider a graph with a distinguished degree one node. It corresponds to a three-manifold $Y$ which can be identified as the complement of a knot $K$ associated with the distinguished node, in a closed three-manifold $\hat Y$ which is the plumbed manifold with the plumbing graph given by the graph that we started with, but with the distinguished node replaced by a regular node. From this construction one can write down the two-variable topological invariant $F_K(x,\tau)$ associated to the knot complement \cite{GM}.  
Moreover, another closed three-manifold $Y_{p/r}$ can be obtained via a $p/r$-Dehn surgery along the knot. The $\hat Z$-invariant of  $Y_{p/r}$ can be obtained via a Laplace-like transformation of the invariant $F_K(x,\tau)$ (\ref{eq:Laplace transform2}). 
 For some knots $K$, it was conjectured that there $F_K$ can be expressed in terms of so-called inverted Habiro series of $K$ \cite{Park2106}. This together with a set of conjectural formulas for Dehn surgeries provides further conjectural methods to compute $\hat Z$-invariants, sometimes applicable also to closed three-manifolds that are not weakly negative plumbed three-manifolds. 
All the above (conjectural) ways of obtaining the $\hat Z$-invariants will be used in our article later.

\subsubsection*{Defect Operators} 
As mentioned above, $\hat Z$-invariants give information about a three-manifold $M_3$ by computing the 3d half-index, or the supersymmetric partition function on the background of a cigar times the temporal circle, 
of the 3d ${\cal N}=2$ theory arising from compactifying $M5$ branes on $M_3$. 
It is informative to incorporate half-BPS line operators $W$ in the 3d ${\cal N}=2$ theory, arising from M2 branes located at the centre of the cigar and which wrap around a link $K\subset M_3$ (\cite{GPPV}, \S4). One can then compute the half-index $\hat Z_b(M_3;W;\tau)$ in the presence of these half-BPS line operators $W$. As before,  the parameter $\tau\in {\mathbb H}$ can be identified with the complex structure of the boundary torus of the 3-dimensional space $D^2\times {\mathbb R}_t$ on which the 3d  ${\cal N}=2$ theory lives.

We conjecture an explicit expression for $\hat Z_b(M_3;W_{\boldsymbol \nu};\tau)$, where ${\boldsymbol \nu} : V \to  \ZZ_{\geq 0}$ specifies the representations of the associated defects, as well as their formulation in terms of knot surgeries. To be precise, and $\nu_v$ times the fundamental weight is the highest weight of the highest weight $sl_2$-representation 
associated to the knot corresponding to the  node $v$ of the plumbing graph. We will find that these defect half-indices are indispensable in the full understanding of the quantum modular properties, to be discussed below, of the $\hat{Z}$-invariants. This is consistent with our expectation that including supersymmetric defects provides important physical insights into the system. 

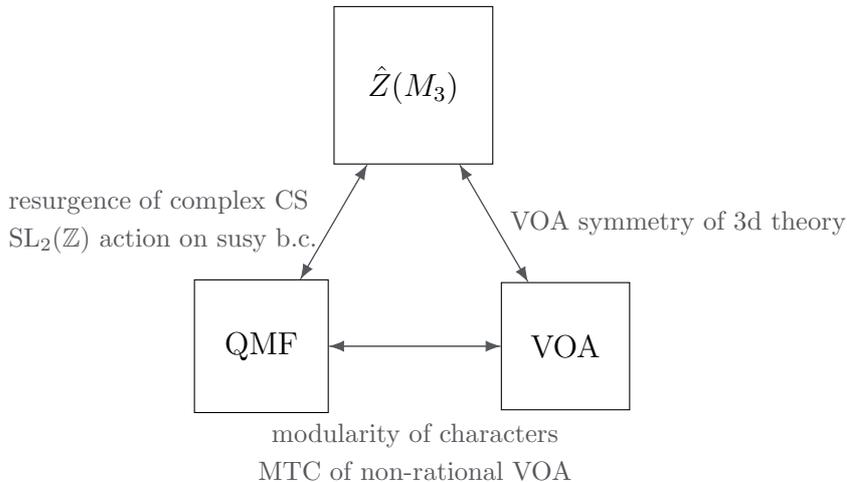
\begin{figure}[h]\begin{center}
\begin{tikzpicture}[square/.style={draw, regular polygon, regular polygon sides=4, minimum size=1.5cm}]
\node[square] (A) at (0,0) {QMF};
\node[square] (B) at (4,0) {VOA};
\node[square] (C) at (2,3.46) {$\hat Z(M_3)$};

\draw[-{Latex[length=2mm]},grijs1] (A) -- (B) node[midway, below=25pt, align=center] {\footnotesize{modularity of characters} \\  \footnotesize{MTC of non-rational VOA}};
\draw[-{Latex[length=2mm]},grijs1] (B) -- (A);

\draw[-{Latex[length=2mm]},grijs1] (B) -- (C) node[midway, right=2pt] {\footnotesize{VOA symmetry of 3d theory}};
\draw[-{Latex[length=2mm]},grijs1] (C) -- (B);

\draw[-{Latex[length=2mm]},grijs1] (C) -- (A) node[midway, left=5pt, align=left] {\footnotesize{resurgence of complex CS}  \\  \footnotesize{${\rm{SL}_2(\ZZ)}$ action on susy b.c.}};
\draw[-{Latex[length=2mm]},grijs1] (A) -- (C);
\end{tikzpicture}
\caption{\small{The topics of this paper and their relations.}}
\label{fig:intro}
\end{center}
\end{figure}

\subsubsection*{\texorpdfstring{$\hat{Z}$}{Z}-invariants, Quantum Modular Forms, and Vertex Operator Algebras} 

The study of the properties of the \texorpdfstring{$\hat{Z}$}{Z}-invariants was initiated in \cite{3d}. In particular, the authors of \cite{3d} proposed the following
\begin{enumerate}
    \item {\em \texorpdfstring{$\hat{Z}$}{Z}-invariants are closely related to quantum modular forms of some  types;}
    \item {\em\texorpdfstring{$\hat{Z}$}{Z}-invariants are related to characters of vertex operator algebras.   } 
\end{enumerate}
The relation to quantum modular objects built on an array of interesting earlier results and observations in a similar context
\cite{GPV1602, GPPV,LawZag,GMP1605,Hik0506,Hik0405,Hik0305}. 

Since then, it has been an active area of study to extend and verify these broad conjectures
\cite{CFS1912, Cheng:2023row, CCFFGHP2201, BMM1810, Chae:2022iuu, Chung1811, Kuch1906, Chung1906, Chung1912, Park1909, Costantino:2021yfd, GPP2009, Mori:2021ost, Matsusaka:2021vbw, Murakami:2022cjk, Murakami:2022bhn, Murakami:2023oam,cheng20223, costin2023going, Adams:2025aad, Harichurn:2025suf}.  
Both of the conjectural properties point to a large system of unexpected symmetries in the underlying physical system and the corresponding topological problem. 
Consequently,  understanding the relation to VOA and quantum modular forms will constitute interesting progress in the understanding of low-dimensional topology and the symmetries of M-theory compactifications. 

In the case of VOAs, it is believed that the conjectured property {\em 2.} is a consequence of a VOA symmetry acting on the BPS states of the relevant 3d ${\cal N}=2$  theory. 
See also \cite{gaiotto2018vertex,gaiotto2020miura} for related developments. See Figure \ref{fig:intro}. 

In the case of quantum modular forms, the study can be viewed as a part of the broader effort to understand the mysterious quantum modular behaviour of related invariants in low-dimensional topology \cite{LawZag,zagier2001vassiliev,qmf,Hik0506,Hik0405,Hik0305,Garoufalidis:2021lcp,Garoufalidis:2023fkt,An:2023ded}. Physically, this type of modified symmetry under the action of the modular group ${\rm SL}_2(\ZZ)$ is expected to stem from the  $\widetilde{\rm SL}_2(\ZZ)$ action on the torus boundary of the 3d spacetime relevant for the half-index, which can be viewed as a solid torus in Eucliean signature. Interestingly, we have seen hints of concrete ${\rm SL}_2(\ZZ)$ representations underlying the different $\hat Z$-invariants of the given manifolds in all the cases analyzed in  \cite{3d}. At the same time, these invariants appear to only span a subspace of the representation. More concretely, in \cite{3d} it was observed that the $\hat Z$-invariants often correspond to some but not all of the components of a $\widetilde{\rm SL}_2(\ZZ)$ vector-valued quantum modular form. 
A natural question, one that we will address in this work, is then
\begin{description}
\centering
\item[\emph{Question 1:}] \emph{What is the role of the other components of the $\widetilde{\rm SL}_2(\ZZ)$ vector-valued quantum modular forms? What does the modular group act on? } 
\end{description}
We will answer this question by providing an interpretation for the full ${\rm SL}_2(\ZZ)$ representations in this work, hence elucidating the action of the modular group in the system. The fact that $\hat Z$-invariants with different defects form a vector-valued quantum modular form suggests an interesting modular group action on the set of possible supersymmetry-preserving boundary conditions, or equivalently supersymmetric M2 brane configurations. 

Note also that the two conjectures are connected in an interesting and non-trivial way. The relevant question bridging the two is:  what is the modular-like property of the characters of a general vertex operator algebra? Such properties, if they exist, could be viewed as generalizations of the celebrated Zhu's modularity theorem beyond the realm of rational, $C_2$-cofinite VOAs  \cite{zhu1996modular,dong2000modular}. See \cite{Creutzig_2017,c24,c28,c92,c93,c30} for some of the related works.

Many questions still remain open for general weakly negative plumbed manifolds and for general ADE gauge groups, including the construction of the vertex operator algebras and the proof of the  (higher-depth) quantum modularity properties. 
At the same time, another fascinating direction is to explore  plumbed manifolds that are not weakly negative, as the mathematical definition
\eqref{weak_neg_def} given in terms of the plumbing data (and the analogue expression for general ADE gauge groups \cite{Park1909}) is not even available in these cases.
As pointed out in \cite{CFS1912} and will be refined in \S\ref{subsec:indef_zhat}, the origin of the difficulty is obvious: the contour integral leads to an infinite sum along the directions of high-degree vertices, and we hence need the bilinear form given by $-M^{-1}$ to be positive definite when restricted to those directions, in order to lead to a $q$-series with powers of $q$ bounded from below. In \cite{3d}, the authors refer to the question of how to execute the $\tau\mapsto -\tau$ transformation as the question of ``going to the other side". 
Here, the crucial question is 
\begin{description}
\centering
\item[\emph{Question 2:}] \emph{How can we take $\tau\mapsto -\tau$ to turn $\hat Z(M_3)$ into $\hat Z(-M_3)$?}
\end{description}
Despite the difficulty,  in this work we will take a step towards a {new approach}, exploiting insights from quantum modular forms\footnote{While this work is being completed, the paper \cite{costin2023going} appeared which investigates the problem from an interesting resurgence point of view, complementary to the modular point of view taken in this work.}.

After identifying the invariants $\hat Z(-M_3)$,  the second part of the 3d modularity proposal leads to the following question. 
\begin{description}
    \centering
 \item[\emph{Question 3:}] \emph{Which VOAs correspond to the invariants $\hat Z(-M_3)$ on the other side?}
\end{description}
In this paper we will address the above three questions. 

\subsubsection*{The False and the Mock }
To understand the phenomenon of ``going to the other side",
first 
recall the chirality property of the Chern-Simons theory, which leads to the relation 
\begin{equation}\label{eqn:flipWRT}
{\rm WRT}(M_3;k) = {\rm WRT}(-M_3;-k)
\end{equation}
between WRT invariants of orientation-reversed three-manifolds. 
Since the orientation reversal operation does not preserve the weakly negative property of the plumbing graph, the above chirality property, the relation \eqref{Zhat_WRT} between  $\hat{Z}$-invariants and WRT invariants, and the expression \eqref{weak_neg_def} for the weakly negative $\hat{Z}$-invariants, provide important hints about $\hat{Z}$-invariants for plumbed manifolds that are not weakly negative. Naively, one expects $$\hat Z(-M_3;\tau)= \hat Z(M_3;-\tau).$$
This hint was exploited in  \cite{3d} and resulted in the so-called {False-Mock Conjecture}. 

Focusing for now on the negative Seifert manifolds with three exceptional fibres. 
It has been established \cite{3d,BMM1810}  that their $\hat Z$-invariants are, up to an overall $q$-power and the addition of a finite polynomial, false theta functions. By false theta function, we mean
 any function of the form $\sum_{{ k\in {\mathbb N}}} a_k q^{k^2/4m}$, where $a_k = a_{k+2m} = -a_{-k}$ for all $k\in \ZZ$, for some positive integer $m$.
In other words, they are
 linear combination of 
\begin{equation}\label{eq:intro:falsetheta}
  \tilde\theta_{m,r} (\tau) :=   \sum_{\substack{k\in \ZZ\\k= r~(\mathrm{mod}\, 2m)}} 
    {\rm sgn}\left(k
    \right)q^{\frac{k^2}{4m}}~,\ r\in \ZZ/2m, 
\end{equation}
for a fixed $m\in \NN$, where the nomenclature stems from the additional sign factor in the one-dimensional lattice sum.  

False theta functions can be viewed as Eichler integrals, and this fact makes their quantum modular properties obvious \cite{Zag10}. 
More precisely, $  \tilde\theta_{m,r}$ is, up to a numerical multiplicative factor, the holomorphic Eichler integral of the unary theta function $\theta^1_{m,r} (\tau) :=   \sum_{k= r~(\mathrm{mod}\, 2m)} k\,q^{\frac{k^2}{4m}}$. 
The holomorphic Eichler integral of a weight $w\in \frac{1}{2}{\mathbb Z}$ cusp form $g(\tau)=\sum_{n>0}a_g(n) q^n$ is given by
\begin{equation}\label{dfn:Eichler_integral}
    \tilde g(\tau) := \sum_{n>0}a_g(n)\,n^{1-w} q^n ~, 
\end{equation}
or equivalently 
\begin{equation}\label{dfn:Eichler_integral-hol}
    \tilde g(\tau) ={(2\pi i)^{w-1}\over\Gamma(w-1)} \int_{\tau}^{i\infty} g(\tau')(\tau'-\tau)^{w-2} d\tau'~ 
\end{equation}
with a carefully chosen integration path. 
One similarly defines the non-holomorphic Eichler integral as  
\begin{equation}\label{dfn:Eichler_integral-Nhol}
     {g}^\ast (\tau) = {(2\pi i)^{w-1}\over\Gamma(w-1)}\int_{-\bar\tau}^{i\infty} \overline{g(-\overline{\tau'})}(\tau'+\tau)^{w-2} d\tau'. 
\end{equation}
In terms of a Fourier expansion, we have\footnote{There is in general an additional term $a_g(0) \frac{(4\pi \tau_2)^{w-1}}{1-w}$ if $g$ is not a cusp form.  } 
\begin{equation}
\label{dfn:Eichler_integral-Nhol2}
 {\ex}(\tfrac{1-w}{2}) \Gamma(w-1) {g}^\ast (\tau) = \sum_{n>0} n^{1-w} \bar a_g(n)\Gamma(w-1,4\pi n \tau_2) q^{-n}
\end{equation}
where we have written $\tau_2 = \Im \tau$, ${\ex}(x):=e^{2\pi i x}$, and the incomplete $\Gamma$ function is given by $\Gamma(1-k,x)=\int_x^\infty t^{-k}e^{-t} dt. $ Note that the summand of the right hand side vanishes as $e^{-2\pi n \tau_2}$ in the limit $\tau_2\to \infty$.

A weight $w$ (weakly holomorphic) mock modular form $f:\HH\to\CC$ with shadow given by a weight $2-w$ cusp form $g$  is a (weakly) holomorphic function such that the non-holomorphic function $\hat f:= f-  {g}^\ast$ transforms as a modular form of weight $w$.  From now on we focus on the case of weight $w=1/2$ mock modular forms for $\Gamma\subset {\rm SL}_2(\mathbb Z)$ relevant for the context in this paper.  Near a cusp $u\in\QQ$, $f$ might have an exponential singularity. In particular, there exists a finite set $\{f_u\}$ of weakly holomorphic modular forms such that $f-f_u$ is finite as $\tau$ approaches $u$ from within the upper-half plane \cite{MR3065809}. Crucially,  the cardinality of this set is necessarily larger than one for any mock modular form $f$ of weight $1/2$ with a non-vanishing shadow. 
Equipped with $\{f_u\}$, we can subtract the possible singularities of the mock form $f$ by defining $f_{u,{\rm fin}}:= f-f_u$. 
Then their asymptotic expansions near a cusp and those of the 
Eichler integral of its shadow agree to infinite order \cite{3d,CLR}: 
\begin{equation}\label{eqn:leaking}
f_{u,{\rm fin}}(u+iv)  \sim \sum_{n\geq 0} \alpha_u(n) v^n~~ , ~~ {g}^\ast (-u+ iv) \sim \sum_{n\geq 0} \alpha_u(n) (-v)^n ~,
\end{equation}
Note that the role of the $f_u$ in the subtraction is just to remove the singularity. So the above relation also holds when $f-f_u$ is replaced by $f$ with just the exponential singularities terms subtracted, instead of the whole modular form $f_u$ (the difference vanishes exponentially near the cusp $u$).  
Note that in the case $u={1/k}$, the above relation is consistent with the reversal symmetry \eqref{eqn:flipWRT} of the WRT invariants. Combined with \eqref{Zhat_WRT}, and other evidence, this consideration on the radial limits leads to the following {\em false-mock conjecture},\footnote{Note that this version of the false-mock conjecture is slightly more general than the version in \cite{CFS1912,3d}: here we allow for mixed mock modular forms, a generalization of mock modular forms. At present we are not certain whether such a generalization is actually necessary.} which constitutes an important inspiration of the present paper. 

\begin{con}\label{con:False-Mockv2}
Let $M_3$ be a three-manifold for which the $\hat{Z}$-invariants take the form
    \begin{equation}
        \hat{Z}_b(M_3;\tau) = q^c \left(\tilde{\vartheta}(\tau)+p(\tau)\right),
    \end{equation}
where $\tau\in\mathbb{H}$, $c\in \mathbb{Q}$, $\tilde{\vartheta}(\tau)$ is the Eichler integral of a unary theta function $\vartheta(\tau)$ of weight $3/2$ and $p(\tau)$ is a polynomial in $q=e^{2\pi i \tau}$. Then the $\hat{Z}$-invariant of the manifold with reversed orientation is
    \begin{equation}
        \hat{Z}_b(-M_3;\tau) = q^{-c} \left(f(\tau)+p(-\tau)\right),
    \end{equation}
where $f(\tau)$ is a weight $1/2$ weakly holomorphic (mixed) mock modular form. 
Moreover, 
its completion $\hat f$, transforming as a modular form of weight $1/2$ under a certain congruence subgroup of ${\rm SL}_2(\ZZ)$, has the following properties. It is given by 
\begin{equation}
\hat f = f  - \vartheta^\ast - \sum_{i\in I} g_i \vartheta_i^\ast
\end{equation}
where $I$ is a finite set, $\vartheta_i$ is a theta function, and $g_i$ with any $i\in I$ is a modular function for some discrete subgroup of ${\rm SL}_2(\ZZ)$ that either vanishes or has an exponential singularity at any cusp. 
\end{con}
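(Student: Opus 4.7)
The statement is really a conjecture rather than a theorem, and the plan is to substantiate it by producing an explicit candidate for $\hat Z_b(-M_3;\tau)$ on an infinite family of Seifert manifolds with three exceptional fibres and verifying each of the listed properties. My approach would proceed in three stages.

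First, I would use the representation of $\tilde\vartheta$ as a holomorphic Eichler integral of the weight $3/2$ unary theta function $\vartheta$ to extract, cusp by cusp $u\in\hat\QQ$, the coefficients $\alpha_u(n)$ controlling the asymptotic expansion of $\hat Z_b(M_3;u+iv)$ as $v\to 0^+$. Because $\vartheta$ is unary, these expansions are governed by explicit generalised Gauss sums, and the polynomial tail $q^c p(\tau)$ contributes a transparent rational shift. By \eqref{eqn:leaking} the mock form $f$ we seek must satisfy $f_{-u,\mathrm{fin}}(-u+iv)\sim \sum_{n\ge 0}\alpha_u(n)(-v)^n$ at every $u$, fixing its asymptotic data at all cusps and constraining its shadow to be proportional to $\vartheta$.

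Second, I would construct $f$ independently using the tools surveyed in the introduction: the Laplace-like surgery transform of $F_K$ referenced in \eqref{eq:Laplace transform2}, Park's conjectural expression for $F_K$ via inverted Habiro series, and the defect half-indices $\hat Z_b(M_3;W_{\boldsymbol \nu};\tau)$ that this paper also proposes. On the three-fibre family these tools reduce to nested $q$-hypergeometric sums which, after resummation, take the form of Appell--Lerch sums and indefinite signature-$(1,n)$ theta sums. By the theorems of Zwegers and Vign\'eras such sums carry canonical non-holomorphic completions that transform modularly \cite{zwegers}, and the general theory of higher-depth indefinite theta series \cite{lovejoy2013q,males2021higher} supplies precisely the mixed-mock structure predicted by the conjecture, of the form $\hat f=f-\vartheta^\ast-\sum_{i\in I}g_i\vartheta_i^\ast$ with $g_i$ modular, $I$ finite, and $\vartheta_i$ theta functions.

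Third, I would close the loop in two directions. (i) Modularity of $\hat f$: verify that the completion transforms as a weight $1/2$ modular form under a congruence subgroup by directly applying the Zwegers and Vign\'eras transformation formulas to the building blocks found in step two. (ii) Radial limits: expand the candidate $f$ at each cusp via a saddle/Mellin--Barnes analysis and match the coefficients against the $\alpha_u(n)$ computed in step one, thereby confirming the chirality relation \eqref{eqn:flipWRT} together with \eqref{Zhat_WRT}.

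The hard part is step two: there is no canonical map from a prescribed shadow to a mock modular form, and in the mixed case one may freely add further products $g_i\vartheta_i^\ast$ whose shadows align with a linear combination of $\vartheta$ and the $\vartheta_i$. My expectation is that the surgery/Habiro construction together with the singularity structure imposed by the set $\{f_u\}$ dictated by step one pins $f$ down modulo a genuinely holomorphic weight $1/2$ modular form, and that uniqueness in this slice follows from finite-dimensionality of the relevant space of weakly holomorphic forms. The most delicate point is controlling finiteness of the index set $I$: demonstrating that no infinite mixing is needed, and characterising the functions $g_i$ that appear, is where genuinely new structural input beyond matching radial limits would be required.
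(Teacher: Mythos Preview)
The statement is a conjecture, and the paper does not prove it; it substantiates it on the infinite families $-\Sigma(2,3,6r+1)$ and $-\Sigma(s,t,str+1)$ via Conjecture~\ref{sonj:23torus}, Theorem~\ref{thm:mod_23surgery}, and Proposition~\ref{prop:mixedmock}. Your proposal is likewise a program of substantiation, and its broad shape---build candidates from inverted Habiro series and indefinite theta functions, then check completions and radial limits---matches what the paper does.

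There are, however, two points where your program diverges from the paper in ways worth flagging. First, you list Appell--Lerch sums among the construction tools in step two. The paper devotes \S\ref{subsec:Mock, pure and mixed}--3.2 to showing that the canonical Appell--Lerch continuation $\tilde\theta_{m,r}\mapsto\tilde\theta^-_{m,r}$ \emph{fails}: it has the right shadow but the wrong polar part (cf.~\eqref{polar_part} and the $\Sigma(2,3,7)$ comparison), so it cannot be used as a black box. The paper's working construction is instead the specific signature $(1,1)$ indefinite theta regularisation of \S\ref{subsec:indefzhat}, with the cone vectors \eqref{23r:parameters} chosen so that the $\ssc_2$-contribution to the shadow cancels (Lemma~\ref{lem:c2_contribution}); this is precisely the ``new structural input'' you anticipate needing, and it is not supplied by Appell--Lerch. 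Second, your step one proposes to first extract the asymptotic coefficients $\alpha_u(n)$ and use them to constrain $f$. The paper runs the logic the other way: it constructs $f$ directly (via indefinite theta functions in \S\ref{subsec:indefzhat} or via the $(+p)$-surgery/inverted Habiro prescription in \S\ref{subsec:knot}) and then the radial-limit matching \eqref{eqn:leaking_mixed} is a consequence of the shadow computation, not an input. In particular, the paper makes no attempt at the uniqueness argument you sketch; the choice of $\ssc_2$ in \eqref{23r:parameters} is found experimentally by demanding pure mock modularity, and the authors explicitly say they do not have a full explanation for it.
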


Crucially, $\tilde \vartheta$ and the finite  part of the shadow $\vartheta^\ast$ are the holomorphic resp. non-holomorphic Eichler integrals of the {\em{same}} cusp form $\vartheta$. 

\subsubsection*{Summary of Results} 
In this paper we aim to address the three questions mentioned above. We will do so in the following steps. 

\begin{itemize}
\item We propose two expressions, Definition \ref{conj:plumbed_defect} in terms of plumbing graphs and Definition \ref{conj:surgery_defect}-\ref{dfn:surgery_defectFK} in terms of knot surgeries, as additional $\hat Z$ topological invariants. We conjecture that they give the defect half-indices, capturing the effects of inserting supersymmetric
 defect operators in the 3d theory ${\cal T}(M_3)$. 

\item With the above definition, we propose that, when an appropriately defined set of defects operators are included, the $\hat Z$-invariants of a given three-manifold are related to a (higher-depth) vector-valued quantum modular form of some type for the modular group ${\rm SL}_2(\ZZ)$. 

Focusing on Seifert manifolds with three singular fibres, we can be more concrete about the above expectation which we phrase in terms of the  
following Modularity Conjecture:  
\begin{center}{\emph{ 
For a given  Seifert manifold $M_3$ with three singular fibres, the vector space spanned by $\hat Z(M_3;W_{\bnu};\tau)$ is, up to an equivalence, isomorphic to a Weil representation $\Theta^{m+K}$, labelled by a positive integer $m$ and a subgroup $K$ of the group of exact divisors of $m$. }}
\end{center}
More explicitly, we conjectured that, up to the addition of finite polynomials and multiplicative pre-factors, the vector space spanned by $\hat Z(M_3;W_{\bnu};\tau)$ is also spanned by the components of a vector-valued quantum modular form for 
${\rm SL}_2(\ZZ)$, transforming under the (dual of the) Weil representation $\Theta^{m+K}$.
See Conjecture \ref{con:modularity} for the precise version of this Modularity Conjecture. 
 Moreover, we prove that this conjecture holds when $M_3$ is any weakly negative Brieskorn sphere $\Sigma(p,q,r)$ (see Theorem \ref{thm:modularityBrieskorn}). 

This answer the {\em{Question 1}} listed above.

\item { Analysing the regularization of the integral and the knot surgery formulations for $\hat Z$-invariants, in \S\ref{subsec:habiroexamples} and in Conjecture \ref{sonj:23torus}  we {study} concrete mock modular forms {proposed} as the defect $\hat Z$-invariants for the inverted Brieskorn spheres $-\Sigma(2,3,7)$ and $-\Sigma(2,3,5)$, and the Seifert manifold $M(-1;{1\over 2},{1\over 3},{1\over 8})$ in Conjecture \ref{sonj:H24}. 
Subsequently, we prove that the Modular Conjecture  \ref{con:modularity} in the previous item holds for the conjectured invariants of the inverted Brieskorn spheres {$\Sigma(2,3,5)$, $\Sigma(2,3,7)$}, and $M(-1;{1\over 2},{1\over 3},{1\over 8})$. See Theorem \ref{thm:mod_23surgery}, Theorem \ref{thm:mod_248}, and Corollary \ref{cor:mock_mod}. This provides an answer to {\em{Question 2}} listed above for these cases.}

\item {Equipped in the afore-mentioned cases with an expression for $\hat Z(M_3)$ in terms of theta functions of indefinite lattices, we propose a way to  attach vertex operator algebras corresponding to the positive Seifert manifolds. 
This provides an answer to the {\em{Question 3}} listed above for these cases.}

\item Using our proposals (Definition \ref{conj:plumbed_defect} and \ref{conj:surgery_defect}) for the defect invariants, in \S\ref{sec:ex} we provide an extended list of concrete examples,  furnishing evidence for the Modularity Conjecture \ref{con:modularity} as well as the validity of the proposals for the defect invariants themselves. 

\item Finally, some of the codes with which some of the calculations have been done can be found on this \href{https://github.com/d-passaro/3D-Modularity-Revisited}{page at github}. 

\end{itemize} 

\section{\texorpdfstring{Defects $\hat{Z}$}{Z}-invariants and Modularity}\label{sec:examples1S}

In this section we will first introduce our proposal for defect $\hat Z$-invariants in terms of the contour integrals and knot surgeries. 
Equipped with this larger family of invariants, we will take a fresh look at the quantum modularity of the $\hat Z$-invariants.
In particular, we will shed light on their structure as vector-valued quantum modular forms for the modular group $\widetilde {\rm SL}_2(\ZZ)$\footnote{or equivalently something that we will call quantum Jacobi theta functions. See the comment at the end of the section.}, and the important role of the defects in $\hat Z$-modularity. The consideration of quantum modularity should hold equally for three-manifolds of both orientations, though as we see the actual methods for computing them will differ on the different sides. 

\subsection{Plumbed Manifolds}\label{subsec:contourintegral}

First we give more information on the contour integral expression (\ref{weak_neg_def}-\ref{def:theta}) for $\hat Z$-invariants for weakly negative plumbed three-manifolds introduced in \S\ref{sec:intro}. 
Let $M_3$ be a plumbed three-manifold; for instance,  any
Seifert manifold with $n$ singular fibres $M_3(a;\{q_i/p_i\}_{i=1}^{n})$ is an example  of plumbed three-manifolds. Its plumbing graph contains a central vertex $v_0$ connected to $n\geq 3$ legs and the weights $\alpha^{(i)}_j$ for the nodes on the $i^{th}$ leg are determined by 
	\begin{equation}
    \label{eq:cont'edfrac}
	\frac{q_i}{p_i}=\frac{-1}{\alpha^{(i)}_1-\frac{1}{\alpha^{(i)}_2-\frac{1}{\alpha^{(i)}_3-\cdots}}} ~.
	\end{equation}

For $M_3$ a weakly negative plumbed three-manifold, using the above notation, the topological invariants $\hat{Z}_b(M_3;\tau )$ are defined via the principal value $|V|$-dimensional integral (\ref{weak_neg_def}) \cite{GPPV},
\begin{equation}\label{def:Zhatintegral}
\hat{Z}_b(M_3;\tau):= (-1)^{\pi}q^{\frac{3\sigma -\sum_{v\in V} \alpha(v)}{4}} \, \mathrm{vp}  \oint \prod_{v\in V}\frac{dz_v}{2\pi i z_v}\biggl(z_v-\frac{1}{z_v}\biggr)^{2-\text{deg}(v)}\Theta^{\sM}_b(\tau, {\sz})~,
\end{equation}
where $\sigma$ is the signature of the plumbing matrix $M$, $\pi$ denotes the number of positive eigenvalues of $M$, and the contours of integration are to be the set $|z_v|=1$. 
The label $b$ can be identified with the elements of the set
\begin{equation}\label{set:spinc}
{\mathrm{Spin}}^c(M_3)\cong (2\mathbb{Z}^{|V|}+\delta)/(2M\mathbb{Z}^{|V|})~,
\end{equation}
where $\delta \in \mathbb{Z}^{|V|}/2\mathbb{Z}^{|V|}$ is defined by $\delta_v={\rm deg}(v)\mod 2$. Labelled by ${{b}}\in{\mathrm{Spin}}^c(M_3)$ (up to the Weyl group action), the theta function in equation \eqref{def:Zhatintegral} is given by \eqref{def:theta}.

Next we propose a generalisation of the $\hat{Z}$-invariants, intended to account for the inclusion of supersymmetric line defects in the  $3$d $\mathcal{N}=2$ quantum field theory on a cigar background.
For the purposes of this paper, we only give the explicit expression for $\rm{SU}(2)$ gauge groups, though the generalization to ADE gauge groups should be straightforward (cf. \cite{CCFFGHP2201}). 
Denote by $\vec{\omega}$ the fundamental weight of the Lie algebra $\mathfrak{sl}_2$. We consider defects in the 6-dimensional parent theory on $M_3\times D^2\times_\tau S^1$ with support on links in $M_3$ associated to a collection of nodes, $v\in V_{\rm W}$, in the plumbing graph, with the corresponding highest weight representation with highest weights $\nu_v \vec\omega$. 

\begin{dfn}\label{conj:plumbed_defect}
Consider a weakly negative plumbed manifold $M_3$, and 
defects associated to a collection of nodes  $V_{\rm W}$ in the plumbing graph, with the highest weight representation with highest weight $\nu_v \vec\omega$. 
We define the defect $\hat{Z}$-invariant 
as given by the contour integral 
\begin{gather}\label{def:ZhatintegralWilson}
\begin{split}
&\hat{Z}_b(M_3;W_{\{\nu_v\}_{v\in V_W}};\tau ):= \\&(-1)^{\pi}q^{\frac{3\sigma -\sum_{v\in V} \alpha(v)}{4}} \, \mathrm{vp}  \oint \prod_{v\in V} \frac{dz_v}{2\pi i z_v}\biggl(z_v-\frac{1}{z_v}\biggr)^{2-\text{deg}(v)}\left(\prod_{v\in V_W}  \chi_{\nu_v}(z_v)  \right) \Theta^{\sM}_{b+{\{\nu_v\}_{v\in V_W}}}(\tau, {\sz})~,
\end{split}
\end{gather}
where $\chi_{\nu_v}(z_v)$ denotes the  $\mathfrak{sl}_2$ character 
\begin{equation}\label{eq:characterdefB}
  \chi_{\nu}(z) = \frac{z^{1+\nu}-z^{-1-\nu}}{z-z^{-1}} = \sum_{k=0}^{\nu} z^{\nu-2k} ~,
\end{equation}
and the modified theta function $ \Theta^{\sM}_{b+{\{\nu_v\}_{v\in V_W}}}$  is given by modifying ${{b}}\in (2\mathbb{Z}^{|V|}+\delta)/(2M\mathbb{Z}^{|V|})$ in \eqref{def:theta} by replacing $b_v\mapsto b_v+\nu_v$ for $v\in V_W$. 
\end{dfn}

\begin{prop}
The defect $\hat{Z}$-invariant proposed in Definition \ref{conj:plumbed_defect} is a topological invariant of $M_3$, as it is invariant under all the 3d Kirby moves on the plumbing graph of $M_3$ preserving the nodes with $\nu_v\neq 0$.
\end{prop}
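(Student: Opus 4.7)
The strategy is to reduce the invariance statement to the already-established topological invariance of the undecorated $\hat Z$-invariant of \cite{GPPV}, exploiting the crucial hypothesis that the Kirby move acts on a set of vertices $V_0 \subset V$ disjoint from the decorated set $V_W$. First I would recall that the 3d Kirby moves on plumbing graphs are generated by elementary Neumann moves, namely blow-ups/blow-downs of a $\pm 1$ weighted leaf and contraction of a weight-$0$ chain vertex. Since $V_0 \cap V_W = \emptyset$, the character insertions $\chi_{\nu_v}(z_v)$ for $v \in V_W$ depend only on variables untouched by the move and can be pulled outside the ${\bf z}_0 := (z_v)_{v \in V_0}$ contour integration. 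Similarly, the shifts $b_v \mapsto b_v + \nu_v$ defining $b' := b + \sum_{v\in V_W}\nu_v e_v$ are localized on $V_W$, so the theta function $\Theta^{\sM}_{b'}$ has exactly the same lattice-sum structure on the $V_0$-coordinates as the undecorated theta function, only with a fixed translation of the base point induced from the $V_W$-data.

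Next I would verify each elementary Neumann move on $V_0$ by performing the explicit ${\bf z}_0$-integration. For a blow-down of a leaf $v_0 \in V_0$ of weight $\epsilon = \pm 1$ connected to $v_1 \in V_0$, the integrand in $z_{v_0}$ is $\frac{1}{z_{v_0}}(z_{v_0} - z_{v_0}^{-1}) z_{v_0}^{\ell_{v_0}}$ summed against the Gaussian $q^{-\ell^T \sM^{-1}\ell/4}$, which restricts $\ell_{v_0} \in \{\pm 1\}$ and, after completing the square, produces the theta function of the reduced plumbing matrix $\sM'$ on $V \setminus \{v_0\}$ with the weight of $v_1$ shifted by $\epsilon$. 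Because the decorated factors $\prod_{v \in V_W}\chi_{\nu_v}(z_v)$ and the $b'$-shift were pulled outside of this computation and are unchanged by the move, the reduced integrand matches exactly the defect integrand associated to the post-move plumbing graph. An analogous calculation handles the $0$-chain contraction. Matching of the overall prefactor $(-1)^\pi q^{(3\sigma - \sum_v \alpha_v)/4}$ is the same signature/trace bookkeeping as in the undecorated case, since $V_W$ does not enter this normalization.

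The main obstacle is to confirm that the coupling of $V_0$ to $V_W$ through the plumbing matrix $\sM$ enters the ${\bf z}_0$-integration exclusively through the translated base point $b'$, and not through any $\nu$-dependent modification of the quadratic form or of the leftover $\chi_{\nu_v}(z_v)$ insertions at decorated nodes. This amounts to observing that the cross terms $\ell_{V_0}^T \sM^{-1} \ell_{V_W}$ in the theta exponent depend linearly on $\ell_{V_W}$, which is precisely absorbed into the $b'$-shift when the $V_0$-partial sums are isolated. Once this decoupling is in place, the elementary-move verifications above proceed verbatim, and invariance under the full group of Kirby moves preserving $V_W$ follows by concatenation of elementary moves, establishing that $\hat Z_b(M_3; W_{\{\nu_v\}_{v\in V_W}};\tau)$ depends only on the topological type of $M_3$ together with the decorated link data.
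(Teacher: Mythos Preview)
Your proposal is correct and follows essentially the same approach as the paper. The paper's own proof consists solely of the sentence ``The proof is analogous to that of Proposition 4.6 of \cite{GM},'' so your explicit verification of the elementary Neumann moves---pulling the character insertions $\chi_{\nu_v}(z_v)$ for $v\in V_W$ outside the $V_0$-integration and observing that the shifted base point $b'$ plays the same passive role as $b$ in the undecorated computation---is precisely the argument being invoked by reference.
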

\begin{proof}
    The proof is analogous to that of Proposition 4.6 of \cite{GM}.
\end{proof}

Moreover, we conjecture that the proposed modification of $\hat{Z}$ also has the following connection to the physical theory. 

\begin{con}\label{conj:Wilson_interpretation}
The defect $\hat{Z}$-invariant is equal to the half-index of the same 3d $\mathcal{N}=2$ supersymmetric quantum field theory with the corresponding defect operators included.
\end{con}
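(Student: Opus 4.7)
The plan is to establish the conjecture through a chain of physical derivations analogous to the defect-free case treated in \cite{GPPV}, inserting M2-brane defects at each step. First I would recall the M-theory origin: the theory $\mathcal{T}[M_3]$ arises by compactifying the 6d $\mathcal{N}=(2,0)$ $A_1$ theory on $M_3$, so that its half-index on $D^2 \times_q S^1$ computes a supersymmetric partition function with a specific boundary condition corresponding to the cigar tip. The half-BPS line operators in question are realized by M2-branes supported on a disk whose boundary wraps the chosen link $K \subset M_3$ and whose center sits at the tip of the cigar, labeled by a highest weight $\nu_v \vec{\omega}$ of $\mathfrak{sl}_2$.

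The second step is to use the effective 3d abelian description associated to the plumbing graph. For each vertex $v \in V$ one has a $U(1)$ gauge field with Chern–Simons matrix $\sM$; the fugacities $z_v$ are then the vector-multiplet scalars. In this description, an M2-brane wrapping a Seifert fiber over the node $v$ reduces to a Wilson line for the $v$-th $U(1)$ in the representation of highest weight $\nu_v$; the $\mathfrak{sl}_2$ character $\chi_{\nu_v}(z_v) = \sum_{k=0}^{\nu_v} z_v^{\nu_v - 2k}$ then appears directly in the integrand, reproducing the Wilson-line factor in \eqref{def:ZhatintegralWilson}. The remaining ingredients — the Coulomb-branch vector multiplet measure $(z_v - z_v^{-1})^{2-\deg(v)}$, the boundary theta function, and the principal-value contour — are inherited unchanged from the defect-free localization.

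The third step is to reproduce the shift $b_v \mapsto b_v + \nu_v$ inside the theta function. Physically this is the statement that the Wilson line carries a charge that shifts the background Spin$^c$ structure by the fundamental class of the dual cycle, equivalently that the sum over magnetic flux sectors $\ell \in 2\sM\ZZ^{|V|} \pm b$ is displaced by the Wilson line charge. I would verify this by comparing the resulting integrand to the expansion of $\chi_{\nu_v}(z_v)\Theta^{\sM}_b$ monomial by monomial and matching the powers of $z_v$ with the lattice labels, using invariance under large gauge transformations (Weyl symmetry) to normalize the split between $\chi_{\nu_v}$ and the shifted theta function. Supporting checks would come from: (i) specialization to $\nu_v = 0$, which must return \eqref{def:Zhatintegral}; (ii) taking the radial limit $\tau \to 1/k$ and matching with known colored WRT/Chern–Simons expectation values of Wilson lines, via the defect analogue of \eqref{Zhat_WRT}; and (iii) compatibility with surgery, namely agreement with the alternative Definition \ref{conj:surgery_defect} in overlapping regimes.

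The main obstacle will be the absence of a microscopic Lagrangian definition of $\mathcal{T}[M_3]$ for general plumbed manifolds, which is exactly the reason the result is stated as a conjecture. As in \cite{GPPV}, a fully rigorous derivation would require either a UV completion of $\mathcal{T}[M_3]$ compatible with line defect insertions, or an independent axiomatic definition of the half-index with boundary supersymmetric line operators. In the absence of these, the strongest achievable statement is a consistency proof: demonstrate invariance under 3d Kirby moves preserving the decorated nodes (already established in the preceding proposition), match all available limits (radial, large-$\nu$, surgery), and check agreement with explicit $F_K$-type computations for knots whose complements correspond to the distinguished nodes. The examples collected in \S\ref{sec:ex} are designed precisely to supply this evidential backbone.
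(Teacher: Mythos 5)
This statement is left as a conjecture in the paper; no proof is given, and none could be expected, since (as you correctly note) there is no microscopic Lagrangian definition of $\mathcal{T}(M_3)$ for general plumbed manifolds nor an axiomatic definition of the half-index with line-defect insertions. Your proposal is therefore not a proof, but it tracks exactly the kind of support the paper itself offers: the M2-brane/Wilson-line heuristic from \cite{GPPV}, the interpretation of the shift $b_v\mapsto b_v+\nu_v$ (which the paper attributes to the nontrivial spin of the defect modifying the ${\rm Spin}^c$ structure), invariance under Kirby moves preserving the decorated nodes, compatibility with the surgery Definition \ref{conj:surgery_defect} via the Laplace transform being an integration over the distinguished node, and the explicit examples of \S\ref{sec:ex}. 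In short, your evidential strategy coincides with the paper's treatment, and your candid assessment that only a consistency check is achievable is precisely why the statement remains Conjecture \ref{conj:Wilson_interpretation} rather than a theorem.
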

 
\begin{rmk}
\begin{itemize}\item
The above conjecture generalizes the expressions in previous work  \cite{GPPV,CCFFGHP2201} by modifying the theta function $\Theta^{\sM}_{b+{\{\nu_v\}_{v\in V_W}}}$ in the integrand of the contour integral. 
\item {
In \S\ref{sec:regindefTheta} we will modify and generalize the above integral expression for some plumbed three-manifolds which are not weakly negative.} 
\item
In what follows, we will often focus on the case where $M_3$ is a Seifert manifold with three exceptional fibres, and where Wilson lines are attached to links associated to the end nodes of the three legs in the plumbing graph. We will label the corresponding highest weight representations by ${\bnu} =(\nu_1,\nu_2,\nu_3) \in \NN^3$ and denote the corresponding defect $\hat Z$-invariant by $\hat{Z}_b(M_3;W_{\bnu})$. 
\item 
Note that the expression \eqref{def:ZhatintegralWilson} has the artefact of breaking original the Weyl symmetry, in this case the Weyl symmetry $b\mapsto -b$, of the choice of the (generalized) ${\rm Spin}^c$ structure. Instead, the Weyl group acts like $b+\nu\mapsto -(b+\nu)$ now.   
\end{itemize}
\end{rmk}

\subsection{Knot Surgeries}
\label{subsec:knot}
For certain closed manifolds $M_3$, the invariants $\hat{Z}_b(M_3)$ can be computed in a number of ways that are expected to be equivalent. 
After discussing the contour integral definition using the combinatorial data of the plumbed manifolds, we will now review the proposal for $\hat{Z}$-invariants of closed three-manifolds constructed through surgeries of knots.

Consider a three-manifold $Y$, obtained as the complement of a tubular neighbourhood of a knot $K$ in a closed three-manifold $\hat{Y}\cong\, S^3$ and which therefore has a parametrised torus boundary $\partial Y$. We can construct another closed manifold $S^3_{p/r}(K)$ through surgery by gluing along the boundary $\partial Y$ a solid torus $S^1\times D^2$
\begin{equation}\label{def:mfdfromsurgeryonknot}
    S^3_{p/r}(K) = Y \cup_{\partial Y} (S^1\times D^2) ~,
\end{equation}
with $p/r\in\mathbb{Q}$ prescribing the identification of cycles.  
When the manifold $Y$ admits a plumbing description, we refer to the knot $K$ as a {\em{plumbed knot}}. In this case, 
one can associate to it a two-variable series $F_K(x;\tau)$ through a contour integral similar to that in \eqref{def:Zhatintegral} for closed plumbed manifolds, with the only difference being that the plumbing graph now has a distinguished vertex, which we denote by $v_K$, corresponding to the boundary $\partial Y$~\cite{GM}.

The $\hat Z$-invariant of the closed plumbed manifolds $ S^3_{p/r}(K) $  is related to $F_K$ through the surgery formula
\begin{equation}\label{eq:surgeryS2}
\hat{Z}_{b}(S_{ p/r}^{3}(K);\tau )=\epsilon q^{d}\cdot\mathcal{L}_{ p/r}^{(b)}\left[\left(x^{\frac{1}{2r}}-x^{-\frac{1}{2r}}\right)F_{K}(x,\tau)\right] ~,
\end{equation}
where $\epsilon$ can be $\pm 1$ and $d\in\mathbb{Q}$ is a number fixed by the manifold $S_{ p/r}^{3}(K)$, as defined in Theorem 1.2 of \cite{GM}. The map $\mathcal{L}_{ p/r}^{(b)}$ acts as\footnote{A relative factor of $1/2$ is introduced here when compared to \cite{GM}, which can be understood as coming from the fact that we use the weight instead of the root basis to write down the highest weight of the Wilson lines as well as in the plumbing prescription \eqref{def:ZhatintegralWilson}. Related to this, one can think about $x$ as $z^2$ when comparing \eqref{eq:surgeryS2} with \eqref{def:ZhatintegralWilson}.}
\begin{equation}
\label{eq:Laplace transform2}
\mathcal{L}_{ p/r}^{(b)}(x^u) = q^{- u^{2}r/p} \delta_{ru-  {b/2}\Mod{p}}
\end{equation}
where for $x, ~p\in \ZZ$ we define $\delta_{x\Mod{p}}=1$  when $x\equiv 0\Mod{p}$ and 0 otherwise.

Next we will generalize the above to include Wilson line operators. 
As before, we can consider defects supported on links associated to the nodes of the plumbing graphs. 
In what follows, we focus on Wilson lines along the knot $K$ corresponding to the distinguished node.

\begin{dfn}\label{conj:surgery_defect}
Using the same notation as above, consider the $F_K(x,\tau)$ series associated to the plumbed knot $K$ with a defect operator along $K$ in the highest weight representation of $\mathfrak{sl}_2$ with highest weight $\nu \vec\omega$. 
We define the corresponding defect invariant for the closed manifold  $ S^3_{p/r}(K) $ as 
\begin{equation}\label{eq:surgeryS2_defect}
\hat{Z}_{b}(S_{ p/r}^{3}(K);W_{\nu};\tau)=\epsilon q^{d}\cdot\mathcal{L}_{p/r}^{(b+\nu)}\left[\left(x^{\frac{1}{2r}}-x^{-\frac{1}{2r}}\right)F_{K}(x,\tau)  \chi_{\nu}({x}^\frac{1}{ 2r}) \right] ~,
\end{equation}
where the $\mathfrak{sl}_2$ character $\chi_{\nu}$ is as in \eqref{eq:characterdefB}, and $\epsilon$, $d$ are as in \eqref{eq:surgeryS2} where there are no Wilson lines. 
\end{dfn}

{
Similarly, the effect of adding Wilson lines associated to other nodes can be computed in a way analogous to Definition \ref{conj:plumbed_defect}, leading to a defect $F_K$-series, which we denote by $F_{K,\bnu}$. 
 Explicitly, we are led to the following definition. }

\begin{dfn}\label{dfn:surgery_defectFK}
Using the same notation as above, consider a plumbed knot complement $Y=S^3\backslash K$, defined by the weakly negative definite weighted graph $(V,E,a)$ with the distinguished vertex $v_0\in V$. 
 
Given an integer $n$, we define the defect invariant of $Y$, with defects associated to a collection of nodes $V_{\rm W}$ in the plumbing graph, with the highest weight representation with highest weights $\nu_v \vec\omega$ which are captured by the vector $\bnu:=\{\nu_v\}_{v\in V_{\rm W}}$, to be  
\begin{gather}\label{def:FKintegralWilson}
\begin{split}
    &\hat{Z}_b (Y;W_{\bnu};z,n,\tau):= (-1)^{\pi}q^{\frac{3\sigma -\sum_{v\in V} \alpha(v)}{4}} 
    \\&
    \left(z-\frac{1}{z} \right)^{1-\text{deg}(v_0)} \, \mathrm{vp}  \oint \prod_{\substack{v\in V \\ v\neq v_0}} \frac{dz_v}{2\pi i z_v}\biggl(z_v-\frac{1}{z_v}\biggr)^{2-\text{deg}(v)}\left(\prod_{v\in V_W}  \chi_{\nu_v}(z_v)  \right) \Theta^{(\sM,n)}_{b+{\{\nu_v\}_{v\in V_W}}}(\tau, {\sz})~,
\end{split}
\end{gather}
where $z$ plays the role of $z_{v_0}$. 

In the above, the lattice theta function $\Theta^{(\sM,n)}_a(\tau, {\sz})$ is defined as in \eqref{def:theta},  but now with the sum over the lattice vector restricted to those of the form 
$2\sM\vec{n} \pm a \in  2\sM\ZZ^{|V|}\pm a$ with the entry $n_{v_0}=n$ fixed by the input $n$. 

Using the above, analogous to \cite{GM} we define
\begin{equation}
F_{K,\bnu}(x,\tau):=\hat{Z}_0 (Y;W_{\bnu};\sqrt{x},0,\tau) ~.
\end{equation}
and the corresponding $\hat Z$-invariants, with defects along $K$ (labelled by $\nu$) and attached to other nodes in the plumbed knot complement (labelled by $\bnu$),  to be\footnote{{Note that here and in \eqref{eq:surgeryS2_defect}, we have adopted a notation that uses $\nu$, separate from the tuple $\bnu$, to signify the representation of the defect attached to the knot. One could equally have included it in the tuple, as the knot here corresponds to a distinguished node in the pluming graph. We chose the notation that treats them separately to highlight the distinguished role of the knot in the surgery construction. } }
\begin{equation}\label{eq:surgery3_defect}
\hat{Z}_{b}(S_{ p/r}^{3}(K);W_{\nu,{\bnu}};\tau)=\epsilon q^{d}\cdot\mathcal{L}_{p/r}^{(b+\nu)}\left[\left(x^{\frac{1}{2r}}-x^{-\frac{1}{2r}}\right) {\chi_{\nu}({x}^\frac{1}{ 2r})} F_{K,\bnu} (x,\tau)  \right] ~. 
\end{equation}

\end{dfn}
}

Note that performing the Laplace transform is equivalent to integrating over the distinguished node (see the proof of Theorem 1.2 of \cite{GM}). 
From this point of view, if Conjecture \ref{conj:Wilson_interpretation} on the defect half-index interpretation for the topological invariants $\hat Z(M_3;W_{\bnu})$  holds true, the above invariant $\hat{Z}_{b}(S_{ p/r}^{3}(K);W_{\nu})$ is also expected to have a similar physical definition. As a result we do not list it as a separate conjecture.

\begin{rmk}
    Note that in \eqref{def:ZhatintegralWilson} we have shifted the argument $b$  in the theta function, and in \eqref{eq:surgeryS2_defect}  the argument in the Laplace transformation. 
In general, the shift is non-trivial and not just a relabeling since $b+\nu$ might not belong to the set \eqref{set:spinc} of permissible ${\rm Spin}^c$-structures, when $\nu $ is odd. The non-trivial spin of the defect operator is responsible for this modification of  ${\rm Spin}^c$-structures\footnote{We thank Mrunmay Jagadale for suggesting the interpretation.}.
\end{rmk}

Various other methods have been developed for the computation of $F_K$-series \cite{GM, Park2106, GGKPS2005, EGGKPSS2110}, with each having its advantages.  
To appreciate the necessity of having other methods,  note that the Laplace transforms \eqref{eq:Laplace transform2} generate $q$-series with positive powers of the expansion parameter $q$ only when $p/r<0$. 
In what follows we will focus on a method based on the so-called ``inverted" Habiro series of the knot~$K$, which unlike \eqref{eq:Laplace transform2} can in some situations be applied to the $p/r>0$ surgeries. This will be especially important when we compute $\hat Z$-invariants of the mock type in the next section. In the remainder of the subsection, we will  focus on integral surgeries where $r=1$.  
\subsubsection*{Inverted Habiro Series and Defects}
\label{sec:InvertedHabiro}

First recall that for any knot $K$, there exists a sequence of Laurent polynomials $a_m(K)\in\mathbb{Z}[q,q^{-1}]$, which we will refer to as the Habiro coefficients of $K$, such that the coloured Jones polynomial $J_K(V_n)$ for the $n$-dimensional irreducible representation $V_n$ of $\mathfrak{sl}_2$ can be decomposed as \cite{Habiro}
\begin{equation}
    J_K(V_n)=\sum_{m=0}^\infty a_m(K;q) \left(\prod_{j=1}^m \left(x+x^{-1} -q^j - q^{-j} \right)\right)\bigg|_{x=q^n} ~.
\end{equation}
Following this, in \cite{Park2106} it was conjectured that for any knot $K$ with Alexander polynomial $\Delta_{K}\neq1$, there exist inverted Habiro series with coefficients $\amk$, such that
\begin{equation}
F_{K}(x,\tau)=-\left(x^{\frac{1}{2}}-x^{-\frac{1}{2}}\right)\sum_{m=1}^{\infty}\frac{\amk}{\prod_{j=0}^{m-1}(x+x^{-1}-q^{j}-q^{-j})} ~.
\label{eq:inverted habiro}
\end{equation}
Furthermore, by writing $F_{K}(x;\tau)=x^{\ha}\sum_{j=0}^{\infty}f_{j}(K;q)x^{j}$, the inverted Habiro coefficients can be extracted from the two-variable series $F_K(x,q)$ using the identities \cite{Park2106}
\begin{gather}
\begin{split}f_{j}(K;q)&=  \sum_{i=0}^{j}\left[\begin{array}{c}
j+i\\
2i
\end{array}\right]a_{-i-1}(K;q) \\ \Leftrightarrow
a_{-i-1}(K;q)&=  \sum_{j=0}^{i}(-1)^{i+j}\left[\begin{array}{c}
2i\\
i-j
\end{array}\right]\frac{[2j+1]}{[i+j+1]}f_{j}(K;q) ~,
\end{split}
\end{gather}
where we employ the notation
\begin{equation}
[n]=\frac{q^{\frac{n}{2}}-q^{-\frac{n}{2}}}{q^{\frac{1}{2}}-q^{-\frac{1}{2}}} ~ , \qquad[n]!=\prod_{k=1}^{n}[k] ~, \qquad\left[\begin{array}{c}
n\\
k
\end{array}\right]=\frac{[n]!}{[k]![n-k]!}~.
\end{equation}
We furthermore introduce the notation 
\begin{equation}\label{def:Qj}
Q_j=q^j+q^{-j}, ~~~D_m = \prod_{j=1}^m (x+x^{-1}-Q_j)~~~~ {\rm for}~j,m\in \NN ~.
\end{equation}

The form \eqref{eq:inverted habiro} is particularly useful in
the context of $\pm p$ surgeries. In the remaining part of the subsection, we will let $p\in \NN$. 
First, when combined with the surgery formula \eqref{eq:surgeryS2}, this allows to write the corresponding $\hat{Z}$-invariant as
\begin{equation}
\begin{split}
    \hat{Z}_{b} (S_{- p}^{3}(K);\tau ) &= \epsilon q^d \mathcal{L}_{- p}^{(b)}\left[\left(x^{\frac{1}{2}}-x^{-\frac{1}{2}}\right)^2\sum_{m=1}^{\infty}\frac{\amk}{\prod_{j=0}^{m-1}(x+x^{-1}-q^{j}-q^{-j})}\right] \\
     &= \epsilon q^d \mathcal{L}_{- p}^{(b)}\left[a_{-1}(K;q)+\sum_{n=1}^{\infty}\frac{a_{-n-1}(K;q)}{D_n}\right]~.
\end{split}
\label{eq:+/-p surgery}
\end{equation}
The Habiro coefficients $a_{-n-1}(K;q)$ are independent of $x$, so the Laplace transform \eqref{eq:Laplace transform2} only acts on the product $1/D_n$. 
Consider the $(-1)$-surgery for instance:
the relevant expression is 
\begin{equation}\label{min1_surg}
{\mathcal L}^{(0)}_{-1}\left(\frac{1}{D_n}\right)  = \frac{1}{\prod_{j=1}^{n}(x+x^{-1}-q^{j}-q^{-j})}\bigg\lvert_{x^u\mapsto q^{u^2}} = \frac{q^{n^{2}}}{\left(q^{n+1};q\right)_{n}} 
\end{equation}
where
$
(x;q)_{n}=\prod_{i=0}^{n-1}(1-xq^{i})~
$
denotes the $q$-Pochhammer symbol. 
Now, the corresponding expression for $(-p)$-surgery can be written as
\begin{equation}
\mathcal{L}_{-p}^{(b)}\left(\frac{1}{D_n}\right) =q^{-\frac{b(2p-b)}{4p}} \frac{q^{n^{2}}}{\left(q^{n+1};q\right)_{n}}\,P_{n}^{p,b}(q^{-1}) ~
, 
\label{eq:expansions-for-surgeries-1}
\end{equation}
or equivalently
\begin{equation}
P_{n}^{p,b}(q^{-1})  := q^{\frac{b(2p-b)}{4p}}  \frac{\mathcal{L}_{- p}^{(b)}\left(\frac{1}{D_n}\right) }{{\mathcal L}^{(0)}_{-1}\left(\frac{1}{D_n}\right) } 
\,,~~ b~{\rm{even}}. 
\label{eq:expansions-for-surgeries-12}
\end{equation}

  It turns out that  $P_{n}^{p,b}$ defined above is a finite polynomial for any $n\in \ZZ_+$.  When $b$ is odd, the left-hand side of  \eqref{eq:expansions-for-surgeries-1}
vanishes identically. We record the closed form formulae and explicitly the first few polynomials $P_{n}^{p,b}$ in Appendix \ref{app:Pnpb}.  The definition for the case of odd $b$ will be given in \eqref{eq:expansions-for-surgeries-1 odd}.

Using \eqref{eq:expansions-for-surgeries-1}, expression \eqref{eq:+/-p surgery}
for $(-p)$-surgery can be written as 
\begin{equation}
\hat{Z}_{b} (S_{-p}^{3}(K);\tau )=  
\epsilon q^{d-\frac{b(2p-b)}{4p}} \sum_{n=0}^{\infty}a_{-n-1}(K;q)\frac{q^{n^{2}}}{\left(q^{n+1};q\right)_{n}} P_{n}^{p,b}(q^{-1})~.
\label{eq:surgery formulas 1}
\end{equation}
For positive surgeries, taking $F_K(x, \tau)$ as a series expansion in $x$ and $q$ and directly applying the Laplace transform \eqref{eq:Laplace transform2} does not generate $\hat{Z}$-invariants with convergent $q$-series in general, as there may be unbounded negative powers of $q$. 
The proposal of \cite{Park2106} is to obtain the result of a  $(+p)$-surgery formula by taking $q$ to $q^{-1}$ in a very specific way that we will now describe. 
First, assume that the only factor in the summand of \eqref{eq:surgery formulas 1} that leads to an infinite $q$-series when expanded, namely the factor $\frac{q^{n^{2}}}{\left(q^{n+1};q\right)_{n}}$,  
can be extended outside the unit circle via the $q$-hypergeometric expression 
\begin{equation}\label{q_hyp}
\frac{1}{(q^{-a};q^{-1})_n} = (-1)^n \, \frac{q^{\, an+\frac{n(n-1)}{2}}}{(q^{a};q)_n}, 
\end{equation} 
where we have $a=n+1$ for the case of \eqref{eq:surgery formulas 1}. The rest of the factors in \eqref{eq:surgery formulas 1} are given by a finite polynomial in $q$ or $q^{-1}$ and can be transformed under $q\mapsto q^{-1}$ in a straightforward way. In particular, we are led to extend the definition of  
$\mathcal{L}_{k}^{(b)}\left(\frac{1}{D_n}\right)$ to positive $k$:
\begin{equation}
\mathcal{L}_{+ p}^{(b)}\left(\frac{1}{D_n}\right) := q^{\frac{b(2p-b)}{4p}}  
\mathcal{L}_{1}^{(0)}\left(\frac{1}{D_n}\right)
P_{n}^{p,b}(q)= q^{\frac{b(2p-b)}{4p}}  \frac{(-1)^{n}q^{\frac{n(n+1)}{2}}}{\left(q^{n+1};q\right)_{n}}P_{n}^{p,b}(q),  \quad b~{\rm{even}},
\label{eq:expansions-for-surgeries-2}
\end{equation}
which gives\footnote{Note that here we are changing the surgery from $-p$ to $+p$ without changing the orientation of the knot. In contrast, following \cite{GM}, when flipping the orientation of the knot, it is natural to define the series of the mirror knot $m(K)$ as $F_{m(K)}(x,\tau) =F_{K}(x,-\tau)$.  }
\begin{equation}
\hat{Z}_{b}(S_{+p}^{3}(K);\tau)=  
\epsilon q^{-d+\frac{b(2p-b)}{4p}}\sum_{n=0}^{\infty}a_{-n-1}(K;q)\frac{(-1)^{n}q^{\frac{n(n+1)}{2}}}{\left(q^{n+1};q\right)_{n}}P_{n}^{p,b}(q)~
\label{eq:surgery formulas}
\end{equation}
for even $b$. 

After reviewing the conjectured expression \eqref{eq:surgery formulas} for $(+p)$-surgeries \cite{Park2106}, we discuss how to combine the inverted Habiro description for the $F_K$ series with the inclusion of defect operators, in accordance with Definition \ref{conj:surgery_defect}. 
Note that the proposed Wilson line expression \eqref{eq:surgeryS2_defect} combined with the expression \eqref{eq:inverted habiro} for the $F_K$-series leads, for the case of a plumbed knot $K$,  to 
\begin{multline}\label{eq:habiro_defect}
\hat{Z}_{b}(S_{\pm p}^{3}(K);W_{\nu};\tau)=\epsilon q^{\mp d}\left(a_{-1}(K; q)\mathcal{L}_{\pm p}^{(b+\nu)}\left(\chi_\nu(x^\ha)\right) \right. \\
+\left.\sum_{m=1}^{\infty} \mathcal{L}_{\pm p}^{(b+\nu)}\left[\chi_\nu(x^\ha)\frac{a_{-m-1}(K;q)}{D_m}\right]\right)~.
\end{multline}

Below we will work out their explicit expression. 
For that, first we note the following lemma, which will be relevant for defects with highest weights that are not roots. It can be proven in a way analogous to Proposition 4.1.5. of \cite{park-thesis},  {using the qZeil algorithm \cite{qZeilpaper}. 
\begin{lem}
    The following identity holds
\begin{equation} 
 \frac{x^{\ha}+x^{-\ha} }{\prod_{j=1}^{n} (x+x^{-1}-q^j-q^{-j})} \bigg|_{x^u\mapsto q^{u^2}} = \frac{q^{n^2-n+\frac{1}{4}}}{(q^{n};q)_n}~, 
\end{equation}
where $\big|_{x^u\mapsto q^{u^2}}$ denotes first series expand in $x$ and then substitute   $x^u$ with $q^{u^2}$.
\end{lem}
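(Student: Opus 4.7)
The plan is to establish the identity by induction on $n$, following the strategy of Proposition 4.1.5 of \cite{park-thesis} and invoking the $q$-Zeilberger algorithm \cite{qZeilpaper}. First I would rewrite the LHS in a manifestly $q$-hypergeometric form. Using the factorization $x+x^{-1}-q^j-q^{-j}=x^{-1}(1-xq^j)(1-xq^{-j})$, the expression becomes
\begin{equation*}
\frac{x^{1/2}+x^{-1/2}}{\prod_{j=1}^n(x+x^{-1}-q^j-q^{-j})}=\frac{(x^{1/2}+x^{-1/2})\,x^n}{(xq;q)_n\,(xq^{-n};q)_n}.
\end{equation*}
Expanding the right side around $x=0$ as a formal power series (up to the overall prefactor $x^{\pm 1/2}$) and applying the substitution $x^u\mapsto q^{u^2}$ termwise rewrites the expression as a $q$-hypergeometric sum whose summand is a standard product of $q$-Pochhammer symbols in $n$ and the summation index.

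Running qZeil on this sum produces a first-order $q$-difference recursion in $n$ for the LHS. One verifies directly that $Q_n:=q^{n^2-n+1/4}/(q^n;q)_n$ satisfies the same recursion by computing $Q_{n+1}/Q_n$ from the definitions of the $q$-Pochhammer symbols; the ratio simplifies to an explicit rational function of $q^n$ that must match the one extracted from qZeil. Together with the base case, this proves the lemma by induction.

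The base case $n=1$ can be handled directly. Expanding $x/[(1-xq)(1-xq^{-1})]=\sum_{k\geq 0}x^{k+1}\sum_{j=0}^{k}q^{k-2j}$, multiplying by $x^{1/2}+x^{-1/2}$, and applying the substitution yields $\sum_{k\geq 0}(q^{(k+1/2)^2}+q^{(k+3/2)^2})\sum_{j=0}^{k}q^{k-2j}$. Collecting terms, each exponent of the form $\ell+1/4$ with $\ell\geq 0$ occurs exactly once via the classical unique decomposition $\ell=m^2+d$ with $0\leq d\leq 2m$: the even $d$ come from the first summand with $k=m$, and the odd $d$ from the second summand with $k=m-1$. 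This gives $q^{1/4}/(1-q)=Q_1$.

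The main obstacle is in packaging the power-series expansion into a form amenable to qZeil: the character prefactor $x^{1/2}+x^{-1/2}$ produces a sum of two offset series whose exponents of $q$ interleave nontrivially after the substitution, which may require treating the $x^{1/2}$ and $x^{-1/2}$ contributions as two separate one-sided $q$-hypergeometric sums, or a careful reindexing, in the spirit of the analogous argument for Proposition 4.1.5 in \cite{park-thesis}.
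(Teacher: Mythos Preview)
Your proposal is correct and takes essentially the same approach as the paper: the paper does not give a detailed proof but simply states that the lemma ``can be proven in a way analogous to Proposition 4.1.5 of \cite{park-thesis}, using the qZeil algorithm \cite{qZeilpaper},'' which is exactly the route you outline. In fact you supply more detail than the paper does, including a clean verification of the base case $n=1$ via the decomposition $\ell=m^2+d$ with $0\le d\le 2m$.
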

\vspace{5pt}

From the above lemma, we get 
\begin{equation} \label{eq:expansions-for-surgeries-3}
{\mathcal L}^{(1)}_{-1}\left(\frac{x^{\ha}+x^{-\ha} }{D_n}\right) 
= \frac{q^{n^2-n+\frac{1}{4}}}{(q^{n};q)_n} ~. 
\end{equation}
With the above result, and noting that the right-hand side can be defined both for $|q|<1$ and for $|q|>1$, we define the $(+1)$-surgery using \eqref{q_hyp}  as
\begin{equation}\label{L11}
{\mathcal L}^{(1)}_{1}\left(\frac{x^{\ha}+x^{-\ha} }{D_n}\right) := (-1)^n \frac{q^{\frac{n^2+n}{2}-\frac{1}{4}}}{(q^{n};q)_n} ~. 
\end{equation}
{
Similarly to \eqref{eq:expansions-for-surgeries-1}, with the notation 
$$
\delta_b = \begin{cases} 
1 & b ~{\rm odd}\\
0& b ~{\rm even}, 
\end{cases}
$$
we extend the definition of  the polynomials $P$ to odd $b$ by setting 
\begin{equation}
P_{n}^{p,b}(q^{-1}) :=  q^{\frac{b(2p-b)}{4p}{
{-}\frac{\delta_b}{4}}
}\frac{ \mathcal{L}_{- p}^{(b)}\left(\frac{(x^{\ha}+x^{-\ha})^{\delta_b}}{D_n}\right)}{\mathcal{L}_{- 1}^{({1})}\left(\frac{(x^{\ha}+x^{-\ha})^{\delta_b}}{D_n}\right) },~ b\in \mathbb Z .
\label{eq:expansions-for-surgeries-1 odd}
\end{equation}}

\bigskip

{
\noindent Analogous to \eqref{eq:expansions-for-surgeries-2}, which is relevant for the case with even highest weight $\nu_i$, from \eqref{L11} we also define the $+p$-surgery counterpart of the above through 
{
 \begin{gather}\label{eq:expansion-for-surgeries-Wils}\begin{split}
\mathcal{L}_{+p}^{(b)}\left(\frac{x^{\ha}+x^{-\ha} }{D_n}\right)  &:=
 q^{\frac{b(2p-b)}{4p}{
 {-}\frac{1}{4}}}\,P_{n}^{p,b}(q)  \,
{\mathcal L}^{(1)}_{1}\left(\frac{x^{\ha}+x^{-\ha} }{D_n}\right) \\&= 
(-1)^n \frac{q^{\frac{n^2+n}{2}}}{(q^{n};q)_n} q^{\frac{b(2p-b)}{4p} {-\frac{1}{2}}}P_{n}^{p,b}(q) ,  ~ b~{\rm{odd}},
\end{split}\end{gather}}
relevant for the case with odd highest weight $\nu_i$.
We can summarize the above as the following: 
\begin{dfn}
For $p,b,m\in {\mathbb N}$, write
\begin{gather}\begin{split}
f_{m}^{p,b}(q^{-1}) = q^{-\frac{b(2p-b)}{4p}{+}\frac{\delta_b}{4}}  
{\mathcal L}^{(\delta_b)}_{-1}\left(\frac{(x^{\ha}+x^{-\ha})^{\delta_b} }{D_m}\right) =  
q^{{\frac{\delta_b}{2}}-\frac{b(2p-b)}{4p}} \frac{q^{m(m-\delta_b)}}{(q^{m+1-\delta_b};q)_m}
\end{split}\end{gather}
or equivalently 
\begin{equation}\label{dfn:f}
f_{m}^{p,b}(q) = (-1)^m q^{\frac{b(2p-b)}{4p}{-\frac{\delta_b}{2}}} \frac{q^{\frac{m^2+m}{2}}}{(q^{m+1-\delta_b};q)_m}.
\end{equation}
Define the polynomial 
$P_{m}^{p,b}(q)$  by \eqref{eq:expansions-for-surgeries-1 odd}, or equivalently 	
\begin{equation}\label{dfn:Ppoly}
\mathcal{L}_{-p}^{(b)}\left(
\frac{(x^{\ha}+x^{-\ha})^{\delta_b}}{ D_m} \right)
= f_{m}^{p,b}(q^{-1})  P_{m}^{p,b}(q^{-1}). 
\end{equation}
Using the above ingredients we define
\begin{equation}
\mathcal{L}_{{+}p}^{(b)}\left(
\frac{(x^{\ha}+x^{-\ha})^{\delta_b}}{D_m} \right)
:=f_{m}^{p,b}(q) \, P_{m}^{p,b}(q)~~. 
\end{equation}
\end{dfn}
\noindent An explicit analysis of the polynomials $P_{m}^{p,b}$, including their closed form expressions, can be found in Appendix \ref{app:Pnpb}. 

\bigskip 

\noindent With the above definition, 
we are almost ready to write down  our proposal for the generalization of the conjectural expression of the $+p$-surgery formulas of \cite{Park2106}, for the cases where defects are present. 
In order to compute the Laplace transform \eqref{eq:surgeryS2_defect} involving the general $\mathfrak{sl_2}$ characters $\chi_\nu$ \eqref{eq:characterdefB}, we make use of the following Lemma: 
}

\begin{lem}\label{lem:LambdaPoly}
Writing $X=x+x^{-1}$,  
we have 
 \begin{gather}\label{character_invertedLemma}
\begin{split}
&(  x^{\ha} +x^{-\ha})^{2n-\nu}  \frac{\chi_{\nu}(x^\ha)}{D_m}\\
&=\sum_{j=0}^{n} c_j^{(\nu/2)} S_{\nu-n-1-m-j}(X,Q_m,\dots,Q_{1})
 + \sum_{\ell=0}^{{\rm min}(n,m-1)} \frac{1}{D_{m-\ell} } \sum_{j=0}^{ {n-\ell}} c_j^{(\nu/2)} S_{n-\ell-j}(Q_m,\dots,Q_{m-\ell})
\end{split}
\end{gather}
where $n=\lfloor \frac{\nu}{2}\rfloor$, 
\begin{gather}\label{character_nsmall_coeff}
\begin{split}
 c^{(n)}_{j} \;&=\; (-1)^{\lfloor \frac{j}{2}\rfloor}\,
\binom{\,n-\lceil \frac{j}{2}\rceil\,}{\,\lfloor \frac{j}{2}\rfloor\,}\\
c^{(n+\ha)}_{2j} \;&=\; (-1)^{j}\,
\binom{\,n-j\,}{\,j\,}, ~~c^{(n+\ha)}_{2j+1}=0\,.
\end{split}
\end{gather}
In the above, $S_n$ denotes the complete homogeneous symmetric polynomial of degree $n$: $$S_n(x_1,\dots,x_k) = \sum_{\substack{i_\ell \geq 0\\ \sum_{\ell=1}^k i_\ell =n}}  x_1^{i_1}\dots x_k^{i_k},  ~{\rm for}~ n\geq 0~, ~~S_n(x_1,\dots,x_k) =0 ~{\rm for}~n<0.$$
\end{lem}
\vspace{5pt}
\noindent The proof of the Lemma is recorded in Appendix \ref{app:prooflem2}. Putting the above together, and using 
\eqref{eq:habiro_defect}, we are led to the following conjecture. 

{
\begin{con}\label{con:habiro}
When $p\in \NN$ and knot $K$ is such that the inverted Habiro series expression \eqref{eq:surgery formulas 1} and \eqref{eq:surgery formulas} lead to the correct $\hat Z$-invariant for $S^3_{\pm p}(K)$, inserting a defect operator supported on $K$ in the highest representation of $\mathfrak{sl}_2$ with highest weight $\nu \vec\omega$ leads to the following defect invariant: 
\begin{gather}\label{conj_invertedHabiro}
\begin{split}
&\hat{Z}_{b}(S_{\pm p}^{3}(K);W_{\nu};\tau)=
{\rm poly}(q) \\ &+ \epsilon q^{\mp d} \sum_{m=1}^\infty a_{-m-1}(K;q^{\mp 1})\sum_{\ell=0}^{{\rm min}(n,m-1)}  f_{m-\ell}^{p,b+\nu} (q^{\pm1})  P_{m-\ell}^{p,b+\nu}(q^{\pm1}) \sum_{j=0}^{ {n-\ell}} c_j^{(\nu/2)} S_{n-\ell-j}(Q_m,\dots,Q_{m-\ell}) ,
\end{split}
\end{gather} 
where $n=\lfloor \frac{\nu}{2}\rfloor$ and
for some finite polynomial ${\rm poly}(q)$ of $q$ and $q^{-1}$. 
\end{con}
}

\vspace{0.5cm}
\begin{rmk}

\begin{itemize}
\item
As the reader might have noticed, the proposal on the inverted Habiro series and its extension to the cases 
of Laplace transforms for positive integer surgeries is, from the point of view of the $q$-series, 
highly experimental. In this paper we focus on how, when working with the cases where that proposal of \cite{Park2106} does lead to a correct answer, one can extend the proposal to include supersymmetric defect operators. 
We will concretely test this proposal in specific examples in \S\ref{sec:ex}. 
\item 
Note that this method can be used to compute $\hat Z$-invariants both of the ``false" as well as of the ``mock" kind, as exemplified in \S\ref{subsec:habiroexamples}. 
\item
In \S\ref{subsec:41} we present some hints that it might be possible to arrive at similar expressions for knots other than plumbed knots. 
\end{itemize}
\end{rmk}

\subsection{Modularity}
\label{sec:Modularity}

In \S\ref{sec:intro} we have discussed our general expectation that there exists an $SL_2(\ZZ)$ representation underlying the $\hat Z$ invariants of a given three-manifold.  
After proposing the defect invariants in Definitions \ref{conj:plumbed_defect} and \ref{conj:surgery_defect}, we are ready to discuss the content of the Modularity Conjecture \ref{con:modularity}, which is really a concrete special case of {our} general expectation about vector-valued quantum modularity discussed in \S\ref{sec:intro}. 
We will first start with a detailed description of the Weil representation $\Theta^{m+K}$ featured in the conjecture.

\subsubsection*{Weil representations}\label{sec:WeilReps}

The $\hat{Z}$-invariants have been observed \cite{3d} to be related to certain concrete Weil representations $\Theta^{m+K}$ of the metaplectic group  $\widetilde {\rm SL}_2(\ZZ)$, which are subrepresentations of the $2m$-dimensional representation $\Theta_m$ spanned by the column vector $\theta_m=(\theta_{m,r})_{r\, \mathrm{mod}\, 2m}$  with $m\in\mathbb{Z}_{>1}$ and theta function components \eqref{eq:thetafunc}. 
Their derivatives define unary theta functions 
\begin{equation}\label{dfn:theta}
  \theta^\ell_{m,r}(\tau) := \left(\left({\frac{1}{ 2\pi i}} \frac{\partial}{\partial z}\right)^{\ell}   \,\theta_{m,r}(\tau,z) \right)\Bigg\lvert_{z=0}~, \qquad \ell=0,1 ~,
\end{equation}
with $\widetilde{\theta}_{m,r}(\tau)$ in equation \eqref{eq:intro:falsetheta} proportional to the Eichler integral \eqref{dfn:Eichler_integral} of $\theta^1_{m,r}(\tau)$.

To define the Weil representation $\Theta^{m+K}$, one starts with $\mathrm{Ex}_m$, the group of exact divisors of $m$, where a divisor $n$ of $m$ is exact if $\left(n,m/n\right)=1$ and the group multiplication for $\mathrm{Ex}_m$ is $n\ast n':=nn'/(n,n')^2$. 
For a subgroup $K\subset\mathrm{Ex}_m$ one can construct in the following way
a subrepresentation of $\Theta_m$ denoted $\Theta^{m+K}$. 

Consider the space of matrices commuting with the $S$- and $T$-matrices on $\Theta_m$. It is generated by the  $\Omega$-matrices, $\Omega_m(n)$ with $n|m$ \cite{Cappelli:1987xt}, defined by 
\begin{equation}\label{omegamatrices}
 \Omega_m(n)_{r,r'} =\delta_{r+r'\Mod{2n}}\delta_{r-r'\Mod{2m/n}}
\end{equation}
and it represents the group $\mathrm{Ex}_m$: for $n,n'\lvert m$ and  $n\in \mathrm{Ex}_m$, one has $\Omega_m(n)\Omega_m(n')=\Omega_m(n\ast n')$.
As a result, one can define the
 projection operators 
\begin{equation}\label{dfn:proj}
    P^\pm_m(n) :=\left( \mathbf{1}_m \pm \Omega_m(n) \right) /2 ~, \qquad n\in{\rm Ex}_m ~,
\end{equation}
satisfying $ (P^\pm_m(n) )^2= P^\pm_m(n) $. 
For subgroups $K$ with $m\not\in K$, we define the projector
\begin{equation}\label{dfn:projB}
P^{m+K} :=    \left(\prod_{n\in K}P^+_m(n)\right) P^-_m(m)~.
\end{equation}
With these, $\Theta^{m+K}$ can explicitly be constructed from $\Theta_m$  
as spanned by $(\theta^{m+K}_{r})_{r\in \sigma^{m+K}}$  
\begin{equation}\label{linear_weil}
    \theta^{m+K}_{r}(\tau,z) :=  2^{\frac{|K|}{2}}\sum_{r'\in \mathbb{Z}/2m} P_{r,r'}^{m+K} \theta_{m,r'}(\tau,z)~,
\end{equation}
where $\sigma^{m+K}\subset \mathbb{Z}/2m$ is the set which labels the linearly independent functions $\theta^{m+K}_{r}(\tau,z)$. 
When $K$ is maximal, in the sense that $\mathrm{Ex}_m = K\cup (m\ast K)$, and $m$ is square-free, the Weil representation $\Theta^{m+K}$ is irreducible.
If $K$ is maximal but $m$ is not square-free, then the irreducible representation 
$\Theta^{m+K,{\rm irred}}$ \cite{SkoruppaThesis, 2007arXiv0707.0718S} can be constructed by using the projection operators 
\begin{equation}\label{irred_weil}
P^{m+K, {\rm irred}} :=    \left(\prod_{n\in K}P^+_m(n)\right)
\left( \prod_{f^2|m}(\mathbf{1}_m-\tfrac{1}{f} \Omega_m(f) ) \right) P^-_m(m)~.
\end{equation}

Similarly to equation \eqref{linear_weil}, we introduce the notation
\begin{equation}\label{linear_weilB}
    \widetilde{\theta}^{m+K}_{r}(\tau) := 2^{\frac{|K|}{2}} \sum_{r'\in \mathbb{Z}/2m} P_{r,r'}^{m+K} \, \widetilde{\theta}_{m,r'}(\tau)~ 
\end{equation}
and similarly for $\theta^{1,m+K}_r$ (cf. \eqref{dfn:theta}).
We also define $ \widetilde{\theta}^{m+K,{\rm{irr}}}_{r}$, ${\theta}^{1,m+K,{\rm{irr}}}_{r}$, and $\sigma^{m+K,{\rm{irr}}}$ analogously using \eqref{irred_weil}. 
It will be in terms of these Weil representations that we phrase a concrete conjecture (Conjecture \ref{con:modularity}) as a special realization of the general modularity expectation that we outlined in~\S\ref{sec:intro}. In expressing the data $m$ and $K$, we use the notation of writing $m+K$ and skip writing the identity element in $K$, as in \cite{Conway:1979qga}. For instance, we write $42+6,14,21$ to denote $m=42$, $K=\{1,6,14,21\}$.

{The relationship between $\hat Z$ invariants and concrete quantum modular forms is captured by an equivalence relation: we say two infinite $q$-series are equivalent in the following sense
\begin{equation}\label{dfn:equivalence}
f_1\sim f_2 ~{\rm if}~f_1=C\,q^{\Delta} f_2 + q^{\Delta'}p(q)
\end{equation}
where $C\in \CC$, $\Delta, \Delta'\in \QQ$ and $p(q)\in \CC[q,q^{-1}]$  is a finite polynomial. In the rest of the paper we will continue to use $\sim$ to denote this equivalence. We also extend the equivalence between infinite $q$-series to their spans. Namely, we say $V_1\sim V_2$ if for all $v_1\in V_1$ there is a $v_2$ such that $v_1\sim v_2$.}

\subsubsection*{The Modularity Conjecture}

\begin{con}\label{con:modularity}

Consider a Seifert manifold $M_3$ with three singular fibers. 
Define 
 \begin{equation}\label{dfn:span}
 {\rm span}(\hat Z(M_3)):= {\rm span}_\CC \{\hat Z_b(M_3,W_{\bnu};\tau), b\in {{\mathrm{Spin}}^c(M_3)}\lvert \bnu\in \NN^3\}. \end{equation} 
Then there exists a Weil representation 
\begin{equation}
\Theta^{(M_3)} = \Theta^{m+K} ~~{\rm{or}}~~ \Theta^{(M_3)} =\Theta^{m+K,{\rm irr}}
\end{equation}
for some
 positive integer $m$ and a subgroup $K\subset {\rm Ex}_m$, 
such that the following is true. 
\begin{enumerate}
\item When $M_3$ is a negative Seifert manifold, we have    
$${\rm span}(\hat Z(M_3))\sim {\rm span}_\CC \{ \widetilde{\theta}^{(M_3)}_{r}\lvert r\in \ZZ/2m \}. $$ 
\item When $M_3$ is a positive Seifert manifold, there is a $SL_2(\ZZ)$ vector-valued (mixed) mock modular form $f^{(M_3)}=(f_r^{(M_3)})$ transforming in the dual representation of  $\Theta^{(M_3)}$, such that 
 $${\rm span}(\hat Z(M_3))\sim {\rm span}_\CC \{ f^{(M_3)}_{r}\lvert ~ r\in \ZZ/2m\}\,.$$
\end{enumerate}
\end{con}
  \addtocounter{con}{-1}

We can prove the above conjecture for all Brieskorn spheres. 
\begin{thm}\label{thm:modularityBrieskorn} 
The  Conjecture \ref{con:modularity}-1. is true when $M_3$ is $\Sigma(p_1,p_2,p_3)$, with $m=p_1p_2p_3$, $K=\{1, p_1 p_2, p_2p_3, p_1 p_3\}$. 
More precisely, we have 
\begin{equation}\label{exp:Brieskorn_wilson}
\hat Z_0(\Sigma(p_1,p_2,p_3); W_{\bnu};\tau) \sim  \widetilde{\theta}^{m+K}_{r_{\bnu}} ~~,~~r_{\bnu}=m -\sum_i  (1+\nu_i )\bar p_i. 
\end{equation}

\end{thm}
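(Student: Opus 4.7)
The plan is to compute $\hat Z_b(\Sigma(p_1,p_2,p_3); W_{\bnu};\tau)$ directly from \eqref{def:ZhatintegralWilson} and match the resulting $q$-series against the projected Eichler integral $\widetilde\theta^{m+K}_{r_{\bnu}}$ defined in \eqref{linear_weilB}. The Brieskorn sphere is a Seifert manifold with three singular fibres, so its plumbing graph has a central vertex $v_0$ of degree three joined to three legs; only $v_0$ (contributing $(z_0-z_0^{-1})^{-1}$) and the three end vertices (each contributing $(z_v-z_v^{-1})$) carry non-trivial $(z_v-z_v^{-1})^{2-\deg(v)}$ factors, and the defect characters $\chi_{\nu_i}$ are naturally attached to the end vertices. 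Integrating out every vertex other than $v_0$ is a standard manipulation that reduces the $|V|$-dimensional contour integral to a one-dimensional lattice sum in which the surviving quadratic form takes the form $k^2/(4m)$ with $m=p_1p_2p_3$, and the residue conditions constrain $k$ to a fixed arithmetic progression modulo $2m$ determined by $b$ and $\bnu$.

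More concretely, I would expand each end-vertex character as $\chi_{\nu_i}(z_{v_i^{\mathrm{end}}}) = \sum_{k_i=0}^{\nu_i} z_{v_i^{\mathrm{end}}}^{\nu_i-2k_i}$ and perform the leg integrations iteratively. The combined effect of the character insertions together with the end-node factors $(z_v - z_v^{-1})$ is to shift the residue class of the surviving one-dimensional summation index by $\sum_i (1+\nu_i)\bar p_i$, which reproduces the advertised $r_{\bnu} = m - \sum_i (1+\nu_i)\bar p_i$ modulo $2m$. The remaining principal-value integration against $(z_0-z_0^{-1})^{-1}$ at the central vertex introduces the sign factor $\mathrm{sgn}(k)$, and therefore yields, up to an overall prefactor $cq^{\Delta}$ and a finite Laurent polynomial accounting for low-order contributions, exactly the false theta function $\widetilde\theta_{m, r_{\bnu}}$ of \eqref{eq:intro:falsetheta}.

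It then remains to identify $\widetilde\theta_{m, r_{\bnu}}$ with the projected object $\widetilde\theta^{m+K}_{r_{\bnu}}$. For this I would show that the coefficient $\epsilon(k)$ of $q^{k^2/(4m)}$ factorises as $\mathrm{sgn}(k)$ times a product of sign functions $\epsilon_i(k\Mod{2p_i})$, one per leg, which is a direct reflection of the leg-by-leg structure of the integration. Via the Chinese remainder theorem and the property $(p_i,\bar p_i)=1$, this multiplicative structure is equivalent to invariance under each $\Omega_m(\bar p_i)$, i.e.\ the three non-trivial generators of $K=\{1,\bar p_1,\bar p_2,\bar p_3\}$, together with anti-invariance under $\Omega_m(m)$ (the manifest $k\mapsto -k$ antisymmetry), so that $\widetilde\theta_{m,r_{\bnu}}$ lies in the image of the projector $P^{m+K}$ of \eqref{dfn:projB} and coincides with $\widetilde\theta^{m+K}_{r_{\bnu}}$ up to the normalisation $2^{|K|/2}$. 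The main technical obstacle is the bookkeeping of signs and Weyl-chamber representatives when some $\nu_i$ is odd, since the shift $b\mapsto b+\bnu$ in \eqref{def:ZhatintegralWilson} can move the summation index outside the standard fundamental domain for the residual $b\mapsto -b$ symmetry; one must then verify that the resulting discrepancies are absorbed into the finite polynomial $p(q)$ rather than contaminating the infinite tail of the $q$-series.
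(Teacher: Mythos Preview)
Your overall strategy matches the paper's: compute the contour integral for $\Sigma(p_1,p_2,p_3)$ with defects at the three end nodes, and identify the result with the projected false theta $\widetilde\theta^{m+K}_{r_{\bnu}}$. The factorisation argument in your third paragraph is, once unwound via the Chinese remainder theorem, the same as the paper's ``folding'' identity \eqref{eqn:folding}, which expresses $\widetilde\theta^{m+K}_r$ as the even-parity half of an eight-term sum over sign choices $\hat\epsilon\in(\ZZ/2)^{\otimes 3}$.

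There is, however, a genuine inconsistency between your second and third paragraphs. In paragraph~2 you assert that the leg integrations leave \emph{a single} residue class and hence ``exactly the false theta function $\widetilde\theta_{m,r_{\bnu}}$''. This is not what happens: at each end node the combined factor $(z-z^{-1})\chi_{\nu_i}(z)=z^{1+\nu_i}-z^{-1-\nu_i}$ has \emph{two} terms, so the surviving one-dimensional lattice sum is a signed sum over all eight choices $\hat\epsilon$, namely $\sum_{\hat\epsilon}(-1)^{\hat\epsilon}\widetilde\theta_{m,\,m-\sum_i(-1)^{\epsilon_i}(1+\nu_i)\bar p_i}$ (cf.\ \eqref{false_onemoretime}--\eqref{dfn:chi}). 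A single $\widetilde\theta_{m,r}$ is \emph{not} in the image of $P^{m+K}$, so your paragraph~3 identification would simply fail if paragraph~2 were correct. Your factorisation argument is salvageable, but only once applied to the full eight-term output; the product structure of the coefficient $\epsilon(k)$ over the three legs is precisely the statement that the eight-term sum collapses to $2\,\widetilde\theta^{m+K}_{r_{\bnu}}$, which is the paper's \eqref{eqn:folding}.

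A second, smaller omission: the theorem asserts that Conjecture~\ref{con:modularity}.1 holds, i.e.\ that the \emph{span} of all defect invariants equals the span of all $\widetilde\theta^{m+K}_r$ for $r\in\sigma^{m+K}$. Your proposal establishes only the formula \eqref{exp:Brieskorn_wilson} for each $\bnu$ and does not argue that the map $\bnu\mapsto r_{\bnu}$ hits every component. The paper closes this by a short Chinese-remainder argument showing every $r\in\ZZ/2m$ can be written as $\sum_i A_i\bar p_i$, so that suitable $\nu_i=A_i-1$ realise each $r\in\sigma^{m+K}$.
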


\begin{proof}
When $M=\Sigma(p_1,p_2,p_3)$ is a Brieskorn sphere, we necessarily have $(p_i,p_j)=1$ when $i\neq j$.
Define $m=p_1p_2p_3$ and  $\bar p_i:= m/p_i$. 
From the Chinese remainder theorem, it is easy to see that, for any $r \in \ZZ/2m$, the equations 
\be
r+r'\equiv 0\Mod{2p_i},~r- r'\equiv 0\Mod{2\bar p_i}
\ee 
have a unique answer $r'\in \ZZ/2m$. 
Also, note that there exists $a_1,a \in \ZZ$ such that $1= a_1 \bar p_1 + a p_1$. 
Another application of the Chinese remainder theorem then implies that there exists $a_2, a_3 \in \ZZ$ such that $a=a_2 p_3+ a_3 p_2$, and hence there exists $ (a_1,a_2, a_3) \in \ZZ^3$ such that $1= \sum_{i=1}^3 a_i \bar p_i$. As a result,  every $r\in \ZZ/2m$ admits an expression in terms of $A_i\in \ZZ/p_i$ such that $r=\sum_{i=1}^3 A_i \bar p_i$.

Putting $m=p_1p_2p_3$, $K=\{1, p_1 p_2, p_2p_3, p_1 p_3\}$ in the definition \eqref{dfn:projB} and \eqref{linear_weilB} and using the properties of the $\Omega$-matrices, one obtains
\be\label{orbit}
\widetilde{\theta}^{m+K}_{r} =\sum_{r'\in \mathbb{Z}/2m} \left({\bf 1}+\sum_{i=1}^3 \Omega_m(\bar p_i)\right)_{r,r'} \widetilde{\theta}^{m+K}_{r'}.
\ee
It is easy to check that, when writing $r=m-\sum_{i=1}^3 A_i \bar p_i$ (which is always possible from the above argument), we have 
\be
\left(\Omega_m(\bar p_j)\right)_{r,r'}  =\delta_{r'-(m-\sum_i (-1)^{1+\delta_{i,j}} A_i \bar p_i)\Mod{2m}}, 
\ee
which, when combined with \eqref{orbit}, gives 
\be\label{eqn:folding}
\widetilde{\theta}^{m+K}_{r} =\sum_{\substack{(\epsilon_{1},\epsilon_{2},\epsilon_{3})\in (\ZZ/2)^{3} \\ \sum_{j}\epsilon_{j}\equiv 0\Mod{2}}}  \widetilde{\theta}_{m,m-\sum_{i=1}^3 (-1)^{\epsilon_i}A_i \bar p_i }     ,~~r=m-\sum_{i=1}^3 A_i \bar p_i ~.
\ee
Finally, applying  Definition \ref{conj:plumbed_defect} to the case where $M_3=\Sigma(p_1,p_2,p_3)$ and when the defect operators 
corresponding to the end nodes are inserted, one obtains  \eqref{exp:Brieskorn_wilson} after a routine calculation. 
We therefore conclude that by changing the highest weight $\nu_i\vec \omega$ of the defect operator, we can cover all the components of the Weil representation. Conversely, one can associate a Wilson line configuration to any component labelled by $r\in \sigma^{m+K}$. 
\end{proof}

{In \S\ref{subsec:indef_zhat}, we will also show that Conjecture \ref{con:modularity}{\it -2}. is compatible with our proposal for the mock $\hat Z$-invariants of certain positive plumbed three-manifolds $-M_3$. }

\bigskip

Given the prominent role of the Weil representation in Conjecture \ref{con:modularity}, it might be natural to consider combining the $\hat Z_b(W_{\bnu})$-invariants for different choices of $b$ and $\bnu$ into the Jacobi-like functions $\sum_r \overline{\widetilde{\theta}^{(M_3)}_r(\tau)} {\theta_{m,r}(\tau,z)}$ and $\sum_r f^{(M_3)}_r(\tau) {\theta_{m,r}(\tau,z)}$, for the case of Conjecture \ref{con:modularity}.1 and \ref{con:modularity}.2 respectively. We will call them quantum Jacobi theta functions, which can be defined in an obvious way by combining the definition of mock Jacobi theta functions and quantum modular forms. 
To be specific, let us define the following. 
\begin{dfn}\label{dfn:quantumJac}
We say that $\phi:\HH\times \CC\to \CC$ with $\phi(\tau,z) := \sum_{r\in \ZZ/2m} f_r(\tau) \theta_{m,r}(\tau,z)$ (resp. $\sum_{r\in \ZZ/2m} \overline{f_r(\tau)} \theta_{m,r}(\tau,z)$) is a weakly holomorphic (resp. skew-holomorphic) quantum Jacobi form  of index $m$ and weight $w+1/2$, if each $f_r$is a  weakly holomorphic function on $\HH$, and they form a vector-valued quantum modular form $(f_r)_{r\in \ZZ/2m}$ of weight $w$ for $\Gamma\subset SL_2(\ZZ)$, whose character is the dual of that of $(\theta_{m,r})_{r\in \ZZ/2m}$ (resp. that of $(\theta_{m,r})_{r\in \ZZ/2m}$). 
\end{dfn}

With the above definition, we can write down a natural corollary for Conjecture \ref{con:modularity}. 

{
\begingroup
 \renewcommand{\thecon}{\arabic{con}$'$}
\begin{con}\label{conj:Jac}
Use the same definition \eqref{dfn:span} of ${\rm span}(\hat Z(M_3))$ and the same equivalence relation \eqref{dfn:equivalence}.  
The following statements are true. 
\begin{enumerate}
\item When $M_3$ is a negative Seifert manifold, 
there exists  a skew-holomorphic quantum Jacobi form $$\phi(\tau,z) := \sum_{r\in \ZZ/2m} \overline{f_r(\tau)} \theta_{m,r}(\tau,z)$$ of index $m$ such that 
${\rm span}(\hat Z(M_3))={\rm span}_\CC\{f_r(\tau), r\in \ZZ/2m\}$. 
\item When $M_3$ is a positve Seifert manifold, there exists  a weakly holomorphic quantum Jacobi form $$\phi(\tau,z) := \sum_{r\in \ZZ/2m} {f_r(\tau)} \theta_{m,r}(\tau,z)$$ of index $m$ such that 
${\rm span}(\hat Z(M_3)) ={\rm span}_\CC\{f_r(\tau), r\in \ZZ/2m\} $. 
\end{enumerate}
\end{con}
\endgroup
}
\vspace{0.5cm}
{As in Theorem \ref{thm:modularityBrieskorn}, we have shown that the above is true when $M_3$ is a Brieskorn sphere. } 

\begin{rmk}
\begin{itemize}
\item
The importance of the defect $\hat Z$-invariants for the understanding of the modularity of $\hat Z$-invariants can be seen in the following. Without including the defect operators, in \cite{3d} it was noted that the space spanned by $\hat Z_b(M_3)$ correspond to subspaces of the Weil representation that do not close under the action of the modular group $\widetilde {\rm SL}_2(\ZZ)$. The space gets completed once the appropriate defect operators are included.

\item {It should also be possible to give a similar proof as that of Theorem \ref{thm:modularityBrieskorn} using the proposal (Definition \ref{conj:surgery_defect} and \ref{dfn:surgery_defectFK}) for the defect  $F_K$, when $S^3_{p/r}(K)$ is a Brieskorn sphere. We provide examples in \S\ref{subsec:examples_surgeries}. }

\item It is easy to see why 
there are, up to the addition of finite polynomials, finitely many $\hat Z_b(M_3;W_{\bnu})$ even though $\bnu\in \NN^3$ in principle. 
From the expression of the contour integration \eqref{def:ZhatintegralWilson} and its result \eqref{exp:Brieskorn_wilson}, we see that the equivalence classes (up to the addition of finite polynomials to $\hat Z(W_{\bnu})$) of Wilson lines are given by $\nu_i\in \ZZ/p_i$, with the finite polynomials in the form of
$$\sum_{\ell\equiv r\Mod{2m}} {\rm{sgn}}(\ell) q^{\ell^2/4m} - \sum_{\ell\equiv r\Mod{2m}} {\rm{sgn}}(\ell+ 2mN ) q^{\ell^2/4m}$$ for some finite integer $N$.
\item 
All quantum modular forms we encounter here are the so-called {\emph{strong}} quantum modular forms in the language of \cite{qmf}. We expect this to be true for all quantum modular forms related to $\hat Z$-invariants. 
\item 
Note that our definition of Jacobi-type quantum modular objects (Definition \ref{dfn:quantumJac}) is different from that in \cite{FolsomQuanJac} and references therein. 
\item In Conjecture Conjecture \ref{con:modularity} and 4', note that  $f_r$ for some $r\in \ZZ/2m$  vanish when the corresponding Weil representation is $\Theta^{m+K}\neq \Theta_m$. 
Equivalently, the span is by definition unchanged if instead of taking the set to be  $r\in \ZZ/2m$ one uses  $r\in \sigma^{m+K}$ resp.  $r\in \sigma^{m+K,{\rm irr}}$. The same statement also holds.    
\end{itemize}
\end{rmk}

\section{Mock \texorpdfstring{$\hat{Z}$}{Z}-invariants}
\label{sec:regindefTheta}

In this section we discuss  various considerations on constructing the mock \texorpdfstring{$\hat{Z}$}{Z}-invariants, and proposing explicit expressions for specific {instances} of three-manifolds, comparing different approaches, and analysing the modular properties in details.

\subsection{Regularised Indefinite Theta Functions}
\label{subsec:Regularised Indefinite Theta Functions}

In this subsection we recall the basic construction of indefinite theta functions and their relation to mock modular forms, based on \cite{zwegers} and mostly following the notation of \cite[Chapter 8]{bringmann2017harmonic}. {These regularized indefinite theta functions are naturally mixed mock modular forms, which we now introduce. }

{
In the introduction (below \eqref{dfn:Eichler_integral-Nhol}), we have reviewed the modern definition of mock modular forms. 
This definition allows for a few natural generalizations, including for instance higher-depth mock modular forms, which are expected to be relevant for $\hat Z$ invariants for plumbed manifolds with two or more non-negative high valency vertices, and for higher rank gauge groups. The generalization we would like to highlight here is that of mixed mock modular forms \cite{QBH}. 
To emphasize the difference between mock and mixed mock properties, mock modular forms are also sometimes referred to as ``pure" mock modular forms.
The novel characteristic of mixed mock modular forms is that they allow a more complex structure 
in their relation to the completion function (the associated non-holomorphic function that transforms like a modular form).  
For instance, a simple multiplication of a mock modular form by a modular form
leads to a function that is no longer a pure mock modular form but is still a mixed mock modular form. }

{
Formally, a mixed mock modular form is the first function of a triplet $(f,\{h_j\}_{j\in I}, \{g_j\}_{j\in I})$ of finite sets of functions  defined on the upper half plane $\HH$, such that $f$ is weakly holomorphic and 
the non-holomorphic function  $\hat{f}:=f -\sum_j h_j g^{*}_j$, called the completion of $f$, transforms as a modular form of weight $k$.
In our work, we let the functions $h_j$ be weakly holomorphic and $g_j$ are restricted to be cusp forms. Moreover, $h_i$ and $g_i$ transform modularly with weights $\ell$ and $2-k+\ell$ respectively, for some $\ell \in \tfrac{1}{2}+\ZZ$. }

Here we limit our discussion to two-dimensional indefinite lattices that are relevant for the present paper, though the construction is completely analogous for general lattices of signature $(1,n)$. See also \cite{Alexandrov:2016enp,Nazaroglu_2018} for the analysis of general indefinite signature lattices.

For the two-dimensional lattice $L\cong \ZZ^2$, we denote by $A$ the symmetric $2\times 2$ matrix with integral entries giving rise to the bilinear form of the lattice, which we will take to have signature $(1,1)$. One can extend the norm to $L\otimes_\ZZ \RR$ by defining  \begin{equation}\label{def:quadform}
    \normsq{\sx}:= \sx^T A\sx
\end{equation}
for $\sx\in\mathbb{R}^2$. We will also denote the inner product as
\begin{equation}\label{def:bilform}
    \inner{\sx}{\sy}:= \sx^T A \sy~.
\end{equation}
With these, the set of vectors $\ssc\in\mathbb{R}^2$ with negative norm $\normsq{\ssc}<0$ splits into two connected components. To specify one component, for a fixed vector $\ssc_0$ with $\normsq{\ssc_0}<0$, we denote with 
\begin{equation}
    C_{B}(\ssc_{0}):=\{ \ssc\in\mathbb{R}^2 | \normsq{\ssc}<0, B(\ssc,\ssc_0)<0\} ~
\end{equation}
the component containing $c_0$. We also denote with
\begin{equation}
    S_{B}(\ssc_0):=\{\ssc=(c_1,c_2)\in \mathbb{Z}^2 |\,  \mathrm{gcd}(c_1,c_2)=1, \normsq{\ssc}=0, \inner{\ssc}{\ssc_0}<0\}~
\end{equation}
the set of integral primitive vectors on the boundary of the component, a set of representatives of the cosets for the cusps of $C_B(\ssc_0)/ {\mathbb R}_+$  with respect to the orientation-preserving lattice orthogonal group, and 
 $\bar{C}_{B}(\ssc_0) :=C_{B}(\ssc_0)\cup S_{B}(\ssc_0)$ the corresponding compactification. Then, for any  $\ssc_1, \ssc_2\in \bar{C}_{B}(\ssc_0)\cap  \ZZ^2$, one can 
 define 
 the regularization factor as
\begin{equation}\label{eq:regulatingfactor}
    \rho^{\ssc_1,\ssc_2}(\sn):= {\rm sgn}(\inner{\ssc_1}{\sn})-{\rm sgn}(\inner{\ssc_2}{\sn})~.
\end{equation}
It was shown in \S2 of \cite{zwegers} that the norm is positive definite when restricted to the support of $ \rho^{\ssc_1,\ssc_2}$.
 As a result, for
$\sa \in \mathcal{R}(\ssc_1)\cap \mathcal{R}(\ssc_2)$, $\ssb\in\RR^2$ with
\begin{equation}
    \mathcal{R}(\ssc):= \begin{cases} \mathbb{R}^2 & \mathrm{if}~\normsq{\ssc}<0 \\
    \{\sa\in \mathbb{R}^2 | B(\ssc,\sa)\notin\mathbb{Z}\} &\mathrm{if}~\normsq{\ssc}=0
    \end{cases}
\end{equation}
one can define
 the regularised indefinite theta function  \cite{zwegers}
\begin{equation}\label{def:regularisedindefinitetheta}
    \Theta_{A,\sa,\ssb,\ssc_1,\ssc_2} (\tau) := \sum_{\sn\in \sa+\mathbb{Z}^2} \rho^{\ssc_1,\ssc_2}(\sn)\, \ex(\inner{\sn}{\ssb}) q^{\frac{\normsq{\sn}}{2}},
\end{equation}
 which is absolutely convergent on the upper-half plane. In the above we write $\ex(x):=e^{2\pi i x}$ for $x\in\CC$. 
 {For later convenience and to obtain simpler coefficients, we also define a rescaled version of the above regularised indefinite theta function
 \begin{equation}\label{def:rescaled_regularisedindefinitetheta}
   \tilde \Theta_{A,\sa,\ssb,\ssc_1,\ssc_2} (\tau) :={\ex(-\inner{\sa}{\ssb}) \over 2} \sum_{\sn\in \sa+\mathbb{Z}^2} \rho^{\ssc_1,\ssc_2}(\sn)\, \ex(\inner{\sn}{\ssb}) q^{\frac{\normsq{\sn}}{2}},
\end{equation}}

 It was shown that $\Theta_{A,\sa,\ssb,\ssc_1,\ssc_2}$ enjoys a certain mock modular property. Namely, it is a mixed mock modular form.
Concretely, $\Theta_{A,\sa,\ssb,\ssc_1,\ssc_2}$ can be viewed as the holomorphic part of
\begin{equation}\label{def:regularisedindefinitetheta-completion}
    \hat{\Theta}_{A,\sa,\ssb,\ssc_1,\ssc_2} (\tau,\bar{\tau}) := \sum_{\sn\in \sa+\mathbb{Z}^2} \hat{\rho}^{\ssc_1,\ssc_2}(\sn;\tau) e(\inner{\sn}{\ssb}) q^{\frac{\normsq{\sn}}{2}},
\end{equation} 
where
\begin{equation}\label{eq:completion-regulator}
\begin{split}
      \hat{\rho}^{\ssc_1,\ssc_2}(\sn;\tau)&:=\hat{\rho}^{\ssc_1}(\sn;\tau)-\hat{\rho}^{\ssc_2}(\sn;\tau)\\
      \hat{\rho}^{\ssc}(\sn;\tau)&=
      \begin{cases}
        E\left(\frac{B(\ssc,\sn)(\Im(\tau))^{\tfrac{1}{2}}}{\sqrt{-2\normsq{\ssc}}}\right) & \text{if }\normsq{\ssc}<0\\
        \sign\left(B(\ssc,\sn)\right) & \text{if } \normsq{\ssc}=0
      \end{cases}~,
\end{split}
\end{equation} 
and the non-holomorphic completion 
$\hat{\Theta}_{A,\sa,\ssb,\ssc_1,\ssc_2}$ transforms like a weight one modular form. In the language of \cite[Chapter 8]{bringmann2017harmonic}, $\hat{\Theta}_{A,\sa,\ssb,\ssc_1,\ssc_2}$  is a weight one mixed harmonic form. See also \eqref{prop:8-33HarmonicMaassFormsV2}. 
In particular, if we consider $\sa,\ssb \in L^\ast$ in the dual lattice, then 
 \be\label{SL2_IND}
    \hat\Theta_{A,\sa,\ssb,\ssc_1,\ssc_2} (-\tfrac{1}{\tau}) = \frac{\tau{\ex}(B(\sa,\ssb))}{\sqrt{|{\rm{det}}(A)|}}  \sum_{{\boldsymbol \mu \in{ L^\ast/L}}}  \hat \Theta_{A,{\boldsymbol \mu}+\ssb,-\sa,\ssc_1,\ssc_2}(\tau).
\ee

Its  $\bar{\tau}$-derivative, specifying the deviation from modularity of the holomorphic function $\Theta_{A,\sa,\ssb,\ssc_1,\ssc_2}$, is given by 
(see \cite[Proposition 8.33]{bringmann2017harmonic}) 
\begin{gather}\label{prop:8-33HarmonicMaassForms}\begin{split}
& \overline{\frac{\partial}{\partial\bar{\tau}} (\hat\Theta_{A,\sa,\ssb,\ssc_1,\ssc_2} (\tau))} =  \frac{1}{i \sqrt{2\Im(\tau)}} \, \times \\&  \sum_{j=1,2} (-1)^j { \sqrt{-|\ssc_j|^2 }} \sum_{\ssl\in P_{j}} g_{\frac{B(\ssc_j,\ssl+ \sa)}{|\ssc_j|^2},-B(\ssc_j,\ssb)}(-|\ssc_j|^2\tau) \overline{\sum_{\nu\in(\ssl+ \sa)^\perp_j + \langle \ssc_j \rangle^\perp_\mathbb{Z}} e(B(\nu,\ssb)) q^{\frac{\normsq{\nu}}{2}}} ~. 
\end{split}\end{gather} 
In the above, we denote by $\ssl^\perp_j:=\ssl - \frac{B(\ssc_j,\ssl)}{B(\ssc_j,\ssc_j)}\, \ssc_j$  the projection of the vector $\ssl$ onto the line that is orthogonal to $\ssc_j$ and write 
\begin{equation} 
\langle \ssc \rangle^\perp_\mathbb{Z} := \{ \lambda\in\mathbb{Z}^2 : 
B(\ssc, \lambda)=0 \}~.  
\end{equation}
Denote by $\ssc_{j, \perp}$  a generator of $\langle \ssc_j \rangle^\perp_\mathbb{Z}$, then $P_j$ can be chosen to be any representative of the quotient $\ZZ^2/   \ZZ \ssc_j \oplus  \ZZ \ssc_{j,\perp} $ and is in particular independent of $\sa$.

The function $g_{\rho,\beta} (\tau)$
\begin{equation}
    g_{\rho,\beta} (\tau) := \sum_{n\in \rho + \mathbb{Z}} n e^{2\pi i n\beta} q^{\frac{n^2}{2}},
\end{equation}
in the case of $\beta\in \ZZ$, is related to a unary theta function as defined in equation \eqref{dfn:theta} through 
\begin{equation}\label{eq:gtotheta}
    g_{\frac{r}{2m},\beta} (\tau) = \sum_{x\in \frac{r}{2m} + \mathbb{Z}}  x {\ex}({x \beta}) q^{\frac{nx^2}{2}}  = \frac{{\ex}(\frac{ r \beta}{2m})}{2m} \theta^1_{m,r} \left( \frac{\tau}{2m} \right) ~.
\end{equation}

Next we consider the special cases where the following additional properties hold:
\be\label{eqn:parameters_ind}
\ssc_j\in\ZZ^2,~B(\ssc_j, \ssb) \in \ZZ ~~,~~ -{\frac{1}{2}}|\ssc_j|^2 = \frac{n_j}{d_j} m ~~{\rm with}~  m\in \ZZ_+~~ 
{\rm for }~~ j=1,2~ , 
\ee
where $n_j,d_j$ are coprime natural numbers.
For these cases, the mixed-mock structure simplifies into the following expression

\begin{gather}\label{prop:8-33HarmonicMaassFormsV2}
\begin{split}
2 i {\sqrt{ \Im(\tau)}}\, \overline{\frac{\partial}{\partial\bar{\tau}} (\hat\Theta_{A,\sa,\ssb,\ssc_1,\ssc_2} (\tau))} & = \sum_{j=1,2} (-1)^j   {\sqrt{\frac{d_j}{m \, n_j}}} \sum_{\ssl\in P_{j}}
{\ex}\left(\frac{d_j B(\ssc_j,\ssb)B(\ssc_j,\ssl+\sa)}{2n_jm}\right)\, \\
&\times\theta^1_{n_j m,-d_jB(\ssc_j,\ssl+\sa)} \left(\frac{\tau}{d_j}\right)\overline{\sum_{\nu\in(\ssl+\sa)^\perp_j + \ZZ \ssc_{j, \perp} } e(B(\nu,\ssb)) q^{\frac{\normsq{\nu}}{2}}} ~.
\end{split}
\end{gather}

\subsection{Indefinite Theta Functions as Mock \texorpdfstring{$\hat{Z}$}{Z}-invariants}
\label{subsec:indef_zhat}

We now turn to the second method to construct false-mock pairs, which unlike the inverted Habiro series discussed in \S\ref{subsec:knot} is designed to produce topological three-manifold invariants with manifest (mixed) mock modular properties, making use of the indefinite theta functions discussed in  \S\ref{subsec:Regularised Indefinite Theta Functions}.  We will also comment on the relation between the two methods in \S\ref{subsec:relation}. 

For this we start by recalling the proposal for the ``false side" \texorpdfstring{$\hat{Z}$}{Z}-invariants for weakly negative three-manifolds in discussed in \S\ref{subsec:contourintegral}. We focus on weakly negative Seifert three-manifolds with three exceptional fibres, corresponding to plumbing graphs with one central node $v_0$ and three legs. They can be expressed in terms of the Seifert data as $M(b;\{q_i/p_i\}_{i=1,2,3})$, which moreover have 
\be
{\mathfrak e}=b+\sum_i \frac{q_i}{p_i} <0. 
\ee
Let $D$ be the smallest positive integer such that $\frac{D}{{\mathfrak e}p_i}\in \ZZ$ for all $i$, and $m=-DM_{v_0,v_0}^{-1}$ given by the inverse plumbing matrix. Consider Wilson lines associated to the end nodes of the three legs in the plumbing graph. As before, we label the corresponding highest weight representations with ${\bnu} =(\nu_1,\nu_2,\nu_3) \in \NN^3$ as in \S\ref{sec:Modularity} and denote the corresponding defect $\hat Z$-invariant with $\hat{Z}_b(M_3;W_{\bnu})$. 
 The equation \eqref{def:ZhatintegralWilson} 
gives 
\be\label{false_onemoretime}
\hat{Z}_b(M_3;W_{\bnu};\tau) =  C(q) \, \sum_{\substack{\hat\epsilon=(\epsilon_1,\epsilon_2,\epsilon_3)\\\in (\ZZ/2)^{3}}}
\tilde\chi_{\hat \epsilon}(\tau)+p_{\hat \epsilon}(\tau)
\ee
after performing the contour integration along all $z_{v}$ with $v\neq v_0$ \cite{CCFFGHP2201}, 
where $ \tilde\chi_{\hat \epsilon}$ either vanishes or is given by the false theta function 
\be\label{dfn:chi}
\tilde\chi_{\hat \epsilon}(\tau) = 
(-1)^{\hat\epsilon} \sum_{\substack{\ell \equiv mv_{\hat \epsilon}+a_{\hat \epsilon}\\\Mod{2mD}} } q^{\frac{\ell^2}{4Dm}} \, {\rm{sgn}}(\ell)
\ee
for some $v_{\hat \epsilon}, a_{\hat \epsilon} \in \ZZ$,  $p_{\hat \epsilon}(\tau)$ a finite polynomial of $q$,  and $C(q) =c q^\Delta$ for some $c\in \CC$ and $\Delta \in \QQ$ is the pre-factor that can be explicitly computed from  \eqref{def:ZhatintegralWilson} which we will mostly ignore from now on.  In particualr, we will 
continue working with the equivalence relation \eqref{dfn:equivalence}.     
We will also  use the shorthand notation  
\[
(-1)^{\hat\epsilon} := (-1)^{\sum_i\epsilon_i} .
\]

From the arguments reviewed in \S\ref{sec:intro}, when mapping $M_3$ to $-M_3$ we expect to map $\tau\to -\tau$, an action that naively causes \eqref{dfn:chi} to cease being a convergent function in $\HH$. 
Inspired by the construction of the regularised theta function of indefinite signature reviewed in \S\ref{subsec:Regularised Indefinite Theta Functions}, we propose to regularize the function obtained by $\tau\to -\tau$ by first turning the sum over a one-dimensional negative-definite lattice sum into a sum over a signature $(1,n)$ lattice, by adding an $n$-dimensional lattice $\Lambda$ with positive-definite norm. Schematically, 
we insert into \eqref{dfn:chi} the identity \be \label{positive_lattice}
1=\frac{\sum_{v\in \Lambda+\gamma} q^{ |v|_{\Lambda}^2/2} \mathrm{e}(B_\Lambda (v,\rho))  }{ \theta_{\Lambda, \gamma,\rho} }. \ee
Subsequently, we regularize the resulting  $(1,n)$ lattice theta function using the method discussed in the previous subsection,
 restricting the sum over the $(1,n)$ lattice 
\be\label{sign_fac}
\sum_{v\in \Lambda+\gamma}  \sum_{\substack{\ell \equiv mv_{\hat \epsilon}+a_{\hat \epsilon}\\\Mod{2mD}} } {\rm{sgn}}(\ell) {\ex}(B_\Lambda (v,\rho)) q^{-\frac{1}{4Dm}\ell^2+|v|_{\Lambda}^2/2}
\ee
to a positive-definite cone. 
{To be concrete, we let $n=1$ and let $\theta_{\Lambda, \gamma,\rho}$ to be the Dedekind eta function $$\eta(\tau)=\sum_{k\in\mathbb{Z}} (-1)^k q^{\frac{3}{2}\left(k-\frac{1}{6} \right)^2}~,$$ as in \cite{CFS1912}, in the regularisation procedure.} The restriction to the cone, specified by two negative-norm vectors $\ssc_1$, $\ssc_2$ as in \S\ref{subsec:Regularised Indefinite Theta Functions}, is implemented by including a factor 
\be\label{cone_support}
{1\over 2}|\rho^{\ssc_1,\ssc_2}(\sn)|= \frac{1}{2}| {\rm{sgn}}(B(\ssc_1,\sn))-{\rm{sgn}}(B(\ssc_2,\sn)) |
\ee
which is equal to one inside either of the two components of the cone and zero elsewhere, leading to the restricted lattice sum
\be\label{dfn:chi_minus}\begin{split}
\tilde\chi_{\hat \epsilon}^{\rm reg}(-\tau) &:=
\sum_{\substack{\ell \equiv mv_{\hat \epsilon}+a_{\hat \epsilon}\\\Mod{2mD}} } q^{-\frac{1}{4Dm}\ell^2} \,{\rm{sgn}}(\ell)\\&\times \left(\frac{1}{\eta(\tau)  }\sum_{k\in\mathbb{Z}} (-1)^k q^{\frac{3}{2}\left(k-\frac{1}{6} \right)^2}
\frac{1}{2}| {\rm{sgn}}(B(\ssc_1,\sn))-{\rm{sgn}}(B(\ssc_2,\sn)) |
\right) \end{split}
\ee
where $\sn^T=(\frac{1}{2mD}\ell,k-\frac{1}{6})$ denotes the vector in the real vector space underlying the  two-dimensional lattice with bilinear form \eqref{def:quadform} given by $A={\rm diag}(-2mD,3)$,  
and the factor inside the bracket is the regularizing term. 
Choosing moreover 
\begin{equation}\label{eq:propregvect1}
    \ssc_{1} = \left( \begin{matrix}
             1 \\
              0  \end{matrix} \right) ~, 
\end{equation}
we obtain the following expression in terms of the regularized indefinite theta function \eqref{def:regularisedindefinitetheta} 
\begin{align} \label{dfn:chi_minus3}\begin{split}
\tilde\chi_{\hat \epsilon}^{\rm reg}(-\tau) &=\frac{(-1)^{\hat\epsilon} {\ex}(\tfrac{1}{12})}{2\eta(\tau)} 
\sum_{\sn\in \sa_{\hat \epsilon}+\ZZ^2}\rho^{\ssc_1,\ssc_2}(\sn)\, q^{\frac{\sn^T A \sn}{ 2}} {\ex}(B(\ssb,\sn)) \\&= \frac{(-1)^{\hat\epsilon} }{\eta(\tau)}  \tilde\Theta_{A,\sa_{\hat\epsilon},\ssb,\ssc_1,\ssc_2}   (\tau)
\end{split}\end{align}
where $\Theta_{A,\sa_{\hat\epsilon},\ssb,\ssc_1,\ssc_2}$ is the regularized indefinite theta function with
\be A= \left( \begin{matrix} -2mD & 0 \\ 0& 3 \end{matrix} \right), ~
  \ssb=\left( \begin{matrix}
              0 \\
              \tfrac{1}{6}  \end{matrix} \right) ~,
\ee
 and $\sa^T_{\hat\epsilon}=({mv_{\hat \epsilon}+a_{\hat \epsilon} \over 2mD},-\frac{1}{6})$, as in \eqref{dfn:chi}.  
As a result, for those manifolds $M_3$ to which the above regularization is appropriate, we  propose 
\be\label{dfn:mockzhat1}
\hat{Z}_b(-M_3;W_{\bnu};\tau) =  C(q^{-1}) \,  
\sum_{\substack{\hat\epsilon=(\epsilon_1,\epsilon_2,\epsilon_3)\\\in (\ZZ/2)^{3}}} \chi^{\rm reg}_{\hat \epsilon}(-\tau)+p_{\hat \epsilon}(-\tau)
\ee
with $\tilde\chi_{\hat \epsilon}^{\rm reg}(-\tau) =0$ for those ${\hat \epsilon}$ with $\tilde \chi_{\hat \epsilon} =0$ and otherwise given as in \eqref{dfn:chi_minus3}: 
\be\label{indef_zhat}
\tilde\chi_{\hat \epsilon}^{\rm reg}(-\tau) := 
\frac{(-1)^{\hat\epsilon} }{\eta(\tau)} 
\tilde \Theta_{A,\sa_{\hat\epsilon},\ssb,\ssc_1,\ssc_2} (\tau).   
\ee
For instance, when $M_3=\Sigma(p_1,p_2,p_3)$ is a Brieskorn sphere, we have $D=1$, $m=p_1p_2p_3$, and $\tilde\chi_{\hat \epsilon}^{\rm reg}(-\tau)\neq 0$ for all ${\hat \epsilon}$.

From the discussion in the previous subsection, we see that the function \eqref{dfn:mockzhat1} enjoys a manifest relation to mixed mock modular forms as it is given by indefinite theta functions. To completely specify the function, it remains to specify the second negative-norm vector $\ssc_2$.

{
For instance, it is known that the $\hat Z$-invariant for the homological sphere $-\Sigma(2,3,7)$, predicted to be given by Ramanujan's order seven mock theta function in \cite{3d} and computed to the leading orders using knot surgery formula in \cite{GM}, admits expressions in terms of indefinite theta functions \cite{CFS1912,Park2106}. 
Concretely, recall that for $M_3=\Sigma(2,3,7)$  
we have for the defect $\hat Z$-invariants \eqref{false_onemoretime} 
\be
\tilde \chi_{\hat \epsilon}(\tau) = (-1)^{\sum_i {\epsilon_i}} \tilde \theta_{m,\rsub}, ~~\rsub = m-\sum_{i=1}^3 (-1)^{\epsilon_i} (1+\nu_i) \bar p_i. 
\ee
See the proof of Theorem \ref{thm:modularityBrieskorn} for details. 
For $M_3=-\Sigma(2,3,7)$, we take 
\be
   \ssc_{2}=\left( \begin{matrix}
              3 \\
              14  \end{matrix} \right)
\ee
and obtain 
\be
\hat{Z}_{b_0}(-M_3;W_{\bnu};\tau) =  C(q^{-1}) \, \frac{1}{\eta(\tau)}  \sum_{\substack{\hat\epsilon=(\epsilon_1,\epsilon_2,\epsilon_3)\\\in (\ZZ/2)^{3}}}
(-1)^{\hat\epsilon} \tilde\Theta_{A,\asub,\ssb,\ssc_1,\ssc_2}(\tau)
\ee
with $\asub^T=( {\rsub\over 2m},-\frac{1}{6})$. Putting the above proposal for $A$, $\asub$, $\ssc_1$, $\ssc_2$ and $\ssb$ together, 
one can show that the above expression with no defects (${\snu}=(0,0,0)$) is given, up to a prefactor, by the  order seven mock theta function $F_0$: 
\begin{gather}\begin{split}
\hat{Z}_{b_0}(-M_3;\tau) &
\sim  \frac{1}{\eta(\tau)}  \sum_{\substack{\hat\epsilon=(\epsilon_1,\epsilon_2,\epsilon_3)\\\in (\ZZ/2){^{3}}}}
(-1)^{\hat\epsilon} \tilde\Theta_{A,\sa_{\hat\epsilon},\ssb,\ssc_1,\ssc_2}(\tau) \\& = H^{42+6,14,21}_1(\tau) = q^{-1/168}F_0(q) =q^{-1/168} \sum_{n\geq 0}  \frac{q^{n^2}}{(q^{n+1};q)_n},\end{split}
\end{gather}
where $H^{42+6,14,21}_1(\tau)$ is the mock modular form that is a component of the corresponding optimal mock Jacobi form \cite{CD1605}, and is one of Ramanujan's order 7 mock theta function, $F_0$, up to a pre-factor $2q^{-1/168}$.  Similarly, an analogous treatment for  $M_3=\pm\Sigma(2,3,5)$ leads to 
$$
\hat{Z}_{b_0}(-\Sigma(2,3,5);\tau)   \sim H^{30+6,10,15}_1(\tau)  \sim \chi_0(\tau)
$$
where $H^{30+6,10,15}_1(\tau)$ is the mock modular form that is a component of the corresponding optimal mock Jacobi form \cite{CD1605}, and is one of Ramanujan's order 5 mock theta function, $\chi_0$, up to a pre-factor $2q^{-1/120}$.  A complete analysis for $M_3=\pm\Sigma(2,3,5)$  can be found in \S\ref{subsec:habiroexamples}. 
\vspace{5pt}

}

{
\begin{con}\label{sonj:23torus}
Consider the orientation-reversed Brieskorn spheres $-\Sigma(2,3,6\pm 1)$. The defect  $\hat Z$-invariants are given by 
\be\label{dfn:inv23surgery}
\hat{Z}_{b_0}(-\Sigma(2,3,6\pm 1);W_{\bnu};\tau) \sim  {1\over \eta(\tau)}  \sum_{\substack{\hat\epsilon=(\epsilon_1,\epsilon_2,\epsilon_3)\\\in (\ZZ/2)^{3}}}
(-1)^{\hat\epsilon} \tilde\Theta_{A,\asub,\ssb,\ssc_1,\ssc_2}(\tau)
\ee
with 
\begin{gather}\label{23r:parameters}\begin{split}
A= \left( \begin{matrix} -2m & 0 \\ 0& 3 \end{matrix} \right), ~ \ssc_{1} = \left( \begin{matrix}
             1 \\
              0  \end{matrix} \right),~\ssc_{2}=\left( \begin{matrix}
              3 \\
               2(6\pm 1) \end{matrix} \right), ~ \ssb=\left( \begin{matrix}
              0 \\
              \tfrac{1}{6}  \end{matrix}  \right), ~\asub^T=\left( {\rsub\over 2m},-\frac{1}{6}\right)
\end{split}
\end{gather}
where we have used the same notation as in Theorem \ref{thm:modularityBrieskorn} with the triplet given by $(p_1,p_2,p_3)=(2,3,6\pm 1)$. 
\end{con}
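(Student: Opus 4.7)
The plan is to establish Conjecture \ref{sonj:23torus} by a two-pronged strategy: first, construct the proposed right-hand side directly from the (known) false-side expression for $\hat Z_{b_0}(\Sigma(2,3,6r+1);W_{\bnu};\tau)$ via the regularization recipe of \S\ref{subsec:indefzhat}, and second, verify the resulting $q$-series agrees with an independent computation of $\hat Z_{b_0}(-\Sigma(2,3,6r+1);W_{\bnu};\tau)$ coming from the $(+p)$-surgery / inverted Habiro series of Conjecture \ref{con:habiro} applied to the trefoil. The base case $r=1$ is essentially worked out in \S\ref{subsec:indefzhat}, where the right-hand side reduces to Ramanujan's order-$7$ mock theta function $F_0$, matching the $F_K$ computation; the task is to extend this uniformly in $r$ and in the defect labels $\bnu$.

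First I would apply Theorem \ref{thm:modularityBrieskorn} to $(p_1,p_2,p_3)=(2,3,6r+1)$: this identifies $\hat Z_{b_0}(\Sigma(2,3,6r+1);W_{\bnu};\tau)$ with a linear combination of false theta functions $\tilde\theta_{m,r_{\bnu}}$ at level $m=6(6r+1)$, and hence puts it into the form \eqref{false_onemoretime}--\eqref{dfn:chi}. The proposal \eqref{dfn:mockzhat1} then applies: sending $\tau\mapsto -\tau$, inserting the partition of unity \eqref{positive_lattice} with the one-dimensional auxiliary lattice determined by $f_{x,\chi}$ for $x=2r+1$ and $\chi=1$, and regularizing the resulting signature-$(1,1)$ lattice sum by the cone factor \eqref{cone_support} with $\ssc_1,\ssc_2$ as in \eqref{23r:parameters}, yields precisely the right-hand side of \eqref{dfn:inv23surgery} by the rewriting \eqref{dfn:chi_minus3}. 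So the content of the conjecture is the statement that this particular choice of regularization data $(x,\chi,\ssc_2)$ is the correct one in the sense of matching a half-index of the physical theory $\mathcal{T}(-\Sigma(2,3,6r+1))$.

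To pin down that choice I would run the knot-surgery side in parallel. The manifold $\Sigma(2,3,6r+1)$ is realized as an integral surgery on the trefoil $T(2,3)$, so its orientation reverse corresponds to a positive integer surgery in the Laplace-transform framework of \S\ref{subsec:knot}, where formula \eqref{eq:surgery formulas plus Wil} (with its defect extension) applies and produces an explicit $q$-series in terms of the inverted Habiro coefficients $a_{-m-1}(T(2,3);q)$ together with the polynomials $P_n^{p,b+\nu}(q)$ and $\lambda_\ell^{n,\nu}(q)$. Expanding the right-hand side of \eqref{dfn:inv23surgery} by performing the $k$-sum in \eqref{dfn:chi_minus} as a geometric-type series along the good direction $\ssc_1=(1,0)^T$ and resumming yields an expression involving $q$-Pochhammer symbols of the same shape. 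The proof then reduces to a $q$-hypergeometric identity between the two sides, which can be checked order-by-order in $q$ and, for the uniform statement, by an Andrews--Bailey or WZ/qZeil argument. A complementary consistency check is that the radial limits at $\tau\to -1/k$ of both sides agree with the WRT invariants of $\Sigma(2,3,6r+1)$, via \eqref{eqn:leaking_mixed} and the chirality relation \eqref{eqn:flipWRT}.

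The main obstacle is the uniform $q$-series identity in $r$ and $\bnu$. The regularization prescription of \S\ref{subsec:indefzhat} has an a priori ambiguity in the second cone vector $\ssc_2$, and one must show that the choice $\ssc_2=(x,2(6r+1))^T$ in \eqref{23r:parameters} is the unique one (up to equivalences preserving the holomorphic part) compatible with (i) the $(+p)$-surgery output of \eqref{eq:surgery formulas plus Wil}, and (ii) the vector-valued mock modular transformation law dictated by Conjecture \ref{con:modularity}.2 and Theorem \ref{thm:modularityBrieskorn}, which constrains the shadow via the general formula \eqref{prop:8-33HarmonicMaassFormsV2}. Secondary technical bookkeeping — tracking the finite-polynomial corrections $p_{\hat\epsilon}(-\tau)$, separating the exponentially singular part of the mock completion at each cusp, and absorbing the prefactor $C(q^{-1})$ — must be done consistently on both sides, and the defect case requires controlling the twisted offset $\asub$ through all eight components $\hat\epsilon\in(\ZZ/2)^{\otimes 3}$ at once.
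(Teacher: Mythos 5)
This statement is a conjecture, so there is no proof in the paper to match; the relevant comparison is with the paper's supporting derivation, and your overall architecture reproduces it: construct the right-hand side of \eqref{dfn:inv23surgery} by regularizing the $\tau\mapsto-\tau$ continuation of the false side as in \eqref{dfn:mockzhat1}, cross-check against a surgery computation, and use the mock modular transformation law (Theorem \ref{thm:mod_23surgery}) to constrain the remaining freedom in the regularization data.

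The one concrete misstep is your choice of cross-check. You assert that $\Sigma(2,3,6r+1)$ is an integral surgery on the trefoil and propose to compare with the integral $(+p)$-surgery inverted-Habiro formula \eqref{eq:surgery formulas plus Wil}. But $\Sigma(2,3,6r+1)=S^3_{-1/r}(T(2,3))$, so for $r\geq 2$ the orientation reverse is a $+1/r$ surgery, not an integral one, and the inverted-Habiro machinery of \S\ref{subsec:knot} (explicitly restricted there to integral surgeries) does not reach these manifolds from the trefoil; your route covers only $r=1$. The paper's actual comparison, in \S\ref{subsec:relation}, is with the regularized $+1/r$-surgery formula (Conjecture 5 of \cite{Park2106}): the theta-function quotient appearing in that formula is precisely $f_{x,\chi}$ with $x=2r+1$, $\chi=1$, and the analysis of the summation region $|\ell|<(k-1)/2$ is what produces the cone vector $\ssc_2=(x,2(6r+1))^T$. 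Consequently the ``ambiguity in $\ssc_2$'' that you single out as the main obstacle is resolved by that comparison rather than by an integral-surgery $q$-hypergeometric identity. Finally, note that even the paper's comparison does not prove the conjecture, since both sides are themselves conjectural definitions of a physically defined half-index; the honest status is mutual consistency of the two prescriptions together with the modularity and radial-limit checks, which is exactly what the paper records.
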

}

Moreover, when defect invariants are taken into account, one can show that they are indeed components of vector-valued mock modular forms for ${\rm SL}_2(\ZZ)$ with shadows that are given by the unary theta function, in accordance with the false-mock conjecture and providing evidence for Conjecture \ref{con:modularity} and \ref{conj:Jac}.

{
\renewcommand{\thethm}{2-1}
 \begin{thm}
     \label{thm:mod_23surgery}
The conjectural expression 
\eqref{dfn:inv23surgery} for the $\hat Z$-invariants has the property that 
{${\hat{Z}_{b_0}(-\Sigma(2,3,6\pm 1);W_{\bnu};\tau)}\sim f_{m,r_{\bnu}}^{m+K}$}, where   
$f_{m,r_{\bnu}}^{m+K}$ is a mock modular form with shadow given by $\theta^{1,m+K}_{r_{\nu}}$, where $r_{\bnu}=m-\sum_i  (1+\nu_i)\bar p_i $, in the same notation as in Theorem \ref{thm:modularityBrieskorn}. 
\end{thm}}
{
\begin{proof}
We refer to Appendix \ref{app:mock-proof} for the relevant calculation.
\end{proof}
}

{
One might wonder about the origin and the interpretation of the conjectural expression in Conjecture~\ref{sonj:23torus}, in particular the choice of the parameters \eqref{23r:parameters}. 
Though we do not offer a complete answer to this interesting and important question here, we hope to make these choices more transparent with an analysis of the structure of the sum $\tilde\Theta_{A,\asub,\ssb,\ssc_1,\ssc_2}$, how it reproduces the lattice sum $\tilde\chi_{\hat \epsilon}^{\rm reg}(-\tau)$ (cf. \eqref{dfn:chi_minus}), and how the desired shadow (cf. Theorem \ref{thm:mod_23surgery}) is captured by the lattice sum \eqref{prop:8-33HarmonicMaassForms}. We see that, in the case of $-\Sigma(2,3,6\pm 1)$, our proposed lattice sum captures the 1-dimensional  lattice sum in the negative direction (arising from changing $q\to q^{-1}$ in the false theta function) as a sum along $\ssc_1$, and the 1-dimensional  lattice sum in the positive direction (arising from the Dedeking eta function)  as a sum along $\ssc_1^\perp$. The bilinear form $A$ is chosen such that the norms of the 1-dimensional  lattices are reproduced. The lack of active partcipation of $\ssc_2$, giving the other boundary of the cone, is reflected in the fact that the $\ssc_2$ contribution to the shadow conspire to vanish (see Lemma \ref{lem:c2_contribution}).
}

This is however not the only possibility. 
For instance, one can instead have   $\ssc_1$ and $\ssc_2$ that related by a symmetry, 
and realize the
 1-dimensional  lattice sum in the negative direction as a sum along $\ssc_1$ and along $\ssc_2$, and the 1-dimensional  lattice sum in the positive direction as a sum along $\ssc_1^\perp$ and $\ssc_2^\perp$. Also the shadow will now receive contributoin from both vectors in a symmetric manner. In this way, we obtain the following conjecture. See also \S\ref{sec:mockfalseIHSM24p8} for an analysis of this case from a surgery point of view. 
{
\begin{con}\label{sonj:H24}
Consider the Seifert manifold $M_{3}=-M\left( -1; \frac{1}{2}, \frac{1}{3}, \frac{1}{8} \right)$. The defect  $\hat Z$-invariants are given by 
\be\label{dfn:invH24_1}
\hat{Z}_{b_0}(M_{3};W_{\nu};\tau) \sim \, {1\over 2 \eta(\tau)} 
 \tilde\Theta_{A,\sa_{e,\nu},\ssb_e,\ssc_1,\ssc_2}(\tau), ~~ \nu=1,3
\ee
and 
\be\label{dfn:invH24_2}
\hat{Z}_{b_\ell}(M_{3};W_{\nu};\tau)\sim \, {(-1)^{\nu/2}\over2 \eta(\tau)} \left(
 \tilde\Theta_{A,\sa_{o,\nu},\ssb_{o,1},\ssc_1,\ssc_2}(\tau)+ (-1)^{\ell}\tilde\Theta_{A,\sa_{o,\nu},\ssb_{o,2},\ssc_1,\ssc_2}(\tau)\right) ,~~  \nu=0,2, \ell=0,1. 
\ee
The definition of the indefinite lattice regulariztion  is given by  
\begin{gather}\label{248_parameters1}\begin{split}
A= \left( \begin{matrix} -1 & 0 \\ 0& 4 \end{matrix} \right), ~ \ssc_{i} = \left( \begin{matrix}
             4 \\
              \epsilon_i  \end{matrix} \right), ~   \epsilon_i= (-1)^{i+1} , i=1,2,
\end{split}
\end{gather}
and the relevant lattice vectors are
\begin{gather}\label{248_parameters2}
\begin{split}
&\ssb_e = \begin{pmatrix} 0 \\ \tfrac{1}{8} \end{pmatrix}, 
\ssb_{o,1} = \begin{pmatrix} -\tfrac{1}{2} \\ -\tfrac{1}{4} \end{pmatrix}, \quad
\ssb_{o,2} = \begin{pmatrix} 0 \\ -\tfrac{1}{8} \end{pmatrix}, \\
&\sa_{e,1} = \begin{pmatrix} \tfrac{1}{2} \\ \tfrac{1}{4} \end{pmatrix}, \quad
\sa_{e,3} = \begin{pmatrix} \tfrac{3}{2} \\ \tfrac{1}{2} \end{pmatrix}, \quad
\sa_{o,0} = \begin{pmatrix} 0 \\ \tfrac{1}{8} \end{pmatrix}, \quad
\sa_{o,2} = \begin{pmatrix} 0 \\ \tfrac{3}{8} \end{pmatrix}.
\end{split}
\end{gather}
See \S5.3.2 for the detailed description of the defects involved. 
\end{con}}

{
\renewcommand{\thethm}{2-2}
 \begin{thm}\label{thm:mod_248}
The conjectural expression 
\eqref{dfn:invH24_1}-\eqref{dfn:invH24_2} for the $\hat Z$-invariants has the property that 
{${\hat{Z}_{b_\ell}(-M_3);W_{\bnu};\tau)}\sim f_{r_{\ell,\nu}}^{m+K}$}, where 
$f_{r_{\ell,\nu}}^{m+K}$ is a mock modular form with shadow given by $\theta^{1,m+K}_{r_{\ell,\nu}}$, where $m=24$ and $K=\{1,8\}$ and $r_{\ell,\nu}\in\{1,2,5,7,8,13\}$. 
\end{thm}}

{
\begin{proof}
The mock modular properties, in particular the shadow, can be derived via a straightforward application of \eqref{prop:8-33HarmonicMaassForms} to the indefinite theta functions in  \eqref{dfn:invH24_1}-\eqref{dfn:invH24_2}. Moreover, after verifying the mock modular properties, by comparing the leading coefficients in the $q$-expansion one can prove that \eqref{dfn:invH24_1}-\eqref{dfn:invH24_2} coincide with the components of one of the optimal mock Jacobi forms classified in \cite{CD1605}: 
\begin{gather}\begin{split}
& \hat{Z}_{b_\ell}(M_{3};W_{\nu};\tau)\sim{1\over 2 \eta(\tau)} 
 \tilde\Theta_{A,\sa_{e,\nu},\ssb_e,\ssc_1,\ssc_2}(\tau)
= \begin{cases}
     H_2^{24+8}  ,~&\nu=1, \ell=0 \\ 
       H_8^{24+8}(\tau) ,~&\nu=3, \ell=0
 \end{cases}\\&\\ &\hat{Z}_{b_\ell}(M_{3};W_{\nu};\tau)\sim\\
& {-1\over2 \eta(\tau)} \left(
\tilde\Theta_{A,\sa_{o,\nu},\ssb_{o,1},\ssc_1,\ssc_2}(\tau)+ (-1)^{\ell}\tilde\Theta_{A,\sa_{o,\nu},\ssb_{o,2},\ssc_1,\ssc_2}(\tau)\right)
 = \begin{cases}
     H_1^{24+8}(\tau) ,~&\nu=0, \ell=0 \\ 
       H_7^{24+8}(\tau) ,~&\nu=0,\ell=1 \\
          H_5^{24+8}(\tau) ,~&\nu=2, \ell=0 \\ 
       H_{13}^{24+8}(\tau) ,~&\nu=2,\ell=1 \\
 \end{cases} \\ 
 \end{split}
 \end{gather}
\end{proof}
}
\renewcommand{\thethm}{\arabic{thm}}

{
\begin{cor}\label{cor:mock_mod}
The mock modular property of the functions in \eqref{dfn:inv23surgery} and \eqref{dfn:invH24_1}-\eqref{dfn:invH24_2} is compatible with Conjecture \ref{con:modularity}.2. 
\end{cor}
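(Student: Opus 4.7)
The plan is to derive this corollary as an immediate consequence of Theorem~\ref{thm:mod_23surgery}, by assembling its component-wise statement into the vector-valued form demanded by Conjecture~\ref{con:modularity}.2, with the same Weil representation $\Theta^{m+K}$ (for $m=6(6r+1)$ and $K=\{1,6,2(6r+1),3(6r+1)\}$) that controlled the opposite orientation in Theorem~\ref{thm:modularityBrieskorn}.

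First I would read off from Theorem~\ref{thm:mod_23surgery} that, up to the equivalence $\sim$ of Conjecture~\ref{con:modularity}, varying $\bnu\in\NN^3$ produces mock modular forms $f^{m+K}_{m,r_{\bnu}}$ whose shadows are the unary theta functions $\theta^{1,m+K}_{r_{\bnu}}$ with $r_{\bnu}=m-\sum_i(1+\nu_i)\bar p_i$. By the very definitions \eqref{dfn:projB} and \eqref{linear_weilB}, the tuple $(\theta^{1,m+K}_r)_{r\in\sigma^{m+K}}$ is a weight $3/2$ vector-valued cusp form transforming in the Weil representation $\Theta^{m+K}$.

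Next I would invoke the standard correspondence that, if the shadows of a family of mock modular forms form a vector-valued modular form in a representation $\rho$, then the tuple of holomorphic mock forms itself transforms as a vector-valued mixed mock modular form in the dual representation $\rho^{*}$. This promotes the component-wise statement of Theorem~\ref{thm:mod_23surgery} into a weight $1/2$ vector-valued (mixed) mock modular form $(f^{m+K}_{m,r})_{r\in\sigma^{m+K}}$ for $\widetilde{{\rm SL}}_2(\ZZ)$ transforming in the dual of $\Theta^{m+K}$, which is exactly the type of object demanded by Conjecture~\ref{con:modularity}.2.

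To close the argument I would verify surjectivity of the map $\bnu\mapsto r_{\bnu}\Mod{2m}$ onto $\sigma^{m+K}$, which uses the same Chinese Remainder Theorem step already carried out in the proof of Theorem~\ref{thm:modularityBrieskorn}; combined with the previous paragraph, this identifies ${\rm span}(\hat Z(-\Sigma(2,3,6r+1)))$ with ${\rm span}_{\CC}\{f^{m+K}_{m,r}:r\in\sigma^{m+K}\}$ up to $\sim$. The main delicate point---rather than a genuine obstacle---is that the equivalence relation $\sim$ allows the addition of finite polynomials in $q^{\pm 1}$ and overall prefactors $q^{\Delta}$: one has to check that these ambiguities preserve the mock-modular transformation up to weakly holomorphic modular pieces that do not affect the identification of the abstract $\widetilde{{\rm SL}}_2(\ZZ)$ representation. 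I expect no new analytic input beyond Theorem~\ref{thm:mod_23surgery} to be required.
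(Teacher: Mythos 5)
Your proposal is correct and follows essentially the same route as the paper, whose proof is simply the observation that Theorem~\ref{thm:mod_23surgery} (componentwise mock modularity with shadow $\theta^{1,m+K}_{r_{\bnu}}$) combined with the Chinese-Remainder/surjectivity argument from the proof of Theorem~\ref{thm:modularityBrieskorn} yields the vector-valued statement of Conjecture~\ref{con:modularity}.2. You have merely spelled out the assembly into the dual Weil representation and the harmlessness of the $\sim$-ambiguities, which the paper leaves implicit.
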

\begin{proof}
The statement follows from Theorem \ref{thm:mod_23surgery} and Theorem \ref{thm:mod_248}, in combination with the proof of Theorem \ref{thm:modularityBrieskorn} and the calculation in \S \ref{sec:24+8-False-Surgery}. 
\end{proof}
}

\subsection{Relation to Surgeries Along Torus Knots}
\label{subsec:relation}

In this subsection we will explain the relation between the conjectural surgery formula by Park (Conjecture 5, \cite{Park2106}) and regularized indefinite theta functions, and comment on their mock modular properties. 

Consider the torus knot $T(s,t)$, for instance the right-handed trefoil $\mathbf{3}^{r}_{1}=T(2,3)$ and its mirror $\mathbf{3}^{l}_{1}=T(2,-3)$.
First, recall that,  a $-1/r$-surgery \eqref{eq:surgeryS2} along the torus knot $T(s,t)$ leads to the Brieskorn sphere $S^3_{-1/r}(T(s,t))=
\Sigma(s,t,str+1)$.
Similarly, considering the mirror of the torus knot $m(T(s,t))=T(s,-t)$, its  $+1/r$-surgery gives the orientation-reversed manifold 
\be
S^3_{+1/r}(T(s,-t))=- 
\Sigma(s,t,str+1). 
\ee
From 
\be\label{FKTorus}
F_{T(s,t)}(x,q)=-q^{\frac{(s-1)(t-1)}{2}} \sum_{k>0}\varepsilon_k q^{{k^2-(st-s-t)^2 \over 4st}}(x^{k/2}-x^{-k/2})
\ee
where 
\(\varepsilon_k=\sum_{\epsilon\in\ZZ/2}(-1)^\epsilon  (\delta_{k+ (st+(-1)^\epsilon s+ t) \Mod{2st}}+\delta_{k- (st+(-1)^\epsilon s+ t) \Mod{2st}})\), 
for the torus knot $K=T(s,t)$, we obtain via $F_{m(K)}(x,q)=F_{K}(x,q^{-1})$ the two-variable series 
for the negative torus knot $m(K)=T(s,-t)$. 
Now, applying Conjecture 5 of \cite{Park2106} to this case leads to the prediction that $\hat Z_0^{{\rm Reg. Surg.}}(M_3) =\hat Z_0(M_3)$, with
\begin{gather}\begin{split}
&\hat Z_0^{{\rm Reg. Surg.}}(S^{3}_{1/r}(T(s,-t));\tau) = \hat Z_0^{{\rm Reg. Surg.}}(-\Sigma(s,t,str+1);\tau)   \\
&\sim \sum_{k\geq 0} \, \varepsilon_k \,q^{-{k^2-(st-s-t)^2 \over 4st}} \sum_{\epsilon_2\in \ZZ/2 }(-1)^{\epsilon_2}q^{-{r\over 4} (k-(-1)^{\epsilon_2}{1\over r})^2} \left(\frac{\sum_{\substack{\ell\in \ZZ \\ |\ell|<{k-1\over 2}}} (-1)^{\ell} q^{{2r+1\over 2}(\ell+ {1\over 2(2r+1)})^2}}{\sum_{\ell\in \ZZ} (-1)^{\ell} q^{{2r+1\over 2}(\ell+ {1\over 2(2r+1)})^2}}
\right) . 
\end{split}\end{gather}

By carefully analysing the region $|\ell|<{k-1\over 2}$ one can show that the above expression is equivalent to the following
\be\label{dfn:inv23surgeryR}
\hat{Z}^{{\rm Reg. Surg.}}_{b_0}(-\Sigma(s,t,str+1);W_{\bnu};\tau) \sim \, {1\over f_{2r+1,1}(\tau)}  \sum_{\substack{\hat\epsilon=(\epsilon_1,\epsilon_2,\epsilon_3)\\\in (\ZZ/2)^{3}}}
(-1)^{\hat\epsilon} \Theta_{A,\asub,\ssb,\ssc_1,\ssc_2}(\tau)
\ee
with the 1-dimensional theta function
\begin{equation}\label{eq:theta-function}
  f_{x,\chi}(\tau) :=\sum_{k\in\mathbb{Z}} (-1)^k q^{\frac{x}{2}\left(k-\frac{\chi}{2x} \right)^2} ~
\end{equation}
and with the input data  
\begin{gather}\label{23r:parametersR}\begin{split}
A= \left( \begin{matrix} -2m & 0 \\ 0& x \end{matrix} \right), ~ \ssc_{1} = \left( \begin{matrix}
             1 \\
              0  \end{matrix} \right),~\ssc_{2}=\left( \begin{matrix}
              x \\
               2p_3 \end{matrix} \right), ~ \ssb=\left( \begin{matrix}
              0 \\
              \tfrac{1}{2x}  \end{matrix}  \right), ~\asub^T=\left( {\rsub\over 2m},-\frac{\chi}{2x}\right)~,~x=2r+1
\end{split}
\end{gather}
 and we have used the same notation as in the proof of Theorem \ref{thm:modularityBrieskorn} with the triplet given by $(p_1,p_2,p_3)=(s,t,str+1)$. In \eqref{dfn:inv23surgeryR}, we have generalized the analysis to the case with Wilson lines.

{
After establishing their relation to regularised indefinite theta functions, we can now easily analyse the mock modular properties of $\hat{Z}^{{\rm Reg. Surg.}}$. 
First, note that it is given by a mixed mock modular form $\Theta_{A,\asub,\ssb,\ssc_1,\ssc_2}$ divided by a component of a $r$-dimensional vector-valued modular form for $SL_2(\mathbb Z)$. We therefore conclude that $\hat{Z}^{{\rm Reg. Surg.}}$ is a mixed mock modular form only when $r=1$, in which case $f_{2r+1,1}=\eta$ is the Dedekind eta function. On the other hand, as explicit calculations in Appendix \ref{app:mock-proof} shows, the sum of products of a  holomorphic and an anti-holomorphic functions in the  expression for the shadow of the regularized indefinite theta function \eqref{prop:8-33HarmonicMaassForms} simplifies into a product a  holomorphic and an anti-holomorphic functions only in the case $s=2, t=3$. The above analysis leads us to the conclusion that in this family, only $\hat{Z}^{{\rm Reg. Surg.}}_{b_0}(-\Sigma(2,3,7);W_{\bnu};\tau)$ are mock modular forms, and indeed coincide with the expressions in Conjecture \ref{sonj:23torus} and coincide with order 7 mock theta functions of Ramanujan. 
Nevertheless, we note that other $\hat{Z}^{{\rm Reg. Surg.}}_{b_0}(-\Sigma(s,t,st+1);W_{\bnu};\tau)$ are mixed mock modular forms which moreover also possess the desirable quantum modular property \eqref{eqn:leaking}, since the ratio between the additional holomorphic factors in the mixed mock shadow \eqref{prop:8-33HarmonicMaassForms} and the denominator $\eta(\tau)$ either vanishes or diverges at the cusps. 
}

\subsection{Going to the Other Side Using Appell-Lerch Sums}
In this last subsection, we take a brief detour to discuss a uniform and canonical way to associate a mock modular form to a false theta function $\tilde\theta_{m,r}$, for all pairs $(m,r)$. 
This should highlight how tricky the problem of going to the other side is, and in particular that mock modularity alone is not  sufficient to fix a unique answer. 

Given a false theta function $\tilde\theta_{m,r}(\tau)$ \eqref{eq:intro:falsetheta}, one can canonically associate to it a partner $\tilde\theta^-_{m,r}(\tau)$ as a function on $\HH$, via the Appell-Lerch sums. Moreover, one can show that 
$\tilde\theta^-_{m,r}$ is a mock modular form and is related to the characters of a vertex algebra \cite{cheng2022cone}, constructed as a cone algebra reviewed in \S\ref{sec:VA}. 
Recall that for Seifert manifolds $M_3$ with three singular fibres, those which are weakly negative have $\hat Z(M_3;W_{\bnu})$ invariants that are, up to an overall rational power of $q$ and possibly the addition of a finite $q$-polynomial, linear combinations of $\tilde\theta_{m,r}$ for some fixed $m$ \cite{plumbing}. 
From the false-mock Conjecture \ref{con:False-Mockv2} and from the above, it might seem that one can now easily compute the corresponding $\hat Z(W_{\bnu})$-invariants for the orientation-reversed manifold $-M_3$ by simply replacing each $\tilde\theta_{m,r}$ with $\tilde\theta^-_{m,r}$ (and flipping $q\leftrightarrow q^{-1}$ in the overall $q$-power and the finite polynomial). 
However, it is rather easy to see that this canonical and uniform method is incompatible with many of the concrete examples computed using various methods in this paper and in \cite{GM,Park2106}. 

To explain it, recall that by going through the relation to Fine's $q$-hypergeometric series and the universal mock theta function one arrives at the following partner of the false theta function $\tilde\theta_{m,r}(\tau)$ for any positive integer $m$ and $0<r<2m$
\cite{FolsomQuanJac}: 
\begin{gather}\begin{split}
 & \tilde \theta^-_{m,r}\left( \tau \right) = - \frac{\eta\left( 2m \tau \right)}{\eta^{2}\left( 4m\tau \right)}q^{\frac{m}{4}}\\&\times\left[ q^{-\frac{r^{2}}{4m}}A_{2}\left( \left( r-m \right)\tau,-2m\tau,4m\tau \right) - q^{\frac{\left( r-2m \right)^{2}}{4m}}A_{2}\left( \left( m-r \right)\tau,-2m\tau,4m\tau \right) \right],
\end{split}\end{gather}
where 
the Appell-Lerch sum is given by 
\begin{align}
  A_{2}\left( a\tau, -2m\tau; 4m\tau \right) 
  = q^{a}\sum_{n\in\mathbb{Z}}\frac{q^{4mn^{2}+2mn}}{1-q^{a+4mn}}, ~a\in \QQ
\end{align}
and is shown to have mock modular properties \cite{2008arXiv0807.4834Z}.  
A quick computation shows that 
\begin{equation}\label{polar_part}
  \tilde \theta^-_{m,r}\left( -\tau \right) = q^{-\frac{r^{2}}{4m}}\left( 1+O\left( q \right) \right)~ {\rm for}~ 0<r<m, 
\end{equation}
which can be extended to all $r\in \ZZ$ by the relation $  \tilde \theta^-_{m,r}=  \tilde \theta^-_{m,r+2m}=  -\tilde \theta^-_{m,-r}$. 

From \eqref{polar_part} we conclude that the $  \tilde \theta_{m,r} \mapsto   \tilde \theta^-_{m,r}$ swap does not always reproduce the $\hat Z(-M_3)$ that one believes to be correct. 
For instance, for $M_3=\Sigma(2,3,7)$, the $\hat Z(M_3)$-invariant is given by 
\begin{equation}
\hat Z(M_3;\tau)= q^{\frac{83}{168}}\sum_{0<r<42} \left(\frac{r}{21}\right) \tilde \theta_{42,r}(\tau) 
= q^{\frac{83}{168}}\left(\tilde \theta_{42,1}+\tilde \theta_{42,41}-\tilde \theta_{42,29}-\tilde \theta_{42,13}\right) (\tau) . 
\end{equation}
 The flipped invariant $\hat Z(-M_3)$, on the other hand, is expected to be given by the celebrated order 7 mock theta function of Ramanujan \cite{3d,GM,CFS1912}: 
\begin{gather}\begin{split}
q^{-\frac{1}{168}}  F_0(\tau)&
=q^{-\frac{1}{168}}  \sum_{n=0}^\infty \frac{q^{n^2}}{(q^{n+1};q)_n}\\
&= -q^{-\frac{1}{168}}\left( 1 + q + q^3 + q^4 + q^5 + 2 q^7 + q^8 + 2 q^9 + q^{10} + 2 q^{11}  + \mathcal{O}(q^{12})\right) ~.
\end{split}\end{gather}
It is easy to see that this is not the same as $\left(\tilde \theta^-_{42,1}+\tilde \theta^-_{42,41}-\tilde \theta^-_{42,29}-\tilde \theta^-_{42,13}\right)$, despite having the same mock modularity and hence essentially the same radial limit\footnote{Meaning the same radial limit after discarding the exponential singularities \eqref{eqn:leaking}.}, just from the leading behaviour near $\tau\to i\infty$ using \eqref{polar_part}.

\section{Mock Invariants and Vertex Algebras}
\label{sec:VA}

Vertex operator algebras have been shown to be related to 3d supersymmetric conformal field theories, in particular to the $\hat{Z}$-invariants, in rich and deep ways.
In the context of negative Seifert manifolds with three or four exceptional fibres the $\hat{Z}$-invariant has been shown \cite{3d,cheng20223} to be given by a linear combination of characters of a class of logarithmic vertex operator algebras.
The insertion of defects or a change of the choice of the ${\rm Spin}^c$-structure, while modifying the modules of the vertex algebra computed, does not change the underlying algebra. {This observation is compatible with the expectation that the underlying VOA is associated with the underlying three-dimensional quantum field theory, and not to a specific boundary condition. }
In these cases, this has led to a construction of a vertex algebra from a given three-dimensional manifold.

In this section we extend the relation between  $\hat{Z}$-invariants and vertex operator algebras to the ``other side" of mock invariants, by systematically associating vertex operator algebras to the mock $\hat{Z}$ invariants that have been given expressions in terms of  indefinite theta function. 
This direct connection to vertex algebras, together with the manifest mock modular property, is one of the most important advantages of having an indefinite theta function expression for $\hat{Z}$-invariants for non-weakly-negative three-manifolds.

\subsection{Cone Vertex Algebras}

It is well-known that from an integral lattice $L$ one can construct a Lie algebra and the Lie algebra module $V_L$, which has the structure of vertex algebra and is moreover a vertex operator algebra when $L$ is positive definite. 
We will now briefly describe how one can associate a vertex operator algebra to a positive-definite cone in a lattice of indefinite signature. We will skip many details and simply refer to \cite{cheng2022cone} (\S2, \S3) for a more detailed account of these cone algebras, mostly based on the treatment of \cite{duncan2017umbral}. 
For a cone $C\subset L$ with $0\in C$ that is closed under addition, the submodule $V_C$ of $V_L$ generated by elements corresponding to $\lambda\in C$
has the structure of a sub-vertex algebra of $V_L$, with the same conformal element. Furthermore, $V_C$ has the structure of vertex operator algebra when the lattice bilinear form is positive definite when restricted to $C$.   
Given $\gamma\in L\otimes_\ZZ \QQ$, for any $C'\subset L+\gamma$ such that $C'+C\subset C'$, the corresponding $V_{C'}$ can be endowed with the structure of a twisted module over $V_C$. 

Suppose that  the cone 
\begin{equation}\label{dfn:cone}
  P := \left\{ \sum_{i=1}^{\operatorname{rk}(L)} a_{i}\ssd_{i}\in L | a_{i}\ge 0,\ \forall\ i =1,\dots, \operatorname{rk}(L)\right\}
\end{equation}
for some $\ssd_{i}\in L$ is such a positive-definite subset of $L$. Then $V_P$ has a vertex operator algebra structure. 
Given a generating set $\{\boldsymbol{\varepsilon}_1,\dots, \boldsymbol{\varepsilon}_{{\operatorname{rk}(L)}}\}$  of $L$ and $\sa^+=\sum_i a_i \boldsymbol{\varepsilon}_i$, $\sa=(a_1,\dots,  a_{\operatorname{rk}(L)}) \in \QQ^{\operatorname{rk}(L)}$, define  $\sa^- =-\sa^+ + \sum_i \boldsymbol{\varepsilon}_i$ and note that $C'+P\subset C'$ when $C'=P+\sa^\pm$. 
The twisted module of $V_P$ that will be the main object of interest here is 
\be\label{dfn:voam}
V_\sa := V_{\sa^++P} \oplus V_{\sa^-+P} .
\ee
As usual, this twisted module is simply a module when $\sa^+, \sa^- \in L^\ast$ are in the dual lattice.

\subsection{Mock VOA Characters}

Next we will establish the relation between certain trace functions of the modules \eqref{dfn:voam} and the indefinite theta series of the kind encountered in the last section as proposals for $\hat{Z}$-invariants. For the sake of concreteness we will now focus on rank two lattices with signature $(1,1)$. Generalization to lattices with signature $(1,d)$ is straightforward. 
Given $\ssb = \sum_{i=1}^2 b_i \ssd_i$ with $b_i\in \QQ$,  consider the following automorphism $g_\ssb$ of the module $V_\sa:$  it acts as a multiplication by the phase ${\ex}(B(\ssb,{\boldsymbol \lambda-\sa^+}))$ on elements corresponding to ${\boldsymbol \lambda} \in P+\sa^+$, and multiplication by the phase $-{\ex}(B(\ssb,{-\boldsymbol \lambda-\sa^+}))$ on elements corresponding to ${\boldsymbol \lambda} \in P+\sa^-$.

The main object we will be interested in is the trace function
\be\label{dfn:trace}
T_{\sa,\ssb}(\tau) := {\rm Tr}_{V_\sa}\left( g_\ssb q^{L(0)-c/24}\right)
\ee
which is a well-defined function on $\tau \in \HH$ since the lattice bilinear form is positive definite when restricted to the cone $P$. 

Next we will demonstrate the relation between $T_{\sa,\ssb}$ and the indefinite theta functions seen in the previous section, generalizing Lemma 3.1 and Theorem 3.2 of  \cite{cheng2022cone}.

\begin{thm}
Consider the lattice bilinear $A= {\rm diag}(-2p\bar p, x)$ with $x,p,\bar p\in \NN$ satisfying $\bar p x >2p$, and the positive-definite cone \eqref{dfn:cone} with 
\be 
\ssd_1 =\ssc_{1,\perp} =\left( \begin{matrix}
              0 \\
              1  \end{matrix} \right), ~\ssd_2 =\ssc_{2,\perp} = \left( \begin{matrix}
              1 \\
              \bar p  \end{matrix} \right)\ . 
\ee
Consider $\tilde \sa =  \sum_{i}\tilde a_i \ssd_i $ with $~0<\tilde a_i<1$. The trace function of the corresponding twisted module $V_{\sa}$ of the cone algebra is given in terms of the indefinite theta function by 
\be 
T_{\tilde \sa, \ssb}(\tau) = -{1\over \eta^2(\tau)}  \tilde\Theta_{A,\tilde\sa,\ssb,\ssc_1 ,\ssc_2 }(\tau) 
\ee
where 
\be 
\ssc_1 =\left( \begin{matrix}
              1\\
             0  \end{matrix} \right) , ~\ssc_2 =\left( \begin{matrix}
              x\\
             2p  \end{matrix} \right) , ~ 
\ee
and $\sa$ is any vector satisfying $\sa =\tilde \sa  ~{\rm mod}~\ZZ\ssd_1+\ZZ\ssd_2$.

\end{thm}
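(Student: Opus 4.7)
My plan is to compute the trace \eqref{dfn:trace} directly from the description of $V_{\tilde\sa}$ as the direct sum of two lattice-cone twisted modules, and then to match the resulting lattice sum with the regularised indefinite theta function via a sign-counting argument.

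First I use that the character of each summand $V_{\tilde\sa^\pm + P}$ factors as the rank-two Heisenberg Fock space contribution $1/\eta^2(\tau)$ times a lattice sum over the shifted cone. Inserting $g_\ssb$ (whose action on the second summand carries an extra minus sign by definition) and applying the change of variable $\mu \mapsto -\mu$ on the second summand to bring both contributions into the common coset $\tilde\sa + L$, I obtain an expression of the shape
\begin{equation*}
    T_{\tilde\sa, \ssb}(\tau) = \frac{\ex(-B(\ssb, \tilde\sa))}{\eta^2(\tau)} \sum_{\nu \in \tilde\sa + L} \left(\mathbf{1}_{\tilde\sa^+ + P}(\nu) - \mathbf{1}_{-\tilde\sa^- - P}(\nu)\right) \ex(B(\ssb, \nu)) q^{|\nu|^2/2}.
\end{equation*}

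The heart of the argument is to identify the bracketed difference of indicators, up to a constant, with the regulator $\rho^{\ssc_1, \ssc_2}(\nu)$. For this I parametrise $\nu - \tilde\sa = \alpha_1 \ssd_1 + \alpha_2 \ssd_2$ with $\alpha_i \in \ZZ$ and use the orthogonality $\ssd_i = \ssc_{i, \perp}$, which kills the cross terms in the pairings and leaves $B(\ssc_1, \nu) = -2p\bar p\,(\tilde a_2 + \alpha_2)$ and $B(\ssc_2, \nu) = 2xp\,(\tilde a_1 + \alpha_1)$. Because $0 < \tilde a_i < 1$, the sign of $B(\ssc_j, \nu)$ is determined entirely by whether $\alpha_j \geq 0$ or $\alpha_j \leq -1$. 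A four-case analysis on the signs of $\alpha_1, \alpha_2$ then shows that $\rho^{\ssc_1, \ssc_2}(\nu)$ is nonzero precisely on the two cones $\tilde\sa^+ + P$ and $-\tilde\sa^- - P$, takes opposite signs on the two, and vanishes on the remaining two quadrants --- exactly matching the indicator difference up to the stated multiplicative constant.

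Substituting this identification and recognising the sum over $\tilde\sa + L = \sa + \ZZ^2$ (for any $\sa$ congruent to $\tilde\sa$ modulo $\ZZ\ssd_1 \oplus \ZZ\ssd_2$) as $\Theta_{A, \sa, \ssb, \ssc_1, \ssc_2}(\tau)$ from \eqref{def:regularisedindefinitetheta} then gives the claimed formula. The main obstacle I anticipate is the case analysis in the regulator computation: it relies on the orthogonality $B(\ssc_i, \ssd_i) = 0$ built into the definition of $\ssd_i$, together with the assumption $0 < \tilde a_i < 1$ preventing any lattice point from lying on the vanishing locus of $\rho^{\ssc_1, \ssc_2}$; the hypothesis $\bar p x > 2p$ enters separately to ensure positive-definiteness of the quadratic form on $P$, which is what makes each trace a convergent power series in $q$ to begin with.
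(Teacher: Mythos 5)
Your argument is correct and is essentially the paper's own proof in a lightly reorganised form: the paper likewise expands the trace as $\eta^{-2}(\tau)$ times the two shifted cone sums and separately expands $\Theta_{A,\sa,\ssb,\ssc_1,\ssc_2}$ using the orthogonality $B(\ssc_i,\ssd_i)=0$ and $0<\tilde a_i<1$ to see that the regulator $\rho^{\ssc_1,\ssc_2}$ is supported exactly on $\tilde\sa^++P$ and $-(\tilde\sa^-+P)$ with opposite signs, so your identification of the indicator difference with the regulator is the same computation packaged as a lemma. One caveat: carrying your sign analysis to the end gives $\rho^{\ssc_1,\ssc_2}=-2\,\mathbf{1}_{\tilde\sa^++P}+2\,\mathbf{1}_{-\tilde\sa^--P}$ and hence an overall prefactor $-\tfrac{1}{2}$ rather than the $-2$ appearing in the theorem statement; this agrees with what the paper's own displayed computation yields, so the discrepancy is in the stated constant, but you should not wave it away as "the stated multiplicative constant" without checking it.
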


\begin{proof}
Using the definition \eqref{dfn:trace}, one obtains 
\be\begin{split}\notag
&T_{\tilde \sa, \ssb}(\tau) = {1\over \eta^2(\tau)}\\&\times  \sum_{n_1,n_2\geq 0} \left( {\ex}(B(\ssb, \sum_i n_i \ssd_i))   q^{|\sum_i (n_i+\tilde a_i)\ssd_i|^2/2} - {\ex}(-B(\ssb, \sum_i (n_i+1) \ssd_i))  q^{|\sum_i (n_i+1-\tilde a_i)\ssd_i|^2/2}\right), \end{split}
\ee
while plugging in the vector $\ssc_i$ into \eqref{def:regularisedindefinitetheta} leads to
\begin{gather}\notag\begin{split}
&-  
\Theta_{A,\sa,\ssb,\ssc_1 ,\ssc_2 }(\tau) ={\ex}(B(\ssb, \tilde \sa)) \\& \times \left( \sum_{n_1,n_2\geq 0} {\ex}(B(\ssb, \sum_i n_i \ssd_i))  q^{|\sum_i (n_i+\tilde a_i)\ssd_i|^2/2}- \sum_{n_1,n_2 >  0}  {\ex}(-B(\ssb, \sum_i n_i \ssd_i))   q^{|\sum_i (-n_i+\tilde a_i)\ssd_i|^2/2}\right) .
\end{split}
\end{gather}

\end{proof}

The above establishes the relation between VOA characters and the mock invariants $\hat Z_{b_0}(M_3)$ for $M_3=-\Sigma(2,3,6\pm1)$, via Conjecture \ref{sonj:23torus}. A similar treatment, using the indefinite lattice in Conjecture \ref{sonj:H24}, associates  VOA characters to the mock invariants $\hat Z_{b_0}(M_3)$ for $M_3=-M\left( -1; \frac{1}{2}, \frac{1}{3}, \frac{1}{8} \right)$.

\section{Examples}
\label{sec:ex}

In this section, we will demonstrate the various phenomena discussed in the previous sections, either in the form of conjectures or proven theorems, with various examples of different origins. We will continue to work with teh equivalence \eqref{dfn:equivalence} and 
 not always keep track of pre-factors $cq^{\Delta}$ for some $c\in \CC$ and $\Delta\in \QQ$ the possible finite polynomials $p(\tau)$.  
The data provided in the paper are sufficient for the readers to compute these explicitly in all cases. 

\subsection{{False-\texorpdfstring{$\vartheta$}{theta}} Invariants from Plumbing Graphs}
In this subsection, we use the Definition \ref{conj:plumbed_defect} to compute the examples of $\hat Z$-invariants for plumbed manifolds, with and without defects. 
Using these concrete  $\hat Z$-invariants, we provide evidence for the Modularity Conjecture \ref{con:modularity}, for manifolds that are not Brieskorn spheres, for which a general proof is not available (cf. Theorem \ref{thm:modularityBrieskorn}). 

So far we have not found a way to compute the relevant $\widetilde {\rm SL}_2(\ZZ)$ representation $\Theta^{m+K}$ for generic  Seifert manifolds with three exceptional fibres without computing the $\hat Z$-invariants themselves, despite systematic results for the Brieskorn spheres. 
This said, we would like to extensively test the Modularity Conjecture \ref{con:modularity} beyond the cases in Theorem \ref{thm:modularityBrieskorn}. 
For a given negative Seifert manifold with three exceptional fibres, we will do this in three steps.  First we compute the $\hat Z$-invariants without defects, for all admissible {${\rm Spin}^c$} structures, using the original plumbing proposal  \eqref{weak_neg_def}. Second, from those we identify the relevant $\widetilde {\rm SL}_2(\ZZ)$ representation $\Theta^{m+K}$, using the notation as in \eqref{linear_weil}. 
We will also write $\sigma^{m+K}$ to denote the set of independent components of  $\Theta^{m+K}$,  in the sense detailed in \eqref{linear_weil}. 
Finally we identify the examples of Wilson line insertions that give rise to the specific components of the corresponding vector-valued quantum modular forms. In all examples in this subsection, invariants will be computed using Definition \ref{def:ZhatintegralWilson}.
 
For concreteness, in all examples in this section we represent one of the ${\rm Spin}^c$ structures as
\begin{equation}\label{dfn:b0}
  b_0 = \left\{ b_{0,v} \right\}_{v\in V},\ b_{0,v} = \text{deg}\left( v \right) -2. 
\end{equation}

\subsubsection{\texorpdfstring{$M_{3}=\Sigma\left(2,3,7\right)$, $m=42$, $K=\{1,6,14,21\}$}{}}
\label{subsubsec:237}
Although this Brieskorn sphere case is covered in Theorem \ref{thm:modularityBrieskorn}, we will still give it as an example to illustrate the theorem. This will also facilitate later comparison with other approaches to studying this particular manifold. 

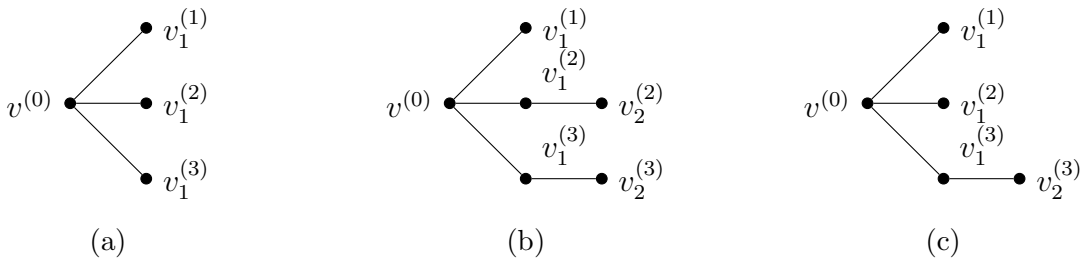
\begin{figure}
\begin{subfigure}{0.33\textwidth}
\centering
\begin{tikzpicture}
  \tikzstyle{filledcircle}=[circle, draw, fill=black, inner sep=0pt, minimum size=4pt]

  \node[filledcircle,label=left:$v^{(0)}$] (a0) at (0,0) {};
  \node[filledcircle,label=right:$v^{(1)}_1$] (a1) at (1,1) {};
  \node[filledcircle,label=right:$v^{(2)}_1$] (a2) at (1,0) {};
  \node[filledcircle,label=right:$v^{(3)}_1$] (a3) at (1,-1) {};

  \draw (a0) -- (a1);
  \draw (a0) -- (a2);
  \draw (a0) -- (a3);
\end{tikzpicture}
\caption{}\label{fig:plumbing_graph_1_v2}
\end{subfigure}
\begin{subfigure}{0.3\textwidth}
\centering
\begin{tikzpicture}
  \tikzstyle{filledcircle}=[circle, draw, fill=black, inner sep=0pt, minimum size=4pt]

  \node[filledcircle,label=left:$v^{(0)}$] (a0) at (0,0) {};
  \node[filledcircle,label=30:$v^{(3)}_{1}$] (a11) at (1,-1) {};
  \node[filledcircle,label=right:$v^{(3)}_{2}$] (a1) at (2,-1) {};
  \node[filledcircle,label=30:$v^{(2)}_{1}$] (a21) at (1,0) {};
  \node[filledcircle,label=right:$v^{(2)}_{2}$] (a2) at (2,0) {};
  \node[filledcircle,label=right:$v^{(1)}_1$] (a3) at (1,1) {};

  \draw (a0) -- (a11);
  \draw (a1) -- (a11);
  \draw (a0) -- (a21);
  \draw (a2) -- (a21);
  \draw (a0) -- (a3);
\end{tikzpicture}
\caption{}\label{fig:plumbing_graph_2_v2}
\end{subfigure}
\begin{subfigure}{0.33\textwidth}
\centering
\begin{tikzpicture}
  \tikzstyle{filledcircle}=[circle, draw, fill=black, inner sep=0pt, minimum size=4pt]

  \node[filledcircle,label=left:$v^{(0)}$] (a0) at (0,0) {};
  \node[filledcircle,label=right:$v^{(1)}_1$] (a1) at (1,1) {};
  \node[filledcircle,label=right:$v^{(2)}_1$] (a2) at (1,0) {};
  \node[filledcircle,label=right:$v^{(3)}_2$] (a3) at (2,-1) {};
  \node[filledcircle,label=30:$v^{(3)}_{1}$] (a31) at (1,-1) {};

  \draw (a0) -- (a1);
  \draw (a0) -- (a2);
  \draw (a0) -- (a31);
  \draw (a3) -- (a31);
\end{tikzpicture}
\caption{}\label{fig:plumbing_graph_3_v2}
\end{subfigure}
\caption{Plumbing diagrams}
\end{figure}
The plumbing graph for $\Sigma(2,3,7)$ is depicted in Figure \ref{fig:plumbing_graph_1_v2} with the weights $\alpha_j^{(i)}$ on the node labelled by $v_j^{(i)}$ given by $(\alpha^{(0)},\alpha^{(1)}_1,\alpha^{(2)}_1,\alpha^{(3)}_1)=(-1,-2,-3,-7)$ (cf. \eqref{eq:cont'edfrac}).

Since the three-manifold is a homological sphere, we have only one ${\rm{Spin}}^c$-structure $b_0$, and we have
\be
  \label{eq:allfalse2371}
\hat{Z}_{b_0}\left( \Sigma\left( 2,3,7 \right); \tau \right)\sim \tilde{\theta}_{1}^{m+K}. 
\ee
The Weil representation $\Theta^{m+K}$ is a three-dimensional irreducible representation with
 $\sigma^{m+K}=\{1,5,11\}$. 
 
The other two components of the vector-valued quantum modular form
\[ \begin{pmatrix}\tilde{\theta}^{m+K}_{1}\\\tilde{\theta}^{m+K}_{5}\\\tilde{\theta}^{m+K}_{11}
\end{pmatrix}
 (\tau)
\]
show up as the following defect invariants 
\begin{align}
  \label{eq:allfalse237}
  \begin{split}
  &\hat{Z}_{b_0}\left( \Sigma\left( 2,3,7 \right); W_{(0,0,1)}; \tau \right)  \sim \tilde{\theta}_{5}^{m+K}(\tau) \\
  &\hat{Z}_{b_0}\left( \Sigma\left( 2,3,7 \right); W_{(0,0,2)}; \tau \right) \sim \tilde{\theta}_{11}^{m+K}(\tau) .
  \end{split}
\end{align}

\subsubsection{\texorpdfstring{$M_{3}=M\left(-2; {1 \over 2},{2 \over 3}, {2 \over 3}\right)$, $m=6$, $K=\{1,3\}$}{}}
This example was featured in \cite{3d} (Table 13).  The corresponding Weil representation $\Theta^{6+3}$ is a two-dimensional irreducible representation  with $\sigma^{m+K}= \left\{ 1,3 \right\}$. 
Its plumbing graph is depicted in Figure \ref{fig:plumbing_graph_2_v2} with the weight assignment $$
(\alpha^{(0)}, \alpha_1^{(1)}, \alpha_1^{(2)}, \alpha_2^{(2)},\alpha_1^{(3)},\alpha_2^{(3)})=(-2,-2,-2,-2,-2,-2).$$

Apart from the trivial ${\rm{Spin}}^c$-structure $b_0$, this manifold permits a second ${\rm{Spin}}^c$-structure  that can be represented by $b_1 = (-1,1,0,1,0,3)$ in the basis defined by the node ordering above.  
This example has the special property that the two $\hat Z_b$ with no defect modifications already span the vector-valued $\widetilde {\rm SL}_2(\ZZ)$ quantum modular forms: 
\begin{align}
  \notag
  \begin{split}
  &\hat{Z}_{b_{0}}\left( M\left(-2; {1 \over 2},{2 \over 3}, {2 \over 3}\right) ; \tau \right)\sim \tilde{\theta}_{1}^{m+K}(\tau) \\
  &\hat{Z}_{b_{1}}\left(  M\left(-2; {1 \over 2},{2 \over 3}, {2 \over 3}\right) ;  \tau \right) \sim
\tilde{\theta}_{3}^{m+K}(\tau)~.
  \end{split}
\end{align}
The inclusion of defects, in this case, gives rise to the same functions up to the equivalence relation. For instance, we have
\begin{align} \notag
  \begin{split}
  &\hat{Z}_{b_{0}}\left(  M\left(-2; {1 \over 2},{2 \over 3}, {2 \over 3}\right) ; W_{(0,0,1)}; \tau \right)  \sim \tilde{\theta}_{3}^{m+K}(\tau)~\\
 &\hat{Z}_{b_{1}}\left( M\left(-2; {1 \over 2},{2 \over 3}, {2 \over 3}\right) ; W_{(0,0,1)}; \tau \right)  \sim \tilde{\theta}_{1}^{m+K} (\tau)
  \end{split}
\end{align}
 consistent with the Modularity Conjecture 
\ref{con:modularity}.

\subsubsection{\texorpdfstring{$M_{3}=M\left(-1; {1 \over 2},{1 \over 7}, {2 \over 7}\right)$, $m=14$, $K=\{1,7\}$}{}}

In \cite{3d} (Table 13) it was found that this plumbed manifold, with plumbing graph  depicted in Figure \ref{fig:plumbing_graph_3_v2} and  the weight assignment $(\alpha^{(0)},\alpha^{(1)}_1,\alpha^{(2)}_1,\alpha^{(3)}_{1},\alpha^{(3)}_2)=(-1,-2,-7,-4,-2)$, corresponds to the four-dimensional irreducible
 Weil representation $\Theta^{14+7}$  which has $\sigma^{m+K}= \left\{1,3,5,7 \right\}$.

It has two inequivalent ${\rm{Spin}}^c$-structures $b_0$ and $b_1=(-1,1,1,0,5)$. The corresponding $\hat Z$-invariants 
 correspond to 
$\{1,7\}\subset \sigma^{m+K}$: 
\begin{align}
\notag
  \begin{split}
  &\hat{Z}_{b_{0}}\left( M\left(-1; {1 \over 2},{1 \over 7}, {2 \over 7}\right) ; \tau \right) \sim \tilde{\theta}_{3}^{m+K}(\tau) \\
  &  \hat{Z}_{b_{1}}\left( M\left(-1; {1 \over 2},{1 \over 7}, {2 \over 7}\right) ; \tau \right) \sim \tilde{\theta}_{7}^{m+K}(\tau)~. 
  \end{split}
\end{align}

The rest of the components in $\sigma^{m+K}=\{ 1,3,5,7 \}$ become visible when one considers the defect invariants, 
consistent with the Modularity Conjecture 
\ref{con:modularity}. 
For instance, we see 
\begin{align}
 \notag 
  \begin{split}
  &\hat{Z}_{b_{0}}\left(  M\left(-1; {1 \over 2},{1 \over 7}, {2 \over 7}\right) ; W_{(0,0,1)}; \tau \right)\sim \tilde{\theta}_{1}^{m+K}(\tau) \\
  & \hat{Z}_{b_{0}}\left(  M\left(-1; {1 \over 2},{1 \over 7}, {2 \over 7}\right) ; W_{(0,0,4)}; \tau \right)  \sim  \tilde{\theta}_{5}^{m+K}(\tau) \\
  &\hat{Z}_{b_{0}}\left(  M\left(-1; {1 \over 2},{1 \over 7}, {2 \over 7}\right) ; W_{(0,0,5)}; \tau \right)  \sim \tilde{\theta}_{7}^{m+K}(\tau) ~,
  \end{split}
\end{align}
where we obtain matching after removing certain $q$-monomials in the latter two cases.

\subsubsection{Other Examples}

In Tables \ref{tab:the_big_one} and \ref{tab:the_smaller_one} we compile an extensive list of Seifert manifolds, for which we record the relevant $\widetilde {\rm SL}_2(\ZZ)$ representation $\Theta^{m+K}$ and its corresponding natural components labelled by the elements of the set $\sigma^{m+K}$. For all these cases, we have verified the Modularity Conjecture \ref{con:modularity}. We use the notation $\sigma^{(M_3)}$ to denote the set of components in  $\Theta^{m+K}$ that are seen in the $\hat Z$-invariants, with and without defects, of the three-manifold in question. The Modularity Conjecture \ref{con:modularity} in particular implies $\sigma^{m+K}=\sigma^{(M_3)}$.  
In Table \ref{tab:the_smaller_one} we list the cases for which the Weil representation is of the form $\Theta^{m+K}$ \eqref{dfn:projB}. 
As explained in \S\ref{sec:Modularity}, sometimes these Weil representations defined through \eqref{dfn:projB} are reducible and one needs an additional projection \eqref{irred_weil} to obtain the corresponding irreducible subrepresentation $\Theta^{m+K,{\rm{irr}}}$. 
Interestingly, these also appear as modular representations for $\hat Z$-invariants for Seifert manifolds with three singular fibres. In Table \ref{tab:the_smaller_one} we record some examples where $\Theta^{m+K,{\rm{irr}}}$ appear.

\begin{center}
\begin{footnotesize}
\begin{longtable}{ccc}
\toprule
 $M_{3}$ & $m+K$ & $\sigma^{m+K}=\sigma^{(M_3)}$\\
\midrule
\endfirsthead
\multicolumn{3}{c}{{Continued from previous page}} \\
\toprule
 $M_{3}$ & $m+K$ & $\sigma^{m+K}=\sigma^{(M_3)}$ \\
\midrule
\endhead
\multicolumn{3}{|c|}{{\tablename\ \thetable{}:  continued on next page}} \\ \hline
\endfoot
\endlastfoot
 $M\left(-2;1/2,1/2,1/2\right)$ & 2 & $\{1\}$\\
 $M\left(-2;1/2,1/2,2/3\right)$ & 3 & $\{1,2\}$\\
 $M\left(-2;1/2,1/2,3/4\right)$ & 4 & $\{1,2,3\}$\\
 $M\left(-2;1/2,1/2,4/5\right)$ & 5 & $\{1\ldots4\}$\\
 $M\left(-2;1/2,1/2,5/6\right)$ & 6 & $\{1\ldots5\}$\\
 $M\left(-2;1/2,2/3,2/3\right)$ & 6+3 & $\{1,3\}$\\
 $M\left(-2;1/2,1/2,6/7\right)$ & 7 & $\{1\ldots6\}$\\
 $M\left(-2;1/2,1/2,7/8\right)$ & 8 & $\{1\ldots7\}$\\
 $M\left(-2;1/2,1/2,8/9\right)$ & 9 & $\{1\ldots8\}$\\
 $M\left(-2;1/2,1/2,9/10\right)$ & 10 & $\{1\ldots9\}$\\
 $M\left(-1;1/2,1/5,1/5\right)$ & 10+5 & $\{1,3,5\}$\\
 $M\left(-4;1/2,1/2,1/2\right)$ & 10+5 & $\{1,3,5\}$\\
 $M\left(-2;1/2,1/2,11/12\right)$ & 12 & $\{1\ldots11\}$\\
 $M\left(-2;1/2,1/2,12/13\right)$ & 13 & $\{1\ldots12\}$\\
 $M\left(-5;1/2,1/2,1/2\right)$ & 14+7 & $\{1,3,5,7\}$\\
 $M\left(-1;1/2,1/7,2/7\right)$ & 14+7 & $\{1,3,5,7\}$\\
 $M\left(-2;1/2,1/2,15/16\right)$ & 16 & $\{1\ldots15\}$\\
 $M\left(-2;1/2,1/2,17/18\right)$ & 18 & $\{1\ldots17\}$\\
 $M\left(-6;1/2,1/2,1/2\right)$ & 18+9 & $\{1,3,5,7,9\}$\\
 $M\left(-2;1/2,1/2,24/25\right)$ & 25 & $\{1\ldots24\}$\\
 $M\left(-7;1/2,1/2,1/2\right)$ & 22+11 & $\{1,3,5,7,9,11\}$\\
 $M\left(-1;1/2,1/11,4/11\right)$ & 22+11 & $\{1,3,5,7,9,11\}$\\
$M\left(-9;1/2,1/2,1/2\right)$ & 30+15 & $\{1,3,5,7,9,11,13,15\}$\\
 $M\left(-1;1/2,2/5,1/15\right)$ & 30+15 & $\{1, 3, 5, 7, 9, 11, 13, 15\}$\\
 $M\left(-13;1/2,1/2,1/2\right)$ & 46+23 & $\{1,3,5,7,9,11,13,15,17,19,21,23\}$\\
 $M\left(-2;1/2,1/2,1/3\right)$ & 6+2 & $\{1,2,4\}$\\
 $M\left(-2;1/2,1/2,3/5\right)$ & 10+2 & $\{1\ldots4,6,8\}$\\
 $M\left(-1;1/3,1/3,1/4\right)$ & 12+3 & $\{1\ldots3,5,6,9\}$\\
 $M\left(-2;1/2,1/2,1/4\right)$ & 12+3 & $\{1\ldots3,5,6,9\}$\\
 $M\left(-1;1/3,1/5,2/5\right)$ & 15+5 & $\{1,2,4,5,7,10\}$\\
 $M\left(-3;1/2,1/2,1/3\right)$ & 15+5 & $\{1,2,4,5,7,10\}$\\
 $M\left(-2;1/2,1/2,7/9\right)$ & 18+2 & $\{1\ldots8,10,12,14,16\}$\\
 $M\left(-2;1/2,1/2,4/7\right)$ & 21+3 & $\{1\ldots6,8,9,11,12,15,18\}$\\
 $M\left(-1;1/4,1/7,4/7\right)$ & 28+7 & $\{1\ldots3,5\ldots7,9,10,13,14,17,21\}$\\
 $M\left(-1;1/4,1/7,4/7\right)$ & 28+7 & $\{1\ldots3,5\ldots7,9,10,13,14,17,21\}$\\
 $M\left(-5;1/2,1/2,1/3\right)$ & 33+11 & $\{1,2,4,5,7,8,10,11,13,16,19,22\}$\\
 $M\left(-1;1/3,1/11,6/11\right)$ & 33+11 & $\{1,2,4,5,7,8,10,11,13,16,19,22\}$\\
 \\
\caption{The Seifert manifolds and the associated $\widetilde {\rm SL}_2(\ZZ)$ representations. }\label{tab:the_big_one}
\end{longtable}
\end{footnotesize}
\end{center}

\begin{center}
\begin{footnotesize}
\begin{longtable}{ccc}
\toprule
 $M_{3}$ & $m+K$ & $\sigma^{m+K,\text{irred}}=\sigma^{(M_3)} $\\
\midrule
\endfirsthead
\multicolumn{3}{c}{{\bfseries \tablename\ \thetable{} -- continued from previous page}} \\
\toprule
 $M_{3}$ & $m+K$ & $\sigma^{m+K,\text{irred}}=\sigma^{(M_3)} $\\
\midrule
\endhead
\bottomrule
\multicolumn{3}{|c|}{{Continued on next page}} \\ \hline
\endfoot
\endlastfoot
 $M\left(-1;1/2,1/3,1/9\right)$ & 18+9 & $\{1,3, 5,7\}$\\
 $M\left(-2;1/2,1/3,2/3\right)$ & 18+9 & $\{1,3,5,7\}$\\
 $M\left(-1;1/2,1/4,1/5\right)$ & 20+4 & $\{1, 3, 4,8, 11\}$\\
 $M\left(-1;1/2,1/3,1/8\right)$ & 24+8 & $\{1,2,5,7,8,13\}$\\
 \\
\caption{The Seifert manifolds and the associated $\widetilde {\rm SL}_2(\ZZ)$ representations. }\label{tab:the_smaller_one}
\end{longtable}
\end{footnotesize}
\end{center}

\subsection{{False-\texorpdfstring{$\vartheta$}{theta}} Invariants from Knot Surgeries}\label{subsec:examples_surgeries}

In this subsection, we use   Definition \ref{conj:surgery_defect} to compute the examples of $\hat Z$-invariants, with and without defects. 
This provides evidence for  the consistency of 
Definition \ref{conj:surgery_defect}, and 
 the validity of Modularity Conjecture \ref{con:modularity}.

\subsubsection{\texorpdfstring{$M_{3}=\Sigma\left(2,3,7\right)$, $m=42$, $K=\{1,6,14,21\}$}{}}
Here we will  
consider two ways in which $\Sigma\left( 2,3,7 \right)$ can be represented as a knot surgery: either as a $(-1)$-surgery on a right-handed trefoil $\mathbf{3}^{r}_{1}$ (also denoted $T(2,3)$) or as a $(+1)$-surgery on a figure eight knot $\mathbf{4}_{1}$. 
The two constructions give rise to two distinct types of expressions that evaluate to the same functions. 
In this subsection we focus on the $\mathbf{3}^{r}_{1}$ construction. The analysis using the Figure 8 knots will be given in \S\ref{subsec:41}. 

The $F_{K}$ polynomial for the right-handed trefoil $\mathbf{3}^{r}_{1}=T(2,3)$ reads \eqref{FKTorus}
\begin{gather}\label{FKrighttref}
\begin{split}
  F_{\mathbf{3}^{r}_{1}}\left( x;q \right) 
  &=\frac{1}{2}\left[ -\left( x^{\frac12}-x^{-\frac12} \right)q + \left( x^{\frac{5}{2}} - {x^{-\frac{5}{2}}}\right)q^{2} + \left( x^{\frac{7}{2}} - {x^{-\frac{7}{2}}} \right)q^{3} \right] + \mathcal{O}\left( q^{6} \right)
\end{split}
\end{gather}
which leads to the following expression after the  Laplace transformation 
\begin{align}
  \label{eq:falselaplace237}
  \begin{split}
  \mathcal{L}^{\left( 0 \right)}_{-1}\left[ \left( x^{\frac12}-x^{-\frac12} \right)F_{\mathbf{3}^{r}_{1}}\left( x;\tau \right) \right]& \sim \hat{Z}_{b_0}\left( \Sigma\left( 2,3,7 \right) ; \tau \right) \\
  \mathcal{L}^{\left(1 \right)}_{-1}\left[ \left( x-x^{-1}\right)F_{\mathbf{3}^{r}_{1}}\left( x;\tau \right)  \right] &\sim \hat{Z}_{b_0}\left( \Sigma\left( 2,3,7 \right); W_{\left( 0,0,1 \right)} ; \tau \right)\\
  \mathcal{L}^{\left( 2 \right)}_{-1}\left[ \left( {x}^{\frac{3}{2}}-{{x^{-\frac{3}{2}}}}\right)F_{\mathbf{3}^{r}_{1}}\left( x;\tau \right) \right] &\sim \hat{Z}_{b_0}\left( \Sigma\left( 2,3,7 \right); W_{\left( 0,0,2 \right)} ; \tau \right) 
  \end{split}
\end{align}
which we checked up to  $\mathcal{O}(q^{20})$. 
For the sake of comparison, on the right hand side we have used the plumbing notation as in \S\ref{subsubsec:237}. As shown there, the above $\hat Z$-invariants are equal, up to an overall $q$-power and the addition of finite polynomials, to the false theta functions $\tilde{\theta}^{m+K}_{1}\left( \tau \right)$,  $\tilde{\theta}^{m+K}_{11}\left( \tau \right)$ and $\tilde{\theta}^{m+K}_{5}\left( \tau \right)$.

\subsubsection{\texorpdfstring{$M_{3}=M\left( -1; \frac{1}{2}, \frac{1}{3}, \frac{1}{9} \right)$, $m=18$, $K=\{1,9\}$}{}}
This manifold, also featured in Table \ref{tab:the_smaller_one},  can be constructed via a $(-3)$-surgery on the right-handed trefoil, whose $F_K(x;\tau)$ invariant is listed in \eqref{FKrighttref}. The surgery formula gives
\begin{align}
  \label{eq:falselaplace239}
  \begin{split}
  \mathcal{L}^{\left( 0 \right)}_{-3}\left[ \left( {x}^{\frac12}-{{x^{-\frac12}}}\right)F_{\mathbf{3}^{r}_{1}}\left( x;\tau  \right) \right] &\sim \hat{Z}_{b_0}\left( M\left( -1; \frac{1}{2}, \frac{1}{3}, \frac{1}{9} \right) ; \tau \right)\\
  \mathcal{L}^{\left( 2 \right)}_{-3}\left[ \left( {x}^{\frac{3}{2}}-{{x^{-\frac{3}{2}}}}\right) F_{\mathbf{3}^{r}_{1}}\left( x;\tau  \right) \right] &\sim \hat{Z}_{b_0}\left( M\left( -1; \frac{1}{2}, \frac{1}{3}, \frac{1}{9} \right); W_{2} ; \tau \right)\\
  \mathcal{L}^{\left( 3 \right)}_{-3}\left[ \left( {x}^{2}-{{x^{-2}}}\right) F_{\mathbf{3}^{r}_{1}}\left( x;\tau  \right) \right] &\sim \hat{Z}_{b_{0}}\left( M\left( -1; \frac{1}{2}, \frac{1}{3}, \frac{1}{9} \right); W_{3} ; \tau \right)\\
 \mathcal{L}^{\left( 4 \right)}_{-3}\left[ ( {x}^{\frac{5}{2}}-{x^{-\frac{5}{2}}})F_{\mathbf{3}^{r}_{1}}\left( x;\tau  \right) \right] &\sim\hat{Z}_{b_0}\left( M\left( -1; \frac{1}{2}, \frac{1}{3}, \frac{1}{9} \right); W_{4} ; \tau \right) ~,
  \end{split}
\end{align}
which we checked up to  $\mathcal{O}(q^{20})$. 
For the sake of comparison, on the right hand side we have used the plumbing notation for the defect operators from equation \eqref{eq:surgeryS2_defect}. 
The plumbing description is in terms of the plumbing graph as in Figure \ref{fig:plumbing_graph_1_v2}, with the weight vector given by $(-1,-2,-3,-9)$. 
The defect invariants are equivalent to the components of the vector-valued quantum modular form 
\[
 \begin{pmatrix}\tilde\theta^{18+9,{\rm irr}}_1\\\tilde{\theta}^{18+9,{\rm irr}}_{3}\\\tilde{\theta}^{18+9,{\rm irr}}_{5}\\\tilde{\theta}^{18+9,{\rm irr}}_{7}
\end{pmatrix}
 (\tau), 
\]
corresponding to the four-dimensional irreducible Weil representation
$\Theta^{18+9,{\rm irr}}$.

{
\subsubsection{\texorpdfstring{$M_{3}=M\left( -1; \frac{1}{2}, \frac{1}{3}, \frac{1}{8} \right)$, $m=24$, $K=\{1,8\}$}{}}
\label{sec:24+8-False-Surgery}
This manifold, which just like previous example is also featured in Table \ref{tab:the_smaller_one}, can be constructed via a $(-2)$-surgery on the right-handed trefoil $\mathbf{3}^{r}_{1}$. The surgery formula now gives
\be
\left(\begin{matrix} \hat{Z}_{b_0}\left({M_3;}W_{0};\tau\right)\\ \hat{Z}_{b_0}\left({M_3;}W_{1};\tau\right)\\\hat{Z}_{b_0}\left({M_3;}W_{2};\tau\right)\\ \hat{Z}_{b_0}\left({M_3;}W_{3};\tau\right)\\
\hat{Z}_{b_1}\left({M_3;}W_{0};\tau\right)\\ \hat{Z}_{b_1}\left({M_3;}W_{2};\tau\right)
\end{matrix} 
\right) \sim\left(\begin{matrix} \tilde\theta^{24+8,{\rm irr}}_1\\  \tilde\theta^{24+8,{\rm irr}}_2 \\ \tilde\theta^{24+8,{\rm irr}}_5\\  \tilde\theta^{24+8,{\rm irr}}_8 \\ \tilde\theta^{24+8,{\rm irr}}_7\\  \tilde\theta^{24+8,{\rm irr}}_{13}  
\end{matrix} 
\right) (\tau) .
\ee
}

\subsection{{Mock and False-\texorpdfstring{$\vartheta$}{theta} Invariants with the Inverted Habiro Series}}
\label{subsec:habiroexamples}

In this subsection we will compute defects $\hat Z$-invariants using Conjecture \ref{con:habiro}, generalizing the conjecture in \cite{Park2106} to include defect operators. 
We present the examples of four Seifert manifolds, displaying different properties while all confirming the Modularity Conjecture \ref{con:modularity}. 
Note that the mock invariants computed here are also provided with indefinite theta expressions in \S\ref{subsec:indef_zhat}. Moreover, we notice that they are all equivalent to some of the optimal Jacobi forms studied in \cite{CD1605}, which are distinguished by their slowese possible growth of coefficients. 

\subsubsection{\texorpdfstring{$M_{3}=\pm\Sigma\left( 2,3,5 \right)$, $m=30$, $K=\{1,6,10,15\}$}{}}

The manifold $M_3=\Sigma(2,3,5)$ can be represented as a plumbed manifold with plumbing diagram given  in Figure \ref{fig:plumbing_graph_1_v2} and weights $(\alpha^{(0)},\alpha^{(1)}_1,\alpha^{(2)}_1,\alpha^{(3)}_1)=(-1,-2,-3,-5)$, and also as the $(-1)$-surgery of the left-handed trefoil knot $\mathbf{3}^l_1$ (also denoted $T(2,-3)$). From the latter point of view, we can make use of Conjecture  \ref{con:habiro} with $p=\pm 1$ in order to compute the defect $\hat Z$-invariants for $ \Sigma(2,3,5)$ as well as $-\Sigma(2,3,5)$. 

For the latter, the results obtained here can be compared to the results obtained as indefinite theta functions using its description as a plumbed manifold, detailed in \S\ref{subsec:indef_zhat}. 

From \begin{equation}
    \Sigma(2,3,5) = S^3_{-1} (\mathbf{3}^l_1) ~,
\end{equation}
and the 
 Habiro coefficients
\begin{equation}
\label{eq:a_m left trefoil}
    a_{-m}(\mathbf{3}^l_1)=(-1)^m q^{\frac{m(m-3)}{2}} ~, \qquad m\in\mathbb{Z} ~, 
\end{equation}
we arrive at 
\begin{align}\label{eq:falseZhat235}
  \begin{split}
  \hat{Z}_{b_0}\left(S_{-1}^{3}(\mathbf{3}^l_1);\tau   \right)
  \sim \tilde{\theta}^{m+K}_1 (\tau)\\
   \hat{Z}_{b_0}\left(S_{-1}^{3}(\mathbf{3}^l_1); W_{1};\tau \right) \sim\tilde{\theta}^{m+K}_7 (\tau) .
  \end{split}
\end{align}
This is exactly what Theorem \ref{thm:modularityBrieskorn} dictates in this case. 
Similarly, from \begin{equation}
    -\Sigma(2,3,5) = S^3_{+1} (\mathbf{3}^r_1) ~,~~ a_{-m}(\mathbf{3}^r_1)=a_{-m}(\mathbf{3}^l_1)\lvert_{q\to q^{-1}}, 
\end{equation}
we get 
\be\label{235mock}
\left(\begin{matrix} \hat{Z}_{b_0}(-\Sigma(2,3,5))\\ \hat{Z}_{b_0}(-\Sigma(2,3,5);W_1)
\end{matrix} 
\right) \sim\left(\begin{matrix} H^{30+6,10,15}_1\\  H^{30+6,10,15}_5
\end{matrix} 
\right)  \sim\left(\begin{matrix} \chi_0 \\  \chi_1
\end{matrix} 
\right) ,
\ee
given by the two order $5$ mock theta functions of Ramanujan $\chi_{0}$ and $\chi_{1}$, which are also the components of the optimal mock Jacobi theta function with the corresponding $m+K$ \cite{CD1605}. 

Assuming  Conjecture \ref{con:habiro} is correct, this example also serves to showcase Conjecture \ref{con:modularity}.2. Expressions of the above invariants in terms of indefinite lattice theta functions have been given in Conjecture \ref{sonj:23torus}.

\subsubsection{\texorpdfstring{$M_{3}=-M\left( -1; \frac{1}{2}, \frac{1}{3}, \frac{1}{8} \right)$, $m=24$, $K=\{1,8\}$}{}}\label{sec:mockfalseIHSM24p8}
Now we consider
\be
M_{3}=-M\left( -1; \frac{1}{2}, \frac{1}{3}, \frac{1}{8} \right) = S^3_{+2}(\mathbf{3}^l_1), 
\ee
corresponding to the 6-dimensional irreducible Weil representation $\Theta^{24+8,{\rm irr}}$, with \[\sigma^{24+8,{\rm irr}}=\{1,2,5,7,8,13\} . \]
The plumbing description has plumbing graph as in Figure \ref{fig:plumbing_graph_1_v2}, with weight assignment $$\left(\alpha^{(0)},\alpha^{(1)}_1,\alpha^{(2)}_1,\alpha^{(3)}_1\right)=(-1,-2,-3,-8).$$ The two inequivalent Spin$^c$ structures are represented by $b_0$ and $b_1=(-1,1,1,3)$. 

The prescription in Conjecture \ref{con:habiro} leads to 
$q$-series 
which can be observed to coincide, up to a pre-factor and a finite polynomial, with the components of the optimal mock Jacobi theta function with the corresponding $m+K$, at least to the order $\mathcal{O}(q^{15})$ we have computed. Explicitly, we have 
\be
\left(\begin{matrix} \hat{Z}_{b_0}({M_3;}W_0)\\ \hat{Z}_{b_0}({M_3;}W_1)\\\hat{Z}_{b_0}({M_3;}W_2)\\ \hat{Z}_{b_0}({M_3;}W_3)\\
\hat{Z}_{b_1}({M_3;}W_0)\\ \hat{Z}_{b_1}({M_3;}W_2)
\end{matrix} 
\right) \sim\left(\begin{matrix} H^{24+8}_1\\  H^{24+8}_2 \\H^{24+8}_5\\  H^{24+8}_8 \\H^{24+8}_7\\  H^{24+8}_{13}  
\end{matrix} 
\right) .
\ee
We collect in Appendix \ref{app:Pnpb} the $P^{p,b+\nu}_{n}$  polynomials that were used to compute the $q$-series for this example. 
The above coincides with the expressions given in terms of indefinite lattice theta functions in Conjecture \ref{sonj:H24}.

\subsubsection{\texorpdfstring{$M_{3}=\pm\Sigma\left( 2,3,7 \right)$, $m=42$, $K=\{1,6,14,21\}$}{}}
From 
\begin{equation}\label{def:mfd237}
        \Sigma(2,3,7) = S^3_{-1} (\mathbf{3}^r_1) ~,~~  a_{-m}(\mathbf{3}^r_1)=(-1)^m q^{-\frac{m(m-3)}{2}} 
\end{equation}
we get using {Conjecture \ref{con:habiro}} the following expressions  
\begin{gather}\label{eq:falsesigma237}
  \begin{split}
  \hat{Z}_{b_0} (S^3_{-1} (\mathbf{3}^r_1);\tau) & = q^{\frac{1}{2}}\left(1+\sum_{n= 1}^{\infty} (-1)^n  \frac{q^{\frac{n\left( n+1 \right)}{2}}}{(q^{n+1};q)_n}  \right) = q^{\frac{83}{168}}\tilde{\theta}^{m+K}_1 (\tau)  ~,\\
  \hat{Z}_{b_0} (S^3_{-1} (\mathbf{3}^r_1);W_{1};\tau) & = q^{\frac{3}{4}}\left(2+\sum_{n=1}^{\infty} (-1)^n  \frac{q^{\frac{n\left( n-1 \right)}{2}}}{(q^{n};q)_n}  \right) =  -q^{\frac{101}{168}}\tilde{\theta}^{m+K}_{5} (\tau) - p_{1}\left( \tau \right) ~,\\
  \hat{Z}_{b_0} (S^3_{-1} (\mathbf{3}^r_1);W_{2};\tau) & = q^{\frac{1}{2}}\left(1+2q+\sum_{n = 1}^{\infty} (-1)^n  \left(\frac{q^{1+\frac{n\left( n-3 \right)}{2}}}{(q^{n};q)_{n-1}}  +
 \left( q^{n}+1+q^{-n} \right)  \frac{q^{\frac{n\left( n+1 \right)}{2}}}{(q^{n+1};q)_n} \right)  \right) \\
  &= -q^{\frac{131}{168}}\tilde{\theta}^{m+K}_{11}(\tau)- p_{2}\left( \tau \right) ~,
  \end{split}
\end{gather}
where $p_{1}\left( \tau \right) = -2q^{3/4}$ and  $p_{2}\left( \tau \right) = -2q^{3/2}$, consistent with the plumbing result in \S\ref{subsubsec:237}. 
Inverting the orientation by inverting the sign of the surgery and taking the mirror knot, namely
\[
   - \Sigma(2,3,7) = S^3_{+1} (\mathbf{3}^l_1) ,
\]
 we get
\begin{align}\label{eq:mocksigma237}
  \begin{split}
   \hat{Z}_{b_0} (S^3_{1} (\mathbf{3}^l_1);\tau) 
   &= q^{-\frac12}\left(1+\sum_{n=1}^{\infty} \frac{q^{n^{2}}}{(q^{n+1};q)_n}\right) = q^{-\frac12}\left(1 + q + q^3 + q^4 + q^5 + 2q^7 + \mathcal{O}\left( q^{8} \right)\right)\\
   \hat{Z}_{b_0} (S^3_{1} (\mathbf{3}^l_1);W_{1};\tau) 
   &= q^{-\frac34}\left(2+\sum_{n=1}^{\infty} \frac{q^{n^{2}}}{(q^{n};q)_n}  \right) =q^{-\frac34}\left(2+ q + q^2 + q^3 + q^4 + q^5 + 2q^6 +\mathcal{O}\left( q^{7} \right)\right)\\
  \hat{Z}_{b_0} (S^3_{1} (\mathbf{3}^l_1); W_{2};\tau) 
&= q^{-\frac12}\left(1 + \frac{2}{q} + \sum_{n=1}^{\infty} \left(-\frac{q^{n^{2}-n}}{(q^{n};q)_{n-1}} + \left( q^{n}+1+q^{-n} \right) \frac{q^{n^{2}}}{(q^{n+1};q)_n} \right) \right) \\
&= q^{-\frac{3}{2}}\left(2 + q + q^2 + 2q^3 + q^4 + 2q^5 + \mathcal{O}\left( q^{6} \right)\right)~,
  \end{split}
\end{align}
which are the components of the optimal mock Jacobi theta function with the corresponding $m+K$. In this case, they are proportional to Ramanujan's order 7 mock theta functions $F_0$, $F_1$, and $F_2$. 
In other words, we have
 \be
\left(\begin{matrix} \hat{Z}_{b_0}\\ \hat{Z}_{b_0}(W_1)\\\hat{Z}_{b_0}(W_2)
\end{matrix} 
\right) \sim\left(\begin{matrix} H^{m+K}_1\\ H^{m+K}_{5} \\H^{m+K}_{11}
\end{matrix} 
\right)  \sim\left(\begin{matrix}F_0 \\ F_1\\F_2
\end{matrix} 
\right)  .
\ee

\subsection{{Mock-\texorpdfstring{$\vartheta$}{theta}} Functions from a Hyperbolic Knot} 
\label{subsec:41}

\noindent 
Contrary to the trefoil, the figure 8 knot $\mathbf{4}_{1}$ is not a plumbed knot. As a result, Conjecture \ref{con:habiro} does not apply and we do not expect it to give the right answer.

That said, we know that 
\[
S^3_{-1} (\mathbf{3}^r_1) = S^3_{+1}(\mathbf{4}_{1}) = \Sigma(2,3,7) , ~
S^3_{+1} (\mathbf{3}^l_1) = S^3_{-1}(\mathbf{4}_{1}) = -\Sigma(2,3,7) , 
\]
and the construction using the trefoil knots does give surgery expressions for the (defect) $\hat Z$-invariants using Conjecture \ref{con:habiro}. We hence wonder whether a similar expression could be obtained for surgeries along the $\mathbf{4}_{1}$ knot.

In the notation of \S\ref{subsec:knot}, we can re-express $\hat Z_0( \Sigma(2,3,7); W_{\nu},\tau)$ in terms of an infinite sum involving  ${\mathcal L}^{(\delta_\nu)}_{+1}\left((x^\ha+x^{-\ha})^{\delta_\nu}/{  D_n}\right)$ as
\begin{equation}\label{eqn:zhat411}
\hat Z_0( \Sigma(2,3,7); W_{\nu},\tau) =\hat Z_0( S^3_{+1}(\mathbf{4}_{1}); W_{\nu},\tau)
\sim \sum_{n\geq 0} 
{\mathcal L}^{(\delta_\nu)}_{+1}\left({(x^\ha+x^{-\ha})^{\delta_\nu}\over{  D_n}}\right) \tilde a^{(\nu)}_{-n-1}(q^{-1})
\end{equation}
where 
{\begin{gather}\begin{split}
& \tilde a^{(\nu)}_{-n-1}(q^{-1}) = \\ & a^{}_{-n-1}(\mathbf{3}^r_1;q) \sum_{\ell=0}^{{\rm min}(n-1,\lfloor {\nu\over 2}\rfloor)}\frac{ {\mathcal L}^{({\delta_\nu})}_{-1}\left((x^\ha+x^{-\ha})^{\delta_\nu}/{ D_{n-\ell}}\right) }{{\mathcal L}^{({\delta_\nu})}_{+1}\left((x^\ha+x^{-\ha})^{\delta_\nu}/{  D_n}\right) }  \sum_{j=0}^{\lfloor {\nu\over 2}\rfloor - \ell}  c_j^{(\nu/2)} S_{ { \lfloor\frac{\nu}{2}\rfloor} -\ell-j}(Q_{ n},\dots,Q_{{ n}-\ell}) 
\end{split}\end{gather}}
in the notation of Conjecture \ref{con:habiro}. The key point here is that $
\tilde a^{(\nu)}_{-n-1}(q)$ is a finite polynomial for any positive integers $n$ and $\nu$. To see this, note that 
{
$$\frac{ {\mathcal L}^{({\delta_\nu})}_{-1}\left((x^\ha+x^{-\ha})^{\delta_\nu}/{ D_{n-\ell}}\right) }{{\mathcal L}^{({\delta_\nu})}_{+1}\left((x^\ha+x^{-\ha})^{\delta_\nu}/{  D_n}\right) }
=(-1)^n \frac{\left(q^{n+1-\delta_\nu};q\right)_n}{\left(q^{ { n-\ell} +1-\delta_\nu};q\right)_{{ n-\ell}}}\,
q^{{ (n-\ell)}^2-{n(n+1)\over2}+\delta_\nu(\ha {- n+\ell}) }
$$
}
is a finite polynomial in $q$ for all $0\leq  n-\ell\leq  n$, as can be seen from the cyclotomic polynomial expression for the q-Pochhammer symbols, using $1-q^n = \prod_{m|n}\Phi_m(q)$. Finally,  the remaining factor is a finite sum over finite polynomials $c_j^{(\nu/2)} S_{m-\ell-j}(Q_m,\dots,Q_{m-\ell})$.

Similarly, considering the $(+1)$-surgery of the left-handed trefoil,  we write the invariants for $- \Sigma(2,3,7)$ in the following way
\begin{gather}\label{eqn:zhat412}\begin{split}
\hat Z(- \Sigma(2,3,7); W_{\nu},\tau)  =\hat Z_0( S^3_{-1}(\mathbf{4}_{1}); W_{\nu},\tau)&\sim \sum_{n\geq 0} 
{\mathcal L}^{(\delta_\nu)}_{-1}\left({(x^\ha+x^{-\ha})^{\delta_\nu}\over{  D_n}}\right) \tilde a^{(\nu)}_{-n-1}(q). 
\end{split}
\end{gather}

Next we exploit the similarity of \eqref{eqn:zhat411}-\eqref{eqn:zhat412} to the conjectural surgery formula for plumbed knots \eqref{eq:habiro_defect}, ignoring the fact that the figure eight knot is not a plumbed knot, and wonder out loud what one obtains if one simply generalise \eqref{eqn:zhat411} and \eqref{eqn:zhat412} to  $S^3_{\pm p}(\mathbf{4}_{1})$ in a way analogous to the discussion in \S\ref{subsec:knot}: 
\begin{gather}\label{eqn:crazy41}
\begin{split}
&\hat Z(S^3_{\pm p}(\mathbf{4}_{1}); W_{\nu},\tau) \\&\stackrel{?}{\sim} \sum_{n\geq 0}  
{\mathcal L}^{(\delta_\nu)}_{\pm p}\left({(x^\ha+x^{-\ha})^{\delta_\nu}\over{  D_n}}\right) \tilde a^{(\nu)}_{-n-1}(q^{ \mp 1}) =\sum_{n\geq 0} 
f_{m}^{p,b}(q^{\pm 1}) \, P_{m}^{p,b}(q^{\pm 1})
\tilde a^{(\nu)}_{-n-1}(q^{ \mp 1})
\end{split}
\end{gather}
where the ``?" indicates the fact that we do not have the relation to plumbed manifolds to justify the above expression. However, we find it remarkable that 
 the above formula gives rise to very interesting functions, which are consistent with the False-Mock Conjecture \ref{con:False-Mockv2} and the Modularity Conjecture \ref{con:modularity}.  For us, this justifies recording the highly conjectural equations here for the benefit of the interested readers.

Using the expression \eqref{eqn:crazy41}, we obtain the results in Table \ref{tab:figure8habiro}, which states that the right-hand side of \eqref{eqn:crazy41} using  the indicated values of $b$ and $\nu$, is,  up to a pre-factor and possibly a finite polynomial, the same as the false or mock theta function listed in the column ``$X^{m+K}_r$'', where $H^{m+K}_r$ denotes the component of the optimal mock Jacobi theta function in \cite{CD1605}\footnote{We've checked it up to $\mathcal{O}(q^{100})$ in the false cases and $\mathcal{O}(q^{15})$ in the mock cases.}. In the table we also list the Seifert representation of the resulting three-manifold $S_{\pm p}(\mathbf{4}_{1})$. In all cases, the plumbing graph is given in Figure \ref{fig:plumbing_graph_1_v2}, for which we write the weight assignment in the fourth column, labeled by $(\alpha^{(0)},\alpha^{(1)}_1,\alpha^{(2)}_1,\alpha^{(3)}_1)$. On the right side of the table, we list the comparison to the results obtained from the plumbing prescription
using Definition \ref{conj:plumbed_defect}, where we indeed see a non-trivial matching.

\begin{rmk}
Note that some of the components, corresponding to the elements of the relevant $\sigma^{m+K}$ or  $\sigma^{m+K,{\rm irr}}$, are missing in the Table. This is because some of the components only appear when more general defect lines, not just those corresponding to the figure 8 knot, are included. For instance, $\tilde \theta^{20+4,{\rm irr}}_8$  appears as $\hat Z_{b_0}(W_{(0,1,2)})$ in the notation of the Table  \ref{tab:figure8habiro} \footnote{$\tilde \theta^{20+4,{\rm irr}}_4$  appears as $\hat Z_{b_0}(W_{(0,1,0)})$ in the notation of the Table  \ref{tab:figure8habiro}.}, but one doesn't see this false theta function when only Wilson lines corresponding to the distinguished knot is considered.  
Similar comments also hold for the case $m=12$, $K=\{1,3\}$. 
\end{rmk}

\begin{table}[tp]
\begin{center}
\scalebox{0.85}{
\begin{tabular}{c|ccc|ccc|cc}
\toprule
$p$ & $M_3$ & $m+K$ & $(\alpha^{(0)},\alpha^{(1)}_1,\alpha^{(2)}_1,\alpha^{(3)}_1)$ & $b$ & $\nu$ & $X^{m+K}_r$ & $\left(\nu_1,\nu_2,\nu_3\right)$ & Spin$^c$ \\\toprule
\multirow{3}{*}{1} & \multirow{3}{*}{$\Sigma(2,3,7)$} & \multirow{3}{*}{$42+6,14,21$} & \multirow{3}{*}{$(-1,-2,-3,-7)$} & 0 & 0 &  $\theta^{m+K}_1$ & $(0,0,0)$ & (-1,1,1,1)\\
 &  &  & & 0 & 1 & $\theta^{m+K}_{5}$ & $(0,0,1)$ &(-1,1,1,1)\\
 &  &  & & 0 & 2 & $\theta^{m+K}_{11}$ & $(0,0,2)$ &(-1,1,1,1)\\\midrule
\multirow{3}{*}{-1} & \multirow{3}{*}{$-\Sigma(2,3,7)$} & \multirow{3}{*}{$42+6,14,21$} &\multirow{3}{*}{$(1,2,3,7)$} & 0 & 0 & $H_1^{42+6,14,21}$ &  & \\
  & & &  & 0 & 1 & $H^{42+6,14,21}_{5}$ &  & \\
 &  & & & 0 & 2 & $H^{42+6,14,21}_{11}$ &  & \\\midrule
\multirow{5}{*}{2} & \multirow{5}{*}{$M\left(-1;\frac12,\frac14,\frac15\right)$} & \multirow{5}{*}{$20+4$} &\multirow{5}{*}{$(-1,-2,-4,-5)$} & 0 & 0 & $\tilde\theta^{m+K,{\rm{irr}}}_1$ & $(0,0,0)$ &(-1,1,1,1) \\
& &  &  & 2 & 0 & $\tilde\theta^{m+K,{\rm{irr}}}_{11}$ & $(0,0,0)$ &(-1,1,-1,1) \\
 & &  &  & 0 & 1 & $\tilde\theta^{m+K,{\rm{irr}}}_{{3}}$ & $(0,1,3)$ &(-1,1,1,1) \\
 & & &  & 0 & 2 & $\tilde\theta^{m+K,{\rm{irr}}}_{{7}}$ & $(0,2,6)$ & (-1,1,1,1)\\
 \midrule
\multirow{5}{*}{-2} & \multirow{5}{*}{$-M\left(-1;\frac12,\frac14,\frac15\right)$} & \multirow{5}{*}{$20+4$} &\multirow{5}{*}{$(1,2,4,5)$} & 0 & 0 & $H^{20+4}_1$ &  & \\
 &  & & & 2 & 0 & $H^{20+4}_{11}$ &  & \\
 &  & & & 0 & 1 & $H^{20+4}_{3}
 $ &  &\\
 &  & & & 0 & 2 & $H^{20+4}_{7}
 $ &  & \\
 \midrule
\multirow{4}{*}{3} & \multirow{4}{*}{$M\left(-1;\frac13,\frac13,\frac14\right)$} & \multirow{4}{*}{$12+3$} &\multirow{4}{*}{$(-1,-3,-3,-4)$} & 0 & 0 & $\tilde\theta^{m+K}_1$ & $(0,0,0)$ &(-1,1,1,1) \\
 &  & & & 2 & 0 & $\tilde\theta^{m+K}_{9}$ & $(0,0,0)$ &(-1,1,-1,1) \\
 &  & & & 0 & 1 & $\tilde\theta^{m+K}_3$& $(1,0,0)$ & (-1,1,1,1)\\
 &  & & & 2 & 1 & $\tilde\theta^{m+K}_5$& $(1,0,0)$ & (-1,1,-1,1)\\\midrule
\multirow{4}{*}{-3} & \multirow{4}{*}{$-M\left(-1;\frac13,\frac13,\frac14\right)$} & \multirow{4}{*}{$12+3$} &\multirow{4}{*}{$(1,3,3,4)$} & 0 & 0 & $H^{12+3}_1$&  & \\
 &  & & & 2 & 0 & $H^{12+3}_{9}$&  & \\
 &  & & & 0 & 1 & $H^{12+3}_3$&  & \\
 &  & & & 2 & 1 & $H^{12+3}_5$&  & \\
\bottomrule
\end{tabular}}
\caption{Quantum modular forms arising from \eqref{eqn:crazy41}.  }\label{tab:figure8habiro}
\end{center}
\end{table}

\section{Conclusion and Discussion}
In this paper, we shed light on the three important questions regarding the modular and algebraic aspects of of the $\hat Z$-invariants: the $\widetilde{\rm SL}_2(\ZZ)$ action on the invariants, the construction of mock invariants, and the construction of vertex operator algebras associated with the mock invariants. 
This work also leads to many more interesting research questions, which we will briefly list in this final section.

\begin{itemize}
\item All the proposals, definitions, conjectures, theorems and discussions in this paper should have counterparts for other ADE gauge groups different from $SU(2)$, and three manifolds other than Seifert manifolds with three exceptional fibres.  It will be illuminating to develop those explicitly. 
\item Equipped with the defect invariants introduced in this paper, the resurgence analysis of $\hat Z$-invariants should reveal more complete structures of complex Chern-Simons theory. In particular, the analysis of the relation to non-Abelian flat connections, analogous to those discussed in \cite{3d}, should now become more complete with these defect invariants. 
\item In \S\ref{sec:regindefTheta} we compared the indefinite theta function formulation to various other possible approaches to the mock invariants, including the conjectural positive surgery formula of \cite{Park2106} and the Appell-Lerch sum continuation. A very important exercise is to further compare with the interesting approach proposed in \cite{costin2023going} using resurgence techniques. 
\item In \cite{3d}, the authors have proposed to understand the associated vertex operator algebras from the point of view of Kazhdan-Lusztig correspondence to quantum groups. In this picture, the VOAs of the false and mock sides should correspond to the negative and positive zones, respectively. It would be interesting to study the cone VOAs we proposed from this perspective. 
More generally, it would be very illuminating to have a map going directly from the VOA relevant for $M_3$ to  the VOA relevant for $-M_3$.

\item At least two main mysteries remain surrounding the mock $\hat Z$-invariants: the exponential singularities near certain cusps that need to be subtracted (cf. \eqref{eqn:leaking}), the role of the positive-definite lattice \eqref{positive_lattice} and the chosen cone. Understanding their physical interpretation will hold the key to unlocking many key aspects of three-dimensional topology and the corresponding 3d SCFTs. 
\end{itemize}

\subsection*{Acknowledgements}
The authors would like to thank Sergei Gukov, Mrunmay Jagadale, Ciprian Manolescu, and Sunghyuk Park for very helpful discussions. We also thank Peter Paule for sharing his implementation of the qZeil algorithm. MC and DP would also like to thank the American Institute of Mathematics for hosting a collaboration meeting on related topics. 
This work has been supported by the~Vidi grant (number 016.Vidi.189.182) from the Dutch Research Council (NWO). The work of I.C. has been supported in part by the European Union’s Horizon 2024 Research and Innovation Programme under the Marie Skłodowska-Curie Grant Agreement No. 101204790 (QFT2VOA-DuStRel). The work of P.K. has been supported by the Sonata grant no. 2022/47/D/ST2/02058 funded by the Polish National Science Centre.

\appendix

\section{Mock Modular Properties Analysis}\label{app:mock-proof}

In this section we will study the mixed mock modular properties of the regularized indefinite theta series $\Theta_{A,\sa_{\hat\epsilon},\ssb,\ssc_1,\ssc_2}(\tau)$ that appeared in \S\ref{subsec:relation}, as a part of the surgery result $ \hat Z_0^{{\rm Reg. Surg.}}$ for three-manifolds $-\Sigma(s,t,str+1)$. 
First we will analyse the modular completion using \eqref{prop:8-33HarmonicMaassFormsV2}, with the data specifying  the theta function given by \eqref{23r:parametersR}. 

From \eqref{prop:8-33HarmonicMaassFormsV2}, we define 
\begin{equation}\label{secproof:genericshadowcomponent}
    S_{j,\ssl,\sa} :=   
{\ex}\left(\frac{d_j B(\ssc_j,\ssb)B(\ssc_j,\ssl+\sa)}{2n_jm}\right)\,
\theta^1_{n_j m,-d_jB(\ssc_j,\ssl+\sa)} \left(\frac{\tau}{d_j}\right) ~ \overline{\sum_{\nu\in(\ssl+\sa)^\perp_j + \ZZ \ssc_{j, \perp} } e(B(\nu,\ssb)) q^{\frac{\normsq{\nu}}{2}}}
\end{equation}
for $j\in \{1,2\}$, $\ssl\in P_j$ (see the text after \eqref{prop:8-33HarmonicMaassForms} for the definition of $P_j$), so that
the function 
\be\label{def:F}
F_{(p_1,p_2,p_3),{\bnu},\chi}:= 
 \sum_{\substack{\hat\epsilon=(\epsilon_1,\epsilon_2,\epsilon_3)\\\in (\ZZ/2)^{3}}}
(-1)^{\hat\epsilon} \Theta_{A,\asub,\ssb,\ssc_1,\ssc_2}(\tau), 
\ee
where $\chi$ denotes the unfixed parameter in \eqref{23r:parametersR}, 
has completion $\hat F_{(p_1,p_2,p_3),{\bnu},\chi}$ that satisfies 
\be\label{appendix_eqn_shadow}
 i {\sqrt{2m}} \sqrt{2\Im(\tau)}\, \overline{{\partial\over\partial \bar{\tau}}\hat{F}_{(p_1,p_2,p_3),{\bnu},\chi}(\tau)}  =
\sum_{j=1,2} (-1)^j  {\sqrt{\frac{d_j}{n_j}}}  \sum_{\substack{\ssl\in P_j\\ \hat\epsilon\in(\ZZ/2)^{3}} }  (-1)^{\sum_{i=1}^3\epsilon_i}
     \,S_{j,\ssl,\asub}  . 
\ee
To analyse the above, we will compute the summand with $j=1$ and $j=2$ separately. 

\begin{lem}\label{lem:c1_contribution}
\be\label{lemma:c1_contribution}
-{{\ex}\left(-{\chi\over 4x}\right)\over2\overline{f_{x,\chi}(\tau)}}  { \sqrt{\frac{d_1}{n_1}}} \sum_{\substack{\ssl\in P_1\\ \hat\epsilon\in(\ZZ/2)^{3}} }   (-1)^{\sum_{i=1}^3\epsilon_i}
     \,S_{1,\ssl,\asub}  = \theta^{1,m+K}_{r_{\bnu}}
\ee
where $m$, $K$, $r_{\bnu}$ are as in Theorem \ref{thm:modularityBrieskorn}. 
\end{lem}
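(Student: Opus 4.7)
\textbf{Proof plan for Lemma \ref{lem:c1_contribution}.} The first step is to identify the elementary data attached to $\ssc_1$: from $|\ssc_1|^2 = -2m$ one reads off $n_1 = d_1 = 1$, and the orthogonal generator is $\ssc_{1,\perp} = (0,1)^T$, so the quotient $\ZZ^2/(\ZZ\ssc_1 \oplus \ZZ\ssc_{1,\perp})$ is trivial and one may take $P_1 = \{0\}$. Hence only the summand $\ssl = 0$ contributes to $S_{1,\ssl,\asub}$.

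Next I would compute each of the three factors of $S_{1,0,\asub}$ in \eqref{secproof:genericshadowcomponent} separately. The phase factor is trivial, because $B(\ssc_1,\ssb) = 0$ with our choice of $\ssb$. The theta factor is $\theta^1_{m,\,-d_1 B(\ssc_1,\asub)}(\tau) = \theta^1_{m,\,r_{\bnu,\hat\epsilon}}(\tau)$, since $B(\ssc_1,\asub) = -r_{\bnu,\hat\epsilon}$. Finally, the perpendicular piece is
\[
(\asub)^\perp_1 + \ZZ \ssc_{1,\perp} = \Big\{\,\big(0,\; n - \tfrac{\chi}{2x}\big)^T : n\in\ZZ\,\Big\},
\]
and a direct computation of the exponential and the norm on this set identifies the inner sum with $\ex(-\chi/(4x))\, f_{x,\chi}(\tau)$, so that its conjugate produces exactly $\ex(\chi/(4x))\,\overline{f_{x,\chi}(\tau)}$. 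Putting the three factors together,
\[
S_{1,0,\asub} = \ex\!\left(\tfrac{\chi}{4x}\right) \overline{f_{x,\chi}(\tau)}\, \theta^1_{m,\,r_{\bnu,\hat\epsilon}}(\tau).
\]

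The last step is to carry out the sum over $\hat\epsilon \in (\ZZ/2)^{\otimes 3}$ weighted by $(-1)^{\sum_i \epsilon_i}$. Pairing each $\hat\epsilon$ with $\hat\epsilon+(1,1,1)$, one verifies that $r_{\bnu,\hat\epsilon+(1,1,1)} \equiv -r_{\bnu,\hat\epsilon} \Mod{2m}$, which combined with the odd-coefficient identity $\theta^1_{m,-r}(\tau) = -\theta^1_{m,r}(\tau)$ implies that the contributions from the two parity classes coincide. The sum thus collapses to twice the sum over $\hat\epsilon$ with $\sum_i\epsilon_i \equiv 0 \Mod{2}$, which is precisely the index set appearing in \eqref{eqn:folding}. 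Applying the folding identity from the proof of Theorem \ref{thm:modularityBrieskorn}, verbatim for the unary theta functions $\theta^1$ in place of $\widetilde\theta$, yields $2\,\theta^{1,m+K}_{r_{\bnu}}(\tau)$. After multiplying by the prefactor $-\ex(-\chi/(4x))/(2\overline{f_{x,\chi}(\tau)})$, the factor $\ex(\chi/(4x))\,\overline{f_{x,\chi}(\tau)}$ cancels and one obtains the claimed identity (up to an overall sign that is easily tracked through the conventions).

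The work here is almost entirely bookkeeping; the only nontrivial input is the folding identity \eqref{eqn:folding}, which has already been established. The step I expect to require most care is matching signs and phases: specifically, tracking the complex conjugation acting on the $\nu$-sum, the sign produced by $\theta^1_{m,-r} = -\theta^1_{m,r}$, and the $(-1)^j$ from \eqref{prop:8-33HarmonicMaassFormsV2} so that they combine consistently with the prefactor $-\ex(-\chi/(4x))/(2\overline{f_{x,\chi}(\tau)})$.
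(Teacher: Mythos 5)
Your proof follows the paper's argument essentially verbatim: reduce to $P_1=\{0\}$ with $n_1=d_1=1$, evaluate the three factors of $S_{1,0,\asub}$ so that the perpendicular sum reproduces $\ex(-\chi/(4x))f_{x,\chi}(\tau)$ and cancels the prefactor, then collapse the $\hat\epsilon$-sum to the even-parity classes and invoke the folding identity \eqref{eqn:folding} for $\theta^1$ in place of $\widetilde\theta$. The only addition is that you make the pairing $\hat\epsilon\mapsto\hat\epsilon+(1,1,1)$ together with $\theta^1_{m,-r}=-\theta^1_{m,r}$ explicit (the paper asserts this step silently), and the residual overall sign you flag is a genuine bookkeeping discrepancy with the paper's intermediate formula, not a gap in the argument.
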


\begin{proof}
For $\ssc_{1} =(1,0)^T$, we have $n_1=d_1=1$ and $P_1\cong  \ZZ^{2}/(\ssc_j\ZZ\oplus\ssc_j^{\perp}\ZZ) \cong \{0\}$. 
As a result, the left-hand side of \eqref{lemma:c1_contribution} reduces to 
\begin{gather}\notag
\begin{split}
&-{{\ex}\left(-{\chi\over 4x}\right)\over2\overline{f_{x,\chi}(\tau)}}     
 \sum_{ \hat\epsilon\in(\ZZ/2)^{3} }   (-1)^{\sum_{i=1}^3\epsilon_i} S_{1,0,\asub} \\
&= -{{\ex}\left(-{\chi\over 4x}\right)\over2\overline{f_{x,\chi}(\tau)}}    \sum_{ \hat\epsilon\in(\ZZ/2)^{3} }   (-1)^{\sum_{i=1}^3\epsilon_i}  \theta^1_{m,-r_{\bnu,\hat \epsilon}} 
\overline{\sum_{k\in \ZZ} (-1)^k {\ex}\left(-{\chi\over 4x}\right) q^{{x\over 2}(k-{\chi\over 2x})^2 }}\\ 
&= 
\sum_{\substack{(\epsilon_{1},\epsilon_{2},\epsilon_{3})\in (\ZZ/2)^{3} \\ \sum_{j}\epsilon_{j}\equiv 0\Mod{2}}} 
  \theta^1_{m,r_{\bnu,\hat \epsilon}} = \theta^1_{m,r_{\bnu}}
\end{split}\end{gather}
where we have   used the expression for the theta function $f_{x,\chi}$ \eqref{eq:theta-function} 
in the first equality, and used \eqref{eqn:folding} in the last equality. 
\end{proof}

Next we study the contribution, proportional to  
\be
\sum_{\substack{\ssl\in P_2\\ \hat\epsilon\in(\ZZ/2)^{3}} }   (-1)^{\sum_{i=1}^3\epsilon_i}
     \,S_{2,\ssl,\asub}
\ee
from the $j=2$ summand in \eqref{appendix_eqn_shadow}. For this purpose, note that we can take 
\[
\ssc_{2,\perp}=\left( \begin{matrix}
              1 \\
               \bar p_3 \end{matrix} \right)
\]
from which we see 
\[
P_2 \cong  \left\{\left( \begin{matrix}
              0 \\
               \ell \end{matrix} \right) , \ell\in \ZZ/L\ZZ \right\}, ~L=\bar p_3 x -2p_3  = -{ |\ssc_2|^2\over 2p_3 x } = {{ |\ssc_{2,\perp}|^2\over \bar p_3  }}. 
\]
Write 
\begin{equation}\label{secproof:genericshadowcomponent2}
    S_{2,\ssl,\sa} :=   
{\ex}\left(\frac{d_2 B(\ssc_2,\ssb) }{2n_2m}\Xsub \right)\,
\theta^1_{n_2 m,-d_2 \Xsub} \left(\frac{\tau}{d_2}\right) ~ \overline{ {\ex}\left({ \Asub\over 2L}\right)\sum_{ k\in \ZZ } q^{{|\ssc_{2,\perp}|^2\over 2} (k+{\Asub\over L\bar p_3})^2 }}
\end{equation}
where we have assumed $\bar p_3/2\in \ZZ$ (which is the case for Lemma \ref{lem:c2_contribution}), and defined for a given $\bnu$
\be\label{condition:trans}
 \Xsub: =B(\ssc_2,\ssl+\asub), ~~ \Asub := B(\ssc_{2,\perp},\ssl+\asub), ~~{\rm{with}}~\ssl=\left( \begin{matrix}
              0 \\
               \ell \end{matrix} \right).
\ee

As a result, the $j=2$ piece vanishes if there exists a bijection $\Gamma$ on $(\ZZ/2)^{3}\times P_2$ such that 
\[
\Gamma (\hat \epsilon, \ell) = (\Gamma_1(\hat\epsilon), \Gamma_2(\ell))
\]
has the property that it preserves $\Xsub$, and maps ${1\over L\bar p_3}\Asub \mapsto \pm {1\over L\bar p_3}\Asub \Mod{\ZZ}$, and is moreover odd on
\(
(-1)^{\hat\epsilon}  {\ex}\left({ \Asub\over 2L}\right)  
\).

It turns out that, among the surgery results $ \hat Z_0^{{\rm Reg. Surg.}}$ for all three-manifolds of the form $-\Sigma(s,t,str+1)$, the family related to the torus knot $T(2,3)$, the trefoil knot behaves differently from the rest. It stems from the vanishing of the contribution by the $\ssc_2$ to the modular completion to the theta function
$\Theta_{A,\sa_{\hat\epsilon},\ssb,\ssc_1,\ssc_2}(\tau)$, captured by the following lemma. 

\begin{lem}\label{lem:c2_contribution}
\be\label{lemma:c2_contribution}
\sum_{\substack{\ssl\in P_2\\ \hat\epsilon\in(\ZZ/2)^{3}} }   (-1)^{\hat \epsilon}
     \,S_{2,\ssl,\asub}  = 0 
\ee
where $m$, $K$, $r_{\bnu}$ are as in Theorem \ref{thm:modularityBrieskorn}, when $p_2=2$, $p_3=xp_1-2$, and $x$ is an odd positive integer number. 
\end{lem}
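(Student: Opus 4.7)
The first step is to substitute the hypotheses into the setup of \S\ref{app_subsec:puremock}. With $p_2=2$, $p_3=xp_1-2$, and $x$ odd one gets $m=2p_1p_3$, $\bar p_1=2p_3$, $\bar p_2=p_1p_3$, $\bar p_3=2p_1$, together with the clean values $L=x\bar p_3-2p_3=2(xp_1-p_3)=4$ and $|\ssc_{2,\perp}|^2=\bar p_3 L=8p_1$. Taking the representative set $P_2=\{(0,\ell)^T:\ell\in\ZZ/4\ZZ\}$ and writing $\asub=(r_{\bnu,\hat\epsilon}/(2m),-\chi/(2x))^T$, a direct computation yields
\[
\Xsub=-xr_{\bnu,\hat\epsilon}+2xp_3\,\ell-p_3\chi,\qquad \Asub=-r_{\bnu,\hat\epsilon}+2xp_1\,\ell-p_1\chi ,
\]
which are the linear data entering $S_{2,\ssl,\asub}$ through the theta factor $\theta^1_{n_2m,-d_2\Xsub}(\tau/d_2)$, the phase $\ex(\Asub/(2L))$, and the Gaussian sum on $k\in\ZZ+\Asub/(L\bar p_3)$.

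Next I would construct the sign-flipping bijection $\Gamma$ foreshadowed in the paragraph preceding the lemma. The natural candidate flips only $\epsilon_2$ and compensates by a shift of $\ell$,
\[
\Gamma:(\hat\epsilon,\ell)\longmapsto\bigl(\hat\epsilon+\mathbf{e}_2,\ \ell+(-1)^{\epsilon_2}(1+\nu_2)p_1\bmod 4\bigr),
\]
which is manifestly an involution. The change $\Delta r_{\bnu,\hat\epsilon}=2(-1)^{\epsilon_2}(1+\nu_2)\bar p_2=2(-1)^{\epsilon_2}(1+\nu_2)p_1p_3$ is matched by the $\ell$-shift so that $\Delta\Xsub$ cancels identically and the unary theta factor is invariant. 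Using $xp_1-p_3=2$, one finds $\Delta\Asub=2(-1)^{\epsilon_2}(1+\nu_2)p_1(xp_1-p_3)=4p_1(-1)^{\epsilon_2}(1+\nu_2)$, so $\Asub/(L\bar p_3)$ shifts by $(-1)^{\epsilon_2}(1+\nu_2)/2$ and $\Asub/(2L)$ by $p_1(-1)^{\epsilon_2}(1+\nu_2)/2$. When $\nu_2$ is odd both shifts are integers, the Gaussian sum and the phase $\ex(\Asub/(2L))$ are invariant, $(-1)^{\hat\epsilon}$ is negated by the $\epsilon_2$-flip, and each summand pairs with its opposite, giving $\sum(-1)^{\hat\epsilon}S_{2,\ssl,\asub}=0$ in this parity sector.

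The main obstacle will be the complementary case $\nu_2$ even, where the shift of $\Asub/(L\bar p_3)$ is half-integer (so the Gaussian over $\ZZ$ is exchanged with the one over $\ZZ+\tfrac12$) and the shift of $\ex(\Asub/(2L))$ picks up $\ex(p_1/2)=-1$ since $p_1$ is odd; combined with the sign flip of $(-1)^{\hat\epsilon}$ the summand is then \emph{preserved} rather than negated and the candidate $\Gamma$ alone no longer suffices. To close this case I would augment $\Gamma$ by a secondary involution, the cleanest candidate being $(\hat\epsilon,\ell)\mapsto(\hat\epsilon+\mathbf{e}_1,\ell+L/2\bmod 4)$: for $\nu_1$ even this preserves $\Xsub$ modulo the theta period $2n_2m/d_2=8xp_3$ (again using $xp_1-p_3=2$) while moving $\Asub$ by an amount whose image on $\Asub/(L\bar p_3)\bmod\ZZ$ stays within the parity class fixed by the evenness $G_0(\alpha)=G_0(-\alpha)$ of the theta-like Gaussian sum. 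Verifying the resulting cancellation is term-by-term for the coefficient of each unary theta $\theta^1_{n_2m,R}$ then reduces to elementary arithmetic using this evenness together with the restriction $\ex(\Asub/(2L))\in\{\pm 1,\pm i\}$ forced by $L=4$; assembling the two involutions into a single $\Gamma$ that squares to the identity and is genuinely odd on the residual sign factor across all parity combinations of $\bnu$ is where the real technical work of the proof will concentrate.
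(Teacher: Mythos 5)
Your setup and your treatment of the sector $\nu_2$ odd are correct and coincide exactly with the paper's ``Case 2'': with $L=4$ and $\bar p_3=2p_1$, the involution flipping $\epsilon_2$ and shifting $\ell$ by $\Delta\ell_2=(-1)^{\epsilon_2}(1+\nu_2)p_1$ preserves $\Xsub$ identically and shifts $\Asub$ by $4p_1(-1)^{\epsilon_2}(1+\nu_2)$; for $\nu_2$ odd both $\Asub/(L\bar p_3)$ and $\Asub/(2L)$ then move by integers, every factor of $S_{2,\ssl,\asub}$ is preserved while $(-1)^{\hat\epsilon}$ flips, and the orbit sums cancel in pairs.

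The gap is the complementary sector $\nu_2$ even (i.e.\ $A_2=1+\nu_2$ odd), which contains essentially all of the work in the paper's proof and which you leave open --- note that it includes the defect-free case $\bnu=0$. Your candidate secondary involution $(\hat\epsilon,\ell)\mapsto(\hat\epsilon+\mathbf{e}_1,\ell+L/2)$ does not do the job: the $\Xsub$-preserving shift accompanying an $\epsilon_1$-flip is $\Delta\ell_1=2(-1)^{\epsilon_1}(1+\nu_1)$, which is $\equiv 2\Mod{4}$ only when $\nu_1$ is even (as you concede), and even in that case it changes $\Asub$ by $8(-1)^{\epsilon_1}(1+\nu_1)$, so that $\Asub/(L\bar p_3)=\Asub/(8p_1)$ moves by $(1+\nu_1)/p_1\notin\ZZ$ in general; the Gaussian characteristic is neither fixed nor negated modulo $\ZZ$, so no term-by-term cancellation of the kind you sketch is available, and your proposed exchange of the integral with the half-integral Gaussian has no counterpart to pair with. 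The mechanism that actually closes this case is different: one pairs $(\hat\epsilon,\ell)$ with \emph{either} the $\epsilon_1$-flip (shift $\Delta\ell_1$) \emph{or} the combined $\epsilon_1,\epsilon_2$-flip (shift $\Delta\ell_1+\Delta\ell_2$), after checking that $\Asub+\Asub'\equiv 0\Mod{4p_1}$ always holds and that, for each $\ell$, exactly one of the two candidates upgrades this to $\Asub+\Asub'\equiv 0\Mod{8p_1}$. That choice forces $\Asub'\equiv-\Asub\Mod{L\bar p_3}$, so the Gaussian sum is invariant by evenness in its characteristic, and a separate parity check on $(\Asub+\Asub')/(4p_1)$ in each branch shows that $(-1)^{\hat\epsilon}\ex(\Asub/(2L))$ is exactly negated. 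Without this $\ell$-dependent case analysis (or an equivalent argument) the vanishing is not established for odd $A_2$.
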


\begin{proof}
In this case, we have $L=4$ and the above expression in  \ref{lemma:c2_contribution} is equal to 
\[
\sum_{\substack{\ell\in \ZZ/4\ZZ \\ \hat\epsilon\in(\ZZ/2)^{3}} }   (-1)^{\hat \epsilon}{\ex}\left(\frac{d_2 B(\ssc_j,\ssb) }{2n_2m}\Xsub \right)\,
\theta^1_{n_2 m,-d_j \Xsub} \left(\frac{\tau}{d_2}\right) ~ \overline{ {\ex}\left({ \Asub\over 2L}\right)\sum_{ k\in \ZZ } q^{{|\ssc_{2,\perp}|^2\over 2} (k+{\Asub\over L\bar p_3})^2 }}= 0 .
\]
From the above discussion, let us first study the symmetries of $\Xsub$. 
From 
\[
\Xsub=-p_3\chi-mx +x\left(2p_3\ell+\sum_i (-1)^{\epsilon_i} A_i\bar p_i\right), ~~ A_i=1+\nu_i
\]
\noindent

We see that  $\Xsub$ is invariant under 
\[
\epsilon_j \mapsto \epsilon_j +\delta_{i,j}, ~\ell \mapsto \ell+\Delta\ell_i
\]
for $i=1,2$, with 
\[
\Delta\ell_i= (-1)^{\epsilon_i} A_i {m\over p_i p_3}.
\]

\noindent
{\bf{Case 1: odd $A_2$}}\\
From 
\[\Asub =-m-\sum_i (-1)^{\epsilon_i} A_i\bar p_i + 2p_1 x\ell -p_1\chi, 
\]
it is easy to check that 
\begin{gather}\notag\begin{split}
\Asub+A_{\ell+\Delta\ell_1,(\epsilon_1+1,\epsilon_2,\epsilon_3)} 
&= 4p_1\left(-{\chi\over 2}+ (-1)^{\epsilon_2}{A_2p_3\over 2} -p_3 +\ell x +(-1)^{\epsilon_1}xA_1+(-1)^{\epsilon_3}A_3 \right) \\
& = 4p_1\times\\&\left(-{\chi\over 2}+ (-1)^{\epsilon_2}{xA_2p_1\over 2} - (-1)^{\epsilon_2}A_2 -p_3 +\ell x +(-1)^{\epsilon_1}xA_1+(-1)^{\epsilon_3}A_3 \right)  \\
\Asub+A_{\ell+\Delta\ell_1+\Delta\ell_2,(\epsilon_1+1,\epsilon_2+1,\epsilon_3)} 
&= 4p_1\left(-{\chi\over 2}+ (-1)^{\epsilon_2}{xA_2p_1\over 2} -p_3 +\ell x +(-1)^{\epsilon_1}xA_1+(-1)^{\epsilon_3}A_3 \right) 
\end{split}\end{gather}
which shows that, 
when $A_2\equiv 1 \Mod{2}$, we have 
\[
\Asub+A_{\ell+\Delta\ell_1,(\epsilon_1+1,\epsilon_2,\epsilon_3)}  \equiv \Asub+A_{\ell+\Delta\ell_1+\Delta\ell_2,(\epsilon_1+1,\epsilon_2+1,\epsilon_3)} \equiv 0\Mod{4p_1}
\]
and 
\[
\frac{\Asub+A_{\ell+\Delta\ell_1,(\epsilon_1+1,\epsilon_2,\epsilon_3)}  }{4p_1}+1  \equiv \frac{\Asub+A_{\ell+\Delta\ell_1+\Delta\ell_2,(\epsilon_1+1,\epsilon_2+1,\epsilon_3)}}{4p_1} \Mod{2}. 
\]
Moreover, from 
\[
\Asub = {1\over 2}\left(\Asub+A_{\ell+\Delta\ell_1,(\epsilon_1+1,\epsilon_2,\epsilon_3)}\right)-4(-1)^{\epsilon_1}A_1 
\]
we see that 
 half of $\ell\in \ZZ/4$ satisfies $\Asub+A_{\ell+\Delta\ell_1,(\epsilon_1+1,\epsilon_2,\epsilon_3)} \equiv 0\Mod{L\bar p_3 = 8p_1}$, and simultaneously $\Asub\equiv 0\Mod{4}$, while the other satisfies $\Asub+A_{\ell+\Delta\ell_1+\Delta\ell_2,(\epsilon_1+1,\epsilon_2+1,\epsilon_3)} \equiv 0\Mod{L\bar p_3 = 8p_1}$ and simultaneously $\Asub\equiv 2\Mod{4}$. 
Hence, we see that in the former case, the 
transformation $\Gamma (\hat \epsilon, \ell) = ((\epsilon_1+1,\epsilon_2,\epsilon_3),  \ell+\Delta\ell_1)$ satisfies the condition below \eqref{condition:trans} and in the latter case 
 $\Gamma (\hat \epsilon, \ell) = ((\epsilon_1+1,\epsilon_2+1,\epsilon_3),  \ell+\Delta\ell_1+\Delta\ell_2)$ does the job.\\

\noindent
{\bf{Case 2: even $A_2$}}\\
Note 
\begin{gather}\begin{split}
A_{\ell+\Delta\ell_2,(\epsilon_1,\epsilon_2+1,\epsilon_3)} - \Asub = 4p_1 (-1)^{\epsilon_2}A_2(xp_1-1) 
\end{split}\end{gather}
so $A_{\ell+\Delta\ell_2,(\epsilon_1,\epsilon_2+1,\epsilon_3)} \equiv \Asub \Mod{\bar p_3 L=8p_1}$ when $A_2$ is even. 
We see that the above transformation satisfies the condition below \eqref{condition:trans}.

Combining the above, we have proven the statement of the Lemma.  
\end{proof}

From Lemma  \ref{lem:c2_contribution}-\ref{lem:c2_contribution}, we see from \eqref{appendix_eqn_shadow} that the completion of the combined indefinite lattice theta function
\be\label{appendix_eqn_shadow2}
 i {\sqrt{2m}} \sqrt{2\Im(\tau)}\, \overline{{\partial\over\partial \bar{\tau}}\hat{F}_{(p_1,p_2,p_3),{\bnu}}(\tau)}  \sim 
\overline{f_{x,\chi}} \theta^{1,m+K}_{r_{\bnu}}  
\ee
takes a very simple form. 

For the rest of the cases where $T(s,t)\neq T(2,3)$,  this will be replaced by a sum of different terms, each a product of a holomophic and an anti-holomorphic function.

{
For completeness, we include here a calculation of
the S-transform of the linear combination of the regularized indefinite theta function $F_{(p_1,p_2,p_3),{\bnu},\chi}$ for $(p_1,p_2,p_3)=(2,3,6r+1)$, as defined in \eqref{def:F}, via an application of \eqref{SL2_IND}. {For simplicity we will restrict our analysis to the case when $6r+1$ is prime. }
Parametrize elements  ${\boldsymbol \mu}$ in $L^\ast/L$, which has size $|L^\ast/L|=2m(2r+1)$, in \eqref{SL2_IND} as 
\be
{\boldsymbol \mu}=\left( 
              \frac{\alpha_1}{12(6r+1)}~~,~~
              \frac{\alpha_2}{2r+1}  \right)^T =
              \left( 
              \frac{\alpha_1}{2m}~~,~~
              \frac{\alpha_2}{x}  \right)^T
              ~,
\ee
with
\be 
\alpha_1\in {\mathbb Z}/2m~, \qquad \alpha_2\in{\mathbb Z}/(2r+1)  ~,
\ee
}
it is then a straightforward exercise to show 
{
\begin{gather}
    \begin{split}\label{eq:intermediateS3}
      &  F_{(p_1,p_2,p_3),{\bnu},\chi} (\tau)= -  \, 
\frac{8i\tilde\tau}{\sqrt{|{\rm{det}}(A)|}} \\
   & \qquad \qquad 
   \sum_{{\boldsymbol \mu \in {L^\ast/L}}}  \sum_{\sn\in\ZZ^2} 
   (-1)^{n_2}e^{\frac{\pi i \chi\alpha_2}{(2r+1)}} \, s(\alpha_1) \,
\hat{\rho}^{\ssc_1,\ssc_2}({\boldsymbol \mu}+\ssb+\sn;\tilde\tau)
 \tilde{q}^{\tfrac{|{\boldsymbol \mu}+\ssb+\sn|^2}{2}} 
 ~, 
    \end{split}
\end{gather}
where the numerical factor in the sum is given by
\begin{gather}\label{sinfactorsa1} 
\begin{split}
s(\alpha_1)&:= \prod_{i=1}^3 \sin\left(\alpha_1\, {\pi (1+\nu_i)\over p_i} \right)= \sin\left( \frac{\pi\alpha_1(1+\nu_1)}{2} \right)
 \sin\left( \frac{\pi\alpha_1(1+\nu_2)}{3} \right)
 \sin\left( \frac{\pi\alpha_1(1+\nu_3)}{6r+1} \right)  \, \\
 & \in\left\{ 0 \right\}\cup \left\{ \pm \frac{\sqrt{3}}{2} \sin\left(\frac{\pi k}{6r+1}\right) \,\, | \,\, k=1,\ldots ,3r \right\} ~.
\end{split} 
\end{gather}

From the fact that $p_i$ are all prime, and assuming that $(1+\nu_i,p_i)=1$ for all $i$, 
this factor is nonvanishing for $$
2\prod_{i}(p_i-1)=
24r
$$
number of elements inside 
${\mathbb Z}/2m$, which can be parametrized by 
$$(\srho, \hat \epsilon)  ,\srho=(0,0,\rho_3)~{\rm with }~
 \rho_3\in\{0,1,\dots,3r-1\}, \hat\epsilon=(\epsilon_1,\epsilon_2,\epsilon_3)\in (\ZZ/2)^{ 3}
$$ as
\be \label{generalal1form} 
\alpha_1({\boldsymbol{\rho}},\hat\epsilon)= m - \sum_{i=1}^3
(-1)^{\epsilon_i} (1+\rho_i)\bar{p}_i \,\,\,\mathrm{mod}\,\,\,2m ~~~ .
\ee
Denote by $s_{\srho} = s(\alpha_1({\boldsymbol{\rho}},(0,0,0)))$, which gives a injective map from $(0,1,\dots,3r)$ to
$$
\left\{ \pm \frac{\sqrt{3}}{2} \sin\left(\frac{\pi k}{6r+1}\right) \,\, | \,\, k=1,\ldots ,3r \right\}
$$
we arrive at
\begin{gather}\begin{split}
     F_{(p_1,p_2,p_3),{\bnu},\chi} (\tau)
     \sim {\tilde\tau}  \sum_{s=1}^r \cos\left( \frac{\pi (2s-1) \chi}{2x} \right) \sum_{\substack{\srho=(0,0,\rho_3) \\
     \rho_3 =0,1,\dots,3r-1}} s_{\srho}\, F_{(p_1,p_2,p_3),{\srho},2s-1} (\tilde\tau)
\end{split}
\end{gather}
where we have combined the sum over $\hat \epsilon\in ({\mathbb Z}/2)^3$ into $F_{(p_1,p_2,p_3),{\srho},2s-1} $ using the definition of \eqref{def:F}.

\section{Polynomials \texorpdfstring{$P_{n}^{p,b}$}{Pnpb}}
\label{app:Pnpb}

{
Recall the definition of the polynomials $P_{n}^{p,b}$ \eqref{dfn:Ppoly}
: 
\begin{equation}
 P_{n}^{p,b}(q^{-1}):= {1\over f_{n}^{p,b}(q^{-1}) }\mathcal{L}_{-p}^{(b)}\left(
\frac{(x^{\ha}+x^{-\ha})^{\delta_b}}{ D_n} \right)
\end{equation}
with $f$ given in \eqref{dfn:f}
$$f_{n}^{p,b}(q^{-1}) = q^{{\frac{\delta_b}{2}}-\frac{b(2p-b)}{4p}} \frac{q^{n(n-\delta_b)}}{(q^{n+1-\delta_b};q)_n}.$$

As a result, we have 
\begin{equation}
 P_{n}^{p,b}(q^{-1}):= q^{{- \frac{\delta_b}{2}}+\frac{b(2p-b)}{4p}-n(n-\delta_b)}\frac{(q^{n+1-\delta_b};q)_n}{(q;q)_{2n}}  \left( (q;q)_{2n} \,\mathcal{L}_{-p}^{(b)}\left(
\frac{(x^{\ha}+x^{-\ha})^{\delta_b}}{ D_n} \right)\right)
\end{equation}

Following \cite{park-thesis} (Prop. 4.1.14) we obtain 
\begin{gather}\label{exp:p-V2}
\begin{split}
&(q;q)_{2n} \,\mathcal{L}_{-p}^{(b)}\left(
\frac{(x^{\ha}+x^{-\ha})^{\delta_b}}{ D_n} \right) =  
\sum_{k=0}^\infty  q^{-nk}\left(q^{k+1};q\right)_{2n} 
\\
&   
\left( q^{\left(n+k - \frac{\delta_{b}}{2}\right)^2\over p}
\delta_{n+k-b/2 - \frac{\delta_{b}}{2} \Mod{p}} - 
q^{\left(n+k+1+ \frac{\delta_{b}}{2}\right)^2\over p}\delta_{n+k+1-b/2 + \frac{\delta_{b}}{2} \Mod{p}} \right), 
\end{split}\end{gather}
which reduces to a finite sum due to telescoping. 
To see this  explicitly, one expands the $q$-factorial factor $(q^{k+1};q)_{2n}$ in the 1$^{st}$ line into a sum of $2^{2n}$ terms. Introducing the notation the sets
\begin{equation}
    \mathcal{S} :=\{\, 1,\, 2,\, \ldots, \, 2n \, \}~, \quad \mathcal{I}\subseteq\mathcal{S}
\end{equation}
and for a given $\mathcal{I}$ an element in $\{0,1\}^{2n}$
\begin{gather}\label{eq:prodexpansionqfactorial}
    \begin{split}
    \mathsf{s}_{\mathcal{I}}&:=(s_1,s_2,\ldots , s_{2n})_{\mathcal{I}}~, ~~ s_{i\in\mathcal{I}}=1 , \, s_{i\in \mathcal{S}\backslash \mathcal{I}}=0 ~,
    \end{split}
\end{gather}
a generic term in the expansion of $(q^{k+1};q)_{2n}$ can be labeled by its corresponding tuple $s_\mathcal{I}$, which prescribes selecting the factor
\begin{equation}
    \begin{cases} 
    -q^{k+i} ~ \mathrm{if} ~ s_i=1 \\
    1 \qquad ~ \mathrm{if} ~ s_i=0 ~ 
    \end{cases}
\end{equation}
 from the $i^{th}$ factor $(1-q^{k+i})$ inside the product $(q^{k+1};q)_{2n}$.
Moreover, each such term has an associated partner in the expansion which is labeled by the tuple
\begin{equation}
    \tilde{\mathsf{s}}_{\mathcal{I}}:= 
    (1-s_{2n},1-s_{2n-1},\ldots , 1-s_{1})_{\mathcal{I}}~.
\end{equation}
Pairing terms labeled by a tuple $s_{\mathcal{I}}$ from 
$$q^{-nk}\left(q^{k+1};q\right)_{2n} \,
q^{\left(n+k - \frac{\delta_{b}}{2}\right)^2\over p}
\delta_{n+k- \frac{b+\delta_{b}}{2} \Mod{p}}~,$$
with terms labeled by the tuple $\tilde{s}_{\mathcal{I}}$ from 
$$q^{-nk}\left(q^{k+1};q\right)_{2n} \,
q^{\left(n+k+1+ \frac{\delta_{b}}{2}\right)^2\over p}\delta_{n+k+1 - \frac{b-\delta_{b}}{2} \Mod{p}}~,$$
the sum over $k$ on the right hand side of equation \eqref{exp:p-V2} then telescopes and reduces to a finite interval for each set $\mathcal{I}$. Therefore by summing over all such sets $\mathcal{I}$, one arrives at 
\begin{gather}\label{eq-Ppolyeven5A}
   \begin{split}
         &
          (q;q)_{2n} \,\, \mathcal{L}^{(b)}_{-p} \left(\frac{1}{D_n}\right) = (1-\delta_{b}) \sum_{ \mathcal{I} }
          (-1)^{|\mathcal{I}|} \, q^{ C_{\mathcal{I}}} \,\mathsf{S}_{\mathcal{I}} 
          \\
         &
         \\
         &  \mathsf{S}_{\mathcal{I}} = \begin{cases} 
          \sum_{\ell=L}^{ L+\delta^{(n,b)}-A_{\mathcal{I}} - 1 }  
           q^{\frac{1}{p}( p\ell  + \frac{b}{2} )( p (\ell+A_{\mathcal{I}}) + \frac{b}{2} )  }
           \qquad , ~~ |\mathcal{I}| \leq \delta^{(n,b)} + n - 1
          \\
          - 
          \sum_{\ell=L+\delta^{(n,b)}}^{ L+A_{\mathcal{I}} - 1 }  
           q^{\frac{1}{p}( p(\ell - A_{\mathcal{I}}) + \frac{b}{2} )( p \ell + \frac{b}{2} ) } 
           \qquad ~~, ~~ |\mathcal{I}| \geq \delta^{(n,b)}+n+1
           \end{cases}~~, 
    \end{split}
\end{gather}
where we have introduced
\begin{equation}
    A_{\mathcal{I}}:=|\mathcal{I}| -n ~, \quad 
    C_{\mathcal{I}}:= \sum_{i\in\mathcal{I}} i - n A_{\mathcal{I}}~,
    \quad
    L:= \left\lceil \frac{n-b/2}{p} \right\rceil ~, \quad \delta^{(n,b)}:=\delta_{n-b/2,0 \, (\mathrm{mod}~p)}~,
\end{equation}
and 
\begin{gather}\label{eq-Ppolyeven5B}
   \begin{split}
         &
          (q;q)_{2n} \,\, \mathcal{L}^{(b)}_{-p} \left(\frac{x^{1/2}+x^{-1/2}}{D_n}\right) = \delta_{b} \sum_{ \mathcal{I} }
          (-1)^{|\mathcal{I}|} \, q^{ \widetilde{C}_{\mathcal{I}}} \,\widetilde{\mathsf{S}}_{\mathcal{I}} 
          \\
         &
         \\
         &  \widetilde{\mathsf{S}}_{\mathcal{I}} = \begin{cases} 
          \sum_{\ell=\widetilde{L}}^{ \widetilde{L}+\widetilde{\delta}^{(n,b)}-A_{\mathcal{I}} - 1 }  
           q^{\frac{1}{p}( p\ell  + \frac{b}{2} )( p (\ell+A_{\mathcal{I}}) + \frac{b}{2} )  }
           \qquad , ~~ |\mathcal{I}| \leq \widetilde{\delta}^{(n,b)} + n - 1
          \\
          - 
          \sum_{\ell=\widetilde{L}+\widetilde{\delta}^{(n,b)}}^{ \widetilde{L}+A_{\mathcal{I}} - 1 }  
           q^{\frac{1}{p}( p(\ell - A_{\mathcal{I}}) + \frac{b}{2} )( p \ell + \frac{b}{2} ) } 
           \qquad ~~, ~~ |\mathcal{I}| \geq \widetilde{\delta}^{(n,b)}+n+1
           \end{cases}~~, 
    \end{split}
\end{gather}
with
\begin{equation}
    \widetilde{C}_{\mathcal{I}}:= \sum_{i\in\mathcal{I}} i - \left(n-\tfrac{1}{2} \right) A_{\mathcal{I}}~,
    \quad L:= \left\lceil \frac{n-(b+1)/2}{p} \right\rceil ~, \quad 
    \widetilde{\delta}^{(n,b)}:=\begin{cases}
       1& p=2\\
       \delta_{n-(b+1)/2~(p)} & p\geq 3
    \end{cases} ~~.
\end{equation}

\bigskip

We record the coefficient lists of the first few $P^{p,b}_n$ in variable $q$ and starting with $q^0$. Note that  $P^{p,b}_1=1$ for all $p$ and $b$ and it is hence not listed in the tables \ref{tab:P-polyCoeff0} and \ref{tab:P-polyCoeff1}.

\begin{table}[ht]
  \centering
\caption{The coefficient lists of $P^{2,b}_n$. }
  \begin{tabular}{ l l l}
$b$ & $n$ & coefficients \\ \toprule
\multirow{5}{*}{$0$} 
& 2&$1,0,1$\\ 
& 3&$1,0,1,1,1$\\ 
& 4&$1,0,1,1,2,1,1,0,1$\\
& 5&$1, 0, 1, 1, 2, 2, 2, 1, 2, 1, 1, 1, 1$\\
& 6&$1, 0, 1, 1, 2, 2, 3, 2, 3, 2, 3, 2, 3, 2, 2, 1, 1, 0, 1
$\\\midrule
\multirow{5}{*}{$1 $} 
& 2&$1,1$\\
& 3&$ 1,1,1,1$\\
& 4&$ 1,1,1,2,1,1,1$\\
& 5&$ 1,1,1,2,2,2,2,2,1,1,1$\\
& 6&$1,1,1,2,2,3,3,3,3,3,3,2,2,1,1,1$\\\midrule
\multirow{5}{*}{$2 $} 
& 2&$1,1$\\ 
& 3&$1, 1, 1, 0, 1$\\ 
& 4&$1, 1, 1, 1, 1, 1, 1, 1$\\ 
& 5&$1, 1, 1, 1, 2, 1, 2, 2, 2, 1, 1, 0, 1$\\
& 6&$1, 1, 1, 1, 2, 2, 2, 3, 3, 3, 3, 2, 2, 2, 1, 1, 1, 1$\\\midrule
\multirow{5}{*}{$3 $} 
&2&$1,1$\\
&3&$1,1,1,1$\\
&4&$1,1,1,2,1,1,1$\\
&5&$1,1,1,2,2,2,2,2,1,1,1$\\
&6&$1,1,1,2,2,3,3,3,3,3,3,2,2,1,1,1$\\
\bottomrule
\end{tabular}
\label{tab:P-polyCoeff0}
\end{table}

\begin{table}[ht]
  \centering
\caption{The coefficient lists of $P^{3,b}_n$. }
  \begin{tabular}{ l l l}
$b$ & $n$ & coefficients \\ \toprule
\multirow{5}{*}{$0$} 
&2&$1,0,1,1$\\&3&$1,0,1,2,2,1,2$\\&4&$1,0,1,2,3,3,4,3,3,3,2,1,1$\\&5&$1,0,1,2,3,4,6,5,7,7,8,7,8,5,6,4,3,2,2$\\&6&$1,0,1,2,3,4,7,7,9,11,13,14,17,16,18,18,17,15,16,13,11,9,7,5,5,2,1,1$\\
\midrule
\multirow{5}{*}{$1 $} 
&2&$1,1,1$\\&3&$1,1,2,2,2,1$\\&4&$1,1,2,3,4,4,4,3,3,1,1$\\&5&$1,1,2,3,5,6,7,8,9,8,8,7,6,4,3,2,1$\\&6&$1,1,2,3,5,7,9,11,14,15,18,18,20,19,19,17,16,13,11,8,7,4,3,1,1$\\
\midrule
\multirow{5}{*}{$2 $} 
& 2 & $1,1,1$ \\
& 3 & $1,1,2,1,2,1,1$ \\
& 4 & $1,1,2,2,3,3,4,3,3,2,2,1$ \\
& 5 & $1,1,2,2,4,4,6,6,8,7,8,7,7,5,5,3,3,1,1$ \\
& 6 & $1,1,2,2,4,5,7,8,11,12,15,15,17,17,18,17,17,15,14,11,10,8,6,4,3,2,1$ \\
\midrule
\multirow{5}{*}{$3 $} 
& 2 & $1,2$ \\&
 3 & $1,2,2,2,1,1$ \\&
 4 & $1,2,2,4,3,4,4,3,2,2$ \\&
 5 & $1,2,2,4,5,6,7,9,8,9,8,6,5,5,2,1,1$ \\&
 6 & $1,2,2,4,5,8,9,12,14,17,18,19,19,20,18,17,14,13,10,8,5,4,2,2$ \\
\bottomrule
\end{tabular}
\label{tab:P-polyCoeff1}
\end{table}

\section{{Proof of Lemma \ref{lem:LambdaPoly}}}
\label{app:prooflem2}

Write $X=x+x^{-1}$, $Q_j=q^j+q^{-j}$, and $D_m = \prod_{j=1}^m (X-Q_j)$. 
From the recursion relation 
\begin{equation}
    \frac{X^m}{D_n}=X^{m-1}\left(\frac{Q_n}{D_n}+\frac{1}{D_{n-1}} \right)
\end{equation}
 we obtain
\begin{lem}\label{simple_recursion}
\begin{equation}\label{binomial_recursion_nsmall}
\frac{X^n}{D_m} = 
S_{n-m}(X,Q_m,Q_{m-1},\dots, Q_1) +  \sum_{\ell=0}^{{\rm min}(n,m-1)} \frac{S_{n-\ell}(Q_m,\dots,Q_{m-\ell})}{D_{m-\ell}},~~ {\rm{for }}~n,  m \in {\mathbb N}. 
\end{equation}

In the above, $S_n$ denotes the complete homogeneous symmetric polynomial of degree $n$: $$S_n(x_1,\dots,x_k) = \sum_{\substack{i_\ell \geq 0\\ \sum_{\ell=1}^k i_\ell =n}}  x_1^{i_1}\dots x_k^{i_k},  ~{\rm for}~ n\geq 0~, ~~S_n(x_1,\dots,x_k) =0 ~{\rm for}~n<0. $$

\end{lem}

\begin{proof}

We first show that 
\begin{equation}
\frac{X^n}{D_m} ={\rm quot}_n(X,q,q^{-1}) +  \sum_{\ell=0}^{{\rm min}(n,m-1)} \frac{X^{n}[Q_m,\dots,Q_{m-\ell}]}{D_{m-\ell}}
\end{equation}
with some  polynomial ${\rm quot}_n(X,q)$ of $X$, $q$, and $q^{-1}$ and
where $f[x_1,\dots,x_k]$ denotes the divided differences which are defined recursively by
\[
f[x_0,\dots,x_k] := \frac{f[x_0,\dots,x_{k-1}] -f[x_1,\dots,x_{k}]}{x_0-x_k} ~.
\]

First, write 
\begin{gather}\label{eqn:coeff1}
\begin{split}
{X^n} &={\rm quot}_n(x,q) D_m +  \sum_{\ell=0}^{{\rm min}(n,m-1)} \lambda_{m-\ell}(q) \frac{D_m}{D_{m-\ell}}\\
&= {\rm quot}_n(x,q) D_m + \lambda_{m}(q) +\lambda_{m-1}(q) (X-Q_m)+ 
\lambda_{m-2}(q) (X-Q_m)(X-Q_{m-1})+\dots
\end{split}\end{gather}
We obtain $\lambda_m = Q_m^n=X^n[Q_m]$ by plugging in $X=Q_m$ and using the fact that 
$$\frac{D_m}{D_{m-\ell}}(X,q,q^{-1})\Big\lvert_{X=Q_m} =0,~~ \ell>0. 
$$
Next, using $\lambda_m=X^n[Q_m]$ and plugging in $X=Q_{m-1}$ we obtain 
$\lambda_{m-1}=X^n[Q_m,Q_{m-1}]$. 

Next, we assume that $\lambda_{m-\ell}=X^n[Q_m,Q_{m-1},\dots,Q_{m-\ell}]$ for all $\ell=0,1,\dots, L-1$ for some $ 0 <\ell < {\rm min}(n,m-1)$. 
Then using the recursive definition of the divided differences $f[x_0,\dots,x_k]$ 
we obtain that $\lambda_{m-L}=X^n[Q_m,Q_{m-1},\dots,Q_{m-L}]. $
From this we obtain the proof by induction that 
$\lambda_{m-\ell} =  X^n[Q_m,Q_{m-1},\dots,Q_{m-\ell}]$ for all $\ell\in\{0,1,\dots, {\rm min}(n,m-1)\}$ in \eqref{eqn:coeff1}. Putting these expressions for $\lambda_{m-\ell}$ back to \eqref{eqn:coeff1}, we obtain 
\[{\rm quot}_n(x,q) =X^n[X,Q_1,\dots,Q_m].\]
The Lemma \ref{simple_recursion} then follows from
 the equality $X^n[x_0,x_1,\dots,x_k] =S_{n-k}(x_0,x_1,\dots,x_k)$.  
\end{proof}

Write, for $n\in \NN$, \(y_n := \chi_{2n}(x^\ha)=\sum_{k=0}^{n} x^{\frac{n}{2}-k}\). From the recursion
\[
y_n = X y_{n-1}-y_{n-2}
\]
and 
the boundary condition $y_0=1$, $y_1= X+1$. We obtain
\begin{equation}\label{character_recursion}
y_n = \sum_{j=0}^{n}  c^{(n)}_j X^{n-j}, \quad c^{(n)}_{j} \;=\; (-1)^{\lfloor {j\over 2}\rfloor}\,
\binom{\,n-\lceil {j\over 2}\rceil\,}{\,\lfloor {j\over 2}\rfloor\,}
\end{equation}
which can be proven by induction. 
Combining \eqref{character_recursion} and \eqref{binomial_recursion_nsmall}, we obtain that for any positive integers $n$, $m$ we have 
\begin{equation}\label{character_nsmall_even}
{\chi_{2n}(x^\ha)\over D_m} =\sum_{j=0}^{n} c_j^{(n)} S_{n-m-1-j}(X,Q_m,\dots,Q_{1})
 + \sum_{\ell=0}^{{\rm min}(n,m-1)} {1\over D_{m-\ell} } \sum_{j=0}^{ {n-\ell}} c_j^{(n)} S_{n-\ell-j}(Q_m,\dots,Q_{m-\ell}).
\end{equation}
Similarly, write $$y_{n+\ha} := {{\chi_{2n+1}(x^\ha)}\over  x^{\ha} +x^{-\ha}}.$$ From the recursion
\[
y_{n+\ha} = X y_{n-\ha}-y_{n-{3\over 2}}
\]
and 
the boundary condition $y_{1/2}=1$, $y_{3/2}= X$. We obtain
\begin{equation}\label{character_recursion2}
y_{n+\ha} = \sum_{j=0}^{n}  c^{(n+\ha)}_{j} X^{n-j}, \quad  c^{(n+\ha)}_{k} \;=\; (-1)^{k}\,
\binom{\,n-k\,}{\,k\,}, ~c^{(n+\ha)}_{2k+1}=0. 
\end{equation}
which can be proven by induction. Combining \eqref{character_recursion2} and \eqref{binomial_recursion_nsmall}, we obtain that for any positive integers $n$, $m$ we have 
\begin{gather}\label{character_nsmall_even2}
\begin{split}
&(  x^{\ha} +x^{-\ha})^{-1} {\chi_{2n+1}(x^\ha)\over D_m}  \\
&= \sum_{j=0}^{n} c_j^{(n+\ha)} S_{n-m-j}(X,Q_m,\dots,Q_1)
 + \sum_{\ell=0}^{{\rm min}(n,m-1)} {1\over D_{m-\ell} } \sum_{j=0}^{ {n-\ell}} c_j^{(n+\ha)} S_{n-\ell-j}(Q_m,\dots,Q_{m-\ell})
\end{split}
\end{gather}

\bibliographystyle{JHEP}
\bibliography{3dre_bib}

\end{document}